\newtheorem{theorem}{Theorem}[section]
\newtheorem{claim}[theorem]{Claim}
\newtheorem{corollary}[theorem]{Corollary}
\newtheorem{lemma}[theorem]{Lemma}
\newtheorem{remark}[theorem]{Remark}
\newtheorem{proposition}[theorem]{Proposition}
\DeclarePairedDelimiter\floor{\lfloor}{\rfloor}
\newcommand{\R}{\mathbb{R}}
\newcommand{\Omegat}{\widetilde{\Omega}}
\DeclareMathOperator*{\E}{\mathbb{E}}
\DeclareMathOperator*{\argmax}{arg\,max}
\DeclareMathOperator*{\poly}{poly}
\DeclareMathOperator*{\supp}{supp}
\newcommand{\sign}{\text{sign}}
\newcommand{\eps}{\varepsilon}
\newcommand{\norm}[1]{\left\lVert #1\right\rVert}
\newcommand{\nnorm}[1]{\lVert #1\rVert}
\newcommand{\abs}[1]{|#1| }
\newcommand{\inner}[1]{\langle #1\rangle }
\newcommand{\wh}{\widehat}
\newcommand{\wt}{\widetilde}
\newcommand{\Ot}{\wt{O}}
\newcommand{\D}{\mathcal{D}}
\newcommand{\Prb}[2][]{ \ifthenelse{\isempty{#1}}
  {\Pr\left[#2\right]}
  {\Pr_{#1}\left[#2\right]} }
\newcommand{\Ex}[2][]{ \ifthenelse{\isempty{#1}}
  {\E\left[#2\right]}
  {\E_{#1}\left[#2\right]} }
\newcommand{\Vari}[2][]{ \ifthenelse{\isempty{#1}}
  {\mathbf{Var}\left[#2\right]}
  {\mathop{\mathbf{Var}}_{#1}\left[#2\right]} }
\newcommand{\nPrb}[2][]{ \ifthenelse{\isempty{#1}}
  {\Pr[#2]}
  {\Pr_{#1}[#2]} }
\newcommand{\nEx}[2][]{ \ifthenelse{\isempty{#1}}
  {\E[#2]}
  {\E_{#1}[#2\right]} }
\title{Spectral Guarantees for Adversarial Streaming PCA}
\author{Eric Price\\UT Austin \and Zhiyang Xun\\UT Austin}
\begin{document}
\begin{titlepage}
\maketitle
\begin{abstract}
  In streaming PCA, we see a stream of vectors
  $x_1, \dotsc, x_n \in \R^d$ and want to estimate the top eigenvector
  of their covariance matrix.  This is easier if the spectral ratio
  $R = \lambda_1 / \lambda_2$ is large.  We ask: how large does $R$
  need to be to solve streaming PCA in $\Ot(d)$ space?  Existing
  algorithms require $R = \Omegat(d)$.  We show:

  \begin{itemize}
  \item For all mergeable summaries, $R = \Omegat(\sqrt{d})$ is necessary.
  \item In the insertion-only model, a variant of Oja's algorithm
    gets $o(1)$ error for $R = O(\log n \log d)$.
      \item No algorithm {with $o(d^2)$ space} gets $o(1)$ error for $R = O(1)$.
  \end{itemize}

  Our analysis is the first application of Oja's algorithm to
  adversarial streams.  It is also the first algorithm for adversarial
  streaming PCA that is designed for a \emph{spectral}, rather than
  \emph{Frobenius}, bound on the tail; and the bound it needs is
  exponentially better than is possible by adapting a Frobenius
  guarantee.

\end{abstract}
  \thispagestyle{empty}
\end{titlepage}

\section{Introduction}

Principal Component Analysis (PCA) is a fundamental primitive for
handling high-dimensional data by finding the highest-variance
    directions.  At its most simple, given a data set
$X \in \R^{n \times d}$ of $n$ data points in $d$ dimensions, we want
to find the top unit eigenvector $v^*$ of the covariance matrix
$\Sigma = \frac{1}{n} X^T X$.

One common way to approximate $v^*$ is the power method: start with a
random vector $u_0$, then repeatedly multiply by $\Sigma$ and
renormalize.  This converges to $v^*$ at a rate that depends on the
ratio of the top two eigenvalues of $\Sigma$, denoted    $R := \lambda_1/\lambda_2$.  In
particular, after $O(\log_R \frac{d}{\eps})$ iterations we have
$\norm{Pu_k}^2 = 1 - \inner{u_k, v^*}^2 = \sin^2 (u_k, v^*) \leq \eps$ with high
probability, where $P = I - v^* (v^*)^T$ projects away from $v^*$.

But what if the data points $x_1, x_2, \dotsc, x_n \in \R^d$ arrive in
a streaming fashion?  Directly applying the power method requires
either $nd$ space to store $X$, or $d^2$ space to store $\Sigma$.
What can be done in smaller space?  The question of streaming PCA has
been extensively studied, in two main settings: \emph{adversarial} and
\emph{stochastic} streams.

In the adversarial streaming setting, we want to solve PCA for an
arbitrary set of data points in arbitrary order.  Many of these
algorithms store linear sketches of the data, such as $AX$ and $XB$
for Gaussian matrices
$A,B$~\cite{clarkson2009numerical,boutsidis2016optimal,woodruff2014sketching,upadhyay2016fast,tropp2017practical}.
These results give a Frobenius guarantee for rank-$k$ approximation of
$X$.  Specialized to $k=1$, the result direction $\wh{u}$ satisfies
\[
  \norm{X(I - \wh{u}\wh{u}^T)}_F^2 \leq (1 + \eps) \norm{XP}_F^2
\]
which is equivalent to
\[
  \wh{u}^T \Sigma \wh{u} \geq \lambda_1 - \eps \sum_{i > 1} \lambda_i.
\]
The best result here is
\textsc{FrequentDirections}~\cite{liberty2013simple,ghashami2016frequent},
which is a deterministic insertion-only algorithm rather than a linear
sketch.  It uses $O(d/\eps)$ space to get the guarantee, which is
optimal~\cite{clarkson2009numerical}.  Unfortunately, this Frobenius
guarantee can be quite weak: if the eigenvalues do not decay and we
only have a bound on $R = \lambda_1 / \lambda_2$, to get
$\norm{P\wh{u}}^2 \leq 0.1$ we need $\eps < \frac{R}{d}$, which means
$\Theta(d^2/R)$ space.  The well-known spiked covariance mode \cite{johnstone2001distribution}, where
the $x_i$ are iid Gaussian with covariance that has eigenvalues
$\lambda_2 = \lambda_3 = \dotsb = \lambda_d$, is one example where
this quadratic space bound appears.

In the stochastic streaming setting, the $x_i$ are drawn iid from a
somewhat nice distribution.  The goal is to converge to the principal
component of the true distribution using little space and few samples.
Algorithms for the stochastic setting are typically iterative, using
$O(d)$ space and converging to the true solution with a sample
complexity depending on how ``nice'' the distribution is.  Examples
include Oja's
algorithm~\cite{oja1982simplified,balsubramani2013fast,jain2016streaming,allen2017first,huang2021matrix,huang2021streaming,lunde2021bootstrapping}
and the block power
method~\cite{arora2012stochastic,mitliagkas2013memory,hardt2014noisy,balcan2016improved}.
Oja's algorithm starts with a random $v_0$, then repeatedly sets
\[
  v_i = v_{i-1} + \eta_i x_i x_i^T v_{i-1}
\]
for some small learning rate $\eta_i$.  These analyses depend
heavily on the data points being iid\footnote{Or nearly so; for
  example,~\cite{jain2016streaming} requires that the $x_i$ are independent with
  identical covariance matrices.}.  In return, they can get a stronger
\emph{spectral} guarantee than the sketching algorithms.  The bounds
are not directly comparable to the sketching algorithms (not only does
the sample complexity depend on the data distribution, but the
convergence is to the principal component of the true distribution
rather than the empirical $\Sigma$), but in the spiked covariance
setting they just need $n \geq \Ot((1 + \frac{1}{R-1})^2d)$ rather than
$O(d^2/R)$.  That is, they use near-linear samples down to
$R = 1 + \eps$.

So the situation is: algorithms that handle arbitrary data need
$O(d^2/R)$ space for a spectral guarantee.  Iterative methods have a
good spectral guarantee---linear space and often near-linear samples
for constant $R$---but only handle iid data.  Is this separation
necessary, or can we get a good spectral guarantee in the
arbitrary-data setting?  In this paper we ask:

\begin{quote}
  \emph{Is a polynomial spectral gap necessary to guarantee a near-linear space algorithm?}
\end{quote}

\subsection{Our results}
Our main result is that linear space \emph{is} possible for
polylogarithmic spectral gaps.  In fact, Oja's method essentially
achieves this:

\begin{restatable}[Performance of Oja's method in adversarial streams]{theorem}{thmOja}
  \label{thm:Oja1}
  For any sufficiently large universal constant $C$, suppose $\eta$ is
  such that $\eta n\lambda_1 > C\log d$ and
  $\eta n\lambda_2 < \frac{1}{C\log n}$.  If
  $\eta \norm{x_i}^2 \leq 1$ for every $i$, then Oja's algorithm with
  learning rate $\eta$ returns $\wh{v}$ satisfying
  $\norm{P\wh{v}} \leq \sqrt{\eta n\lambda_2} + d^{-9}$ with
  $1 - d^{-\Omega(C)}$ probability.

  Moreover, Oja's method can be modified (Algorithm~\ref{alg:oja}) so
  that in addition, regardless of $\lambda_1$ and $\lambda_2$, if
  $\eta \norm{x_i}^2 \leq 1$ for all $i$ then either
  $\norm{P\wh{v}} \leq \sqrt{\eta n\lambda_2} + d^{-9}$ or
  $\wh{v} = \perp$.
\end{restatable}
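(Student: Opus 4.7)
The plan is to unroll Oja's iterate as $v_n = Bv_0$ with $B = M_n \cdots M_1$, $M_t := I + \eta x_tx_t^T$, and $v_0 \sim \mathrm{Unif}(S^{d-1})$. Decomposing $v_t = \alpha_t v^* + w_t$ with $\alpha_t := \inner{v^*, v_t}$ and $w_t := Pv_t$, we have $\norm{P\hat v}^2 = \norm{w_n}^2/(\alpha_n^2+\norm{w_n}^2) \leq \norm{w_n}^2/\alpha_n^2$, so it suffices to prove $\norm{w_n}^2 \leq (\eta n\lambda_2 + d^{-18})\,\alpha_n^2$ with probability $1 - d^{-\Omega(C)}$. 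Expanding $v_t = v_{t-1} + \eta\inner{x_t, v_{t-1}}x_t$ along $v^*$ and $v^{*\perp}$ produces the coupled recursions
\begin{align*}
\alpha_t &= (1+\eta\inner{x_t, v^*}^2)\,\alpha_{t-1} + \eta\inner{x_t, v^*}\inner{x_t, w_{t-1}},\\
w_t &= \tilde M_t\, w_{t-1} + \eta\,\alpha_{t-1}\inner{x_t, v^*}\, Px_t, \qquad \tilde M_t := I + \eta Px_tx_t^T P.
\end{align*}

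For the \emph{signal}, $\sum_t\inner{x_t, v^*}^2 = n\lambda_1$ combined with $\eta\inner{x_t, v^*}^2\leq 1$ and $\log(1+y)\geq y/2$ on $[0,1]$ gives $\prod_t(1+\eta\inner{x_t,v^*}^2) \geq e^{\eta n\lambda_1/2}$. Sphere anti-concentration provides $|\alpha_0|\geq d^{-10}$ w.p. $\geq 1-d^{-10}$. Assuming the cross-term in the $\alpha$-recursion never cancels more than half the multiplicative gain --- verified by a joint induction with the noise bound below --- we obtain $|\alpha_n| \geq d^{-10}\,e^{\eta n\lambda_1/4} \geq d^{\,C/5}$, using $\eta n\lambda_1 \geq C\log d$.

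For the \emph{noise}, unrolling the $w$-recursion gives
\[
w_n = \tilde B_{1:n}\,Pv_0 + \eta\sum_{t=1}^n \alpha_{t-1}\inner{x_t,v^*}\,\tilde B_{t+1:n}\,Px_t,\qquad \tilde B_{a:b} := \tilde M_b\cdots \tilde M_a.
\]
Since the signal step makes $\alpha_{t-1}$ essentially proportional to $\alpha_0 = \inner{v^*,v_0}$, we can write $w_n = Lv_0$ for a matrix $L$ determined by the stream (modulo lower-order terms), and Hanson--Wright on the sphere yields $\norm{Lv_0}^2 \leq \wt O(\norm{L}_F^2/d)$ w.h.p. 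Cauchy--Schwarz on the sum together with $\alpha_{t-1}\leq \alpha_n$ reduces the bound on $\norm{L}_F^2$ to controlling $\sum_t\norm{\tilde B_{t+1:n}Px_t}^2$ and $\norm{\tilde B_{1:n}}_F^2$, and the key estimate is $\norm{\tilde B_{a:b}}_F^2 \leq d\cdot d^{O(1/C)}$ for every partial product. I would prove this via a matrix-Chernoff/Golden--Thompson argument exploiting $\eta n\lambda_2 \leq 1/(C\log n)$ together with the operator-norm bound $P\Sigma P \preceq \lambda_2 P$. Combined with $\eta\lambda_1 \leq 1$ (from $\eta\nnorm{x_i}^2\leq 1$ and $\lambda_1 \leq \max_i \nnorm{x_i}^2$), this gives $\norm{w_n}^2 \leq \eta n\lambda_2\,\alpha_n^2$ up to a $d^{O(1/C)}$ factor absorbed by taking $C$ large.

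For the \emph{modified algorithm}, the algorithm tests $\norm{v_n}/\norm{v_0}\geq d^{\,C/6}$ at the end and outputs $\perp$ if it fails: under the theorem's hypotheses the signal step ensures the test succeeds, while whenever the test succeeds the bulk of $\norm{v_n}$ must come from alignment with a direction close to $v^*$ (because perpendicular-direction growth is only $d^{O(1/C)}$), so $\alpha_n^2$ is large enough for the noise argument to apply unchanged. \textbf{The main obstacle} is the Frobenius bound on $\tilde B_{a:b}$: a naive operator-norm argument gives amplification $\exp(\eta n\Tr(P\Sigma P)) = e^{\Omega(d/\log n)}$, which is far too large. The $\log n$ slack in $\eta n\lambda_2 < 1/(C\log n)$ is exactly what is needed to upgrade scalar per-direction bounds to matrix Frobenius bounds via concentration, and the novelty over standard stochastic Oja analyses is making this argument go through using only the \emph{adversarial} aggregate constraint $\sum_t x_tx_t^T = n\Sigma$, rather than independence of the $x_t$.
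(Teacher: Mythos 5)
Your plan hinges on the ``key estimate'' $\norm{\tilde B_{a:b}}_F^2 \leq d\cdot d^{O(1/C)}$ for every ordered partial product $\tilde B_{a:b} = \prod_t (I + \eta P x_t x_t^T P)$, and this is where the argument breaks: the estimate is not only unproven by the tools you cite, it is false in the parameter regime the theorem allows. The only constraint available is the per-direction budget $\eta \sum_t \inner{x_t, w}^2 \leq \sigma_2 := \eta n \lambda_2$ for each fixed unit $w \perp v^*$ (plus $\eta\norm{x_t}^2\le 1$), and non-commuting rank-one updates can exploit it: take a sequence of $x_t$'s each tilted by a small angle (with $\cos\phi \approx \sigma_2$) off the current top direction of the product, so that the top singular direction slowly rotates through $\Theta(d)$ orthogonal directions; each fixed direction only ever accumulates $O(\sigma_2)$ of budget, yet the total amplification is $e^{\Theta(\sigma_2 d)}$. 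With $\sigma_2 = \Theta(1/(C\log n))$ and $d \gg \log n \log d$ (allowed, since $n,d$ are unrelated), this is superpolynomial in $d$ and swamps $\alpha_n^2 \approx \alpha_0^2 e^{\Theta(\sigma_1)}$ when $\sigma_1 = \Theta(C\log d)$. Golden--Thompson bounds $\Tr e^{A+B}$ by $\Tr(e^A e^B)$, i.e., the wrong direction for controlling an ordered product by the exponential of the sum, and matrix Chernoff needs independence or martingale structure that simply does not exist here (the stream is adversarial; the only randomness is $v_0$, so Hanson--Wright can reduce to $\norm{L}_F$ but cannot shrink it). The same false estimate underlies your treatment of the modified algorithm (``perpendicular-direction growth is only $d^{O(1/C)}$''), and your handling of the cross-term in the $\alpha$-recursion (``verified by a joint induction with the noise bound below'') assumes exactly the control that is in question, since that cross-term is the mechanism by which noise-subspace amplification could in principle contaminate the signal coordinate.

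The paper avoids this obstacle by never bounding the noise-subspace product in absolute terms. Instead it proves a \emph{relative}, deterministic statement: from the growth identity and Cauchy--Schwarz one gets $\inner{v_b - v_a, w}^2 \leq \sigma_2(\norm{v_b}^2 - \norm{v_a}^2)$ for any fixed $w \perp v^*$, which yields Lemma~\ref{lem:growth-gives-correctness} ($\norm{P\wh v_i} \leq \sqrt{\sigma_2} + \norm{Pv_0}/\norm{v_i}$): \emph{any} growth of $\norm{v_n}$, from whatever subspace, implies correctness of the normalized output, which is also why \textsc{OjaCheckingGrowth} can certify its answer just by checking $s_n$. The remaining task is then only to show growth when started at $v^*$, and the rotation-through-many-directions issue is handled by charging movement of $P\wh v_i$ to growth of $\norm{v_n}$ (Lemma~\ref{lem:normalized-movement}) and converting subsequence movement bounds into a bound on $\eta\sum_i \inner{x_i, P\wh v_{i-1}}^2$ via the dyadic combinatorial Lemma~\ref{lem:matsample}; the $\log^2 n$ loss there is precisely the source of the hypothesis $\eta n\lambda_2 < 1/(C\log n)$, which your sketch attributes instead to a concentration step that has no randomness to exploit. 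To repair your write-up you would need to replace the absolute Frobenius bound on $\tilde B_{a:b}$ with some analogue of this trajectory-relative accounting.
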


If $R > O(\log n \log d)$, there exists an $\eta$ that satisfies the
eigenvalue condition.  However, Theorem~\ref{thm:Oja1} requires
knowing $\eta$ and that no single $\norm{x_i}$ is too large.  It's
fairly easy to extend the algorithm to remove both restrictions, as
well as describe the performance with respect to finite precision.
Algorithm~\ref{alg:unknownrate} simply runs Oja's method for different
learning rates and picks the smallest one that works; unless any
single $x_i$ has too large $\norm{x_i}^2$ violating
Theorem~\ref{thm:Oja1}, in which case it outputs that $x_i$.
For $X \in \R^{n \times d}$ with $b$-bits entries, where each $X_{i, j}$ is either 0 or falls within $2^{-b} \leq \abs{X_{i,j}} \leq 2^{b}$, it suffices to test roughly $O(b)$ different learning rates in parallel. We say an algorithm $\eps$-approximates PCA if it returns $u$ with
$\norm{Pu}^2 \leq \eps$, and we have the following theorem.

\begin{restatable}[Full algorithm]{theorem}{thmupper}\label{thm:upper}
  For $X \in \R^{n \times d}$ have $b$-bit entries 
  for $b > \log(dn)$. 
  Whenever the spectral gap $R = \lambda_1 / \lambda_2 > O(\log n \log d)$,
  Algorithm~\ref{alg:unknownrate} uses $O(b^2 d)$ bits of space and
  $O(\frac{\log d}{R} + d^{-9})$-approximates PCA 
  with high
  probability.
\end{restatable}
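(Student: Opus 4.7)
My plan is to reduce Theorem \ref{thm:upper} to Theorem \ref{thm:Oja1} by handling the two missing ingredients of the preceding theorem: the unknown learning rate $\eta$ and the per-vector norm bound $\eta \norm{x_i}^2 \leq 1$. Since $X$ has $b$-bit entries with $\abs{X_{i,j}}$ either $0$ or in $[2^{-b}, 2^{b}]$, every nonzero eigenvalue of $\Sigma$ and every nonzero $\norm{x_i}^2$ lies in $[2^{-O(b)}, 2^{O(b)}]$, so the useful range of learning rates is $\eta \in [2^{-O(b)}, 2^{O(b)}]$, and $O(b)$ powers of $2$ cover it within a factor of $2$. Algorithm \ref{alg:unknownrate} runs the modified Oja's method (Algorithm \ref{alg:oja}) in parallel on these $O(b)$ candidates, falling back on outputting a direct $x_i$ if it ever violates the norm bound at every candidate (which cannot actually happen under the bit-width assumption).

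The space bound is immediate: each Oja instance maintains one $d$-dimensional running vector to $O(b)$ bits of precision, which a standard forward-error analysis shows preserves the iterates up to a $(1 + 1/\poly(nd))$ factor. This gives $O(bd)$ bits per instance, and $O(b)$ parallel instances total $O(b^2 d)$ bits.

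For correctness, set the target $\eta^{*} = C \log d / (n \lambda_1)$ with the constant $C$ of Theorem \ref{thm:Oja1}. Since $R > C C' \log n \log d$ for some universal $C'$, we have $\eta^{*} n \lambda_1 = C \log d$ and $\eta^{*} n \lambda_2 = C \log d / R < 1/(C \log n)$, so $\eta^{*}$ satisfies the hypothesis of Theorem \ref{thm:Oja1}. The grid contains some $\eta_0 \in [\eta^{*}/2, \eta^{*}]$ with the same two properties, so with probability $1 - d^{-\Omega(C)}$ the Oja instance at $\eta_0$ produces a non-$\perp$ output. Algorithm \ref{alg:unknownrate} returns the output at the smallest $\eta$ among non-$\perp$ outputs; call it $\hat\eta \leq \eta_0$. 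The ``good or $\perp$'' part of Theorem \ref{thm:Oja1}, applied at $\hat\eta$ (with the union bound over the $O(b)$ instances absorbed into the constant), then yields $\norm{P \wh v} \leq \sqrt{\hat\eta n \lambda_2} + d^{-9} \leq \sqrt{\eta^{*} n \lambda_2} + d^{-9}$, i.e.\ $\norm{P\wh v}^2 = O(\log d / R + d^{-9})$ as required.

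The main subtlety is the interplay between grid selection and the Oja guarantee: the selected output must both exist (ensured by the sweet-spot instance at $\eta_0$) and satisfy the spectral bound at its own, possibly smaller, learning rate rather than at $\eta^{*}$. This is exactly what the ``good or $\perp$'' half of Theorem \ref{thm:Oja1} buys us, which is why that modification is essential to the reduction; the rest (precision analysis, enumerating $O(b)$ scales, handling outlier $x_i$) is bookkeeping enabled by the bit-width hypothesis $b > \log(dn)$.
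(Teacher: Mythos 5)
Your overall reduction---grid of $O(b)$ learning rates, a ``sweet spot'' rate $\eta_0 \approx C\log d/(n\lambda_1)$ whose instance returns non-$\perp$, and then invoking the ``good or $\perp$'' half of Theorem~\ref{thm:Oja1} at the selected (smaller) rate---is the same skeleton as the paper's proof, and the space accounting is fine. But there is a genuine gap: you dismiss the large-norm case by asserting that a violation of $\eta\norm{x_i}^2 \leq 1$ ``cannot actually happen under the bit-width assumption.'' That is false, and both halves of Theorem~\ref{thm:Oja1} (the non-$\perp$ guarantee at $\eta_0$ \emph{and} the ``good or $\perp$'' guarantee at $\hat\eta$) are conditioned on $\eta\norm{x_i}^2 \leq 1$ for all $i$. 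The bit-width hypothesis only confines $\norm{x_i}^2$ to $[2^{-O(b)}, d\cdot 2^{O(b)}]$; it does not relate it to $\eta_0 = \Theta(\log d/(n\lambda_1))$. Indeed, since $n\lambda_1 \geq \norm{\overline{x}}^2$ always, the natural bound is only $\eta_0\norm{\overline{x}}^2 \leq C\log d$, and when a single row dominates the spectrum (e.g.\ one huge vector plus many tiny ones, so $n\lambda_1 \approx \norm{\overline{x}}^2$) one gets $\eta_0\norm{\overline{x}}^2 \approx C\log d \gg 1$. In that regime your argument collapses: the instance at $\eta_0$ has no non-$\perp$ guarantee, so the selected $i^*$ may not exist or may correspond to a much larger rate whose error bound $\sqrt{\eta_{i^*}n\lambda_2}$ is useless, and the algorithm may instead output $\overline{x}/\norm{\overline{x}}$, whose accuracy you never establish.

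The paper's proof spends most of its effort on exactly this case. It shows that whenever $\eta_{\wh{i}}\norm{\overline{x}}^2 \geq 1$, the max-norm vector itself is a good answer: writing $\overline{x} = a v^* + b w$ with $w \perp v^*$, the constraint~\eqref{eq:otherdir} gives $b^2 \leq n\lambda_2 \leq \frac{2C\log d}{R\,\eta_{\wh{i}}}$, hence $\norm{P\wh{x}}^2 \lesssim \frac{\log d}{R\,\eta_{\wh{i}}\norm{\overline{x}}^2} \lesssim \frac{\log d}{R}$. A case analysis on whether $\eta_{\wh{i}}\norm{\overline{x}}^2 < 1$ then shows that every branch of Algorithm~\ref{alg:unknownrate} (outputting $v^{(i^*)}$ when $\eta_{i^*}\norm{\overline{x}}^2 < 1$, or $\overline{x}/\norm{\overline{x}}$ otherwise) returns an $O(\frac{\log d}{R} + d^{-9})$-accurate vector. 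To repair your proposal you would need to add this accuracy claim for $\overline{x}$ and the accompanying case analysis; the rest of your argument (the sweet-spot instance, monotonicity $\hat\eta \leq \eta_0$ so the norm condition transfers downward, and the precision/space bookkeeping) matches the paper.
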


\begin{algorithm}
  \caption{Oja's Algorithm, checking the growth of $\norm{v_n}$ to
    identify too-small learning rates.}\label{alg:oja}
\begin{algorithmic}
  \Procedure{OjaCheckingGrowth}{$X$, $\eta$}
    \State Choose $\wh{v}_0 \in S^{d-1}$ uniformly.
    \Comment{All numbers stored to $O(\log (nd))$ bits of precision}
    \State Set $s_0 = 0$.
    \For{$i = 1, \dotsc, n$}
      \State $v_i' \gets (1 + \eta x_i x_i^T)\wh{v}_{i-1}$.
      \State $\wh{v}_i \gets \frac{v'_i}{\norm{v'_i}}$.
      \State $s_i \gets s_{i-1} + \log \norm{v'_i}$.
    \EndFor
    \State \textbf{if} $s_n \leq 10\log d$, \Return $\perp$.
    \Comment{Returns $\perp$ rather than a wrong answer if $\eta$ is too small.}
    \State \textbf{else} \Return $\wh{v}_n$
  \EndProcedure
\end{algorithmic}
\end{algorithm}

\begin{algorithm}
\caption{Algorithm handling unknown learning rate and large-norm entries}\label{alg:unknownrate}
\begin{algorithmic}
  \Procedure{AdversarialPCA}{$X$, $b$}  \Comment{$X \in \R^{n \times d}$ has $X_{i,j} = 0$ or $2^{-b} \leq \abs{X_{i,j}} \leq 2^{b}$}
    \State Define $\eta_i = 2^i$ for integer $i$, $\abs{i} \leq 4 b + \log(nd^2) + O(1)$.
    \State In parallel run $\textsc{OjaCheckingGrowth}(X, \eta_i)$ for all $i$, getting $v^{(i)}$.
    \State In parallel record $\overline{x}$, the single $x_i$ of maximum $\norm{x_i}$.
    \State Let $i^*$ be the smallest $i$ with $v^{(i)} \neq \perp$.
    \State \textbf{if} $\eta_{i^*} \norm{\overline{x}}_2 \geq 1$, \Return $\frac{\overline{x}}{\norm{\overline{x}}}$.
    \State \textbf{else} \Return $v^{(i^*)}$.
  \EndProcedure
\end{algorithmic}
\end{algorithm}

\paragraph{Lower bound for mergeable summaries.}
Existing algorithms for adversarial PCA, including linear sketching or
\textsc{FrequentDirections}, fall under the category of mergeable
summaries~\cite{agarwal2013mergeable}. These algorithms enable
processing of disjoint data inputs on separate machines, producing
summaries that can be combined to address the problem using the full
dataset. By contrast, our algorithm is not mergeable and requires the
data to appear in one long sequence.

Considering the benefits of mergeable summaries, a natural goal would
be to get a good spectral guarantee with a mergeable summary.  As
discussed above, existing algorithms require $\Omega(d^2/R)$ space, so
$R = \wt{\Theta}(d)$ is needed for them to achieve near-linear space. 
Is it possible to get near-linear space and logarithmic $R$, like
Theorem~\ref{thm:upper} achieves in the insertion-only model?

Existing lower bounds~\cite{LiWoodruff} imply that $\Omega(d^2 / R^2)$ space is necessary for linear sketching (see \cref{app:lw} for discussion). 
We show that the same bound applies to \emph{all} mergeable algorithms: all mergeable summaries require
$\Omega(d^2/R^2)$  bits of space to $0.1$-approximate PCA, making
$R = \widetilde{\Omega}(\sqrt{d})$ necessary for near-linear space.

\begin{restatable}[Mergeable Lower Bound]{theorem}{mergeablelowerbound}
    \label{thm:sketching}
      For all mergeable summaries, $0.1$-approximate PCA on streams with spectral
  gap $R$ requires at least $\Omega({d^2}/{R^2})$ bits of space.
\end{restatable}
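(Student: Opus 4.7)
The plan is to prove this via a two-party communication complexity reduction. Any mergeable summary of $s$ bits immediately yields a one-way communication protocol of cost $s$: Alice computes a summary of her half of the stream, transmits it to Bob, and Bob merges it with his own summary before running the PCA algorithm. It therefore suffices to exhibit an $\Omega(d^2/R^2)$ one-way communication lower bound for $0.1$-approximate PCA on streams whose combined spectral gap is $R$.

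For the communication lower bound I would reduce from Augmented Indexing on $N = \Theta(d^2/R^2)$ bits, which has one-way complexity $\Omega(N)$. Alice's input $b \in \zo^N$ is divided into $T = \Theta(d/R^2)$ blocks of $\Theta(d)$ bits each, and each block $b^{(t)}$ selects a vector $v_t$ from a $2^{\Theta(d)}$-size spherical code in $\R^d$ (e.g., a thinning of $\{\pm 1/\sqrt d\}^d$). Alice streams samples from a Gaussian whose covariance plants the $v_t$'s with geometrically spaced weights $\alpha_t$ on top of an isotropic noise floor. Bob holds an index $i$ identifying a target block $t^*$ and, by the augmented-indexing promise, knows the contents of all later blocks, so he knows the planted vectors $\{v_t : t > t^*\}$ (ordering the tiers so that later blocks correspond to heavier planted directions). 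Bob produces additional samples whose covariance, carefully chosen using his knowledge of those heavier blocks, promotes $v_{t^*}$ to the top of the merged spectrum while maintaining gap $R$. A $0.1$-approximate PCA solver then returns $\hat v$ with $\langle \hat v, v_{t^*}\rangle^2 \geq 0.9$, and Bob rounds to the nearest codeword to recover $b^{(t^*)}$, including the bit he was asked for. Combining this reduction with the $\Omega(N)$ lower bound from Augmented Indexing gives the claimed $\Omega(d^2/R^2)$ space lower bound.

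The main obstacle is the construction of Bob's samples. Because covariance must be PSD, Bob cannot subtract out any of the already-planted directions---he can only add variance. The idea is to have Bob inject variance concentrated in the subspace orthogonal to $\mathrm{span}\{v_t : t > t^*\}$, effectively raising the noise floor only in that subspace, so that $v_{t^*}$ (the heaviest remaining tier not yet known to Bob) rises above every $v_t$ with $t > t^*$ and becomes the top eigenvector of the merged covariance. Tuning the geometric ratio of the $\alpha_t$'s, the magnitude of Bob's injected variance, the isotropic floor $\sigma^2$, and the stream lengths so that (i) the gap of the combined instance is exactly $R$, (ii) every other eigenvalue lies below $\lambda_1/R$, and (iii) Alice's own stream independently has gap $R$ as required by the problem hypothesis---is the delicate part, but the overall scheme follows the Li--Woodruff-style sketching construction referenced in the appendix, adapted to the more general class of mergeable summaries.
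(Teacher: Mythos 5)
There is a genuine gap here, and it is structural rather than a matter of tuning parameters. Your reduction uses mergeability only to obtain a \emph{two-party one-way} protocol: Alice summarizes her half, Bob merges and appends his own data. But any low-space insertion-only streaming algorithm yields exactly the same kind of protocol (Alice sends the algorithm's state and Bob continues the stream), so any lower bound provable through this game applies to \emph{all} streaming algorithms, not just mergeable ones. That cannot give $\Omega(d^2/R^2)$: Theorem~\ref{thm:upper} of this very paper $0.1$-approximates PCA with $\Ot(d)$ bits of space whenever $R \geq C \log n \log d$, so for instance at $R = d^{0.1}$ the one-way cost of your game is $\Ot(d)$, far below the $\Omega(d^{1.8})$ you would need. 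The theorem is precisely a separation between mergeable summaries and general streaming, so the proof must exploit the merging of \emph{many} independently computed summaries. The paper's argument does this with $p = \Theta(R)$ parts: each part holds $d/R$ vectors that are iid $\mathcal{N}(0, I_d)$ except for a common planted $v^*$ hidden at a uniformly random position, so each summary $m_i$ is a function of an exchangeable list and $I(m_i; v^*) \leq sR/d$ (Lemma~\ref{lem:m_i_v_star}); summing over parts gives $I(m_1, \dots, m_p; v^*) \leq s R^2 / d$ (Lemma~\ref{lem:component_information}), while recovering $v^*$ to within constant angle requires $\Omega(d)$ bits of mutual information by a sphere-covering argument (Lemma~\ref{lem:output_information}), forcing $s = \Omega(d^2/R^2)$.

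Separately, the step you flag as the ``main obstacle'' is in fact fatal in the form proposed, for the same reason (no subtraction). With geometrically weighted tiers $\alpha_1 < \dots < \alpha_T$ and Bob knowing only the tiers heavier than $\alpha_{t^*}$, the only way to make $v_{t^*}$ the top eigenvector by \emph{adding} PSD variance is to raise the floor $M$ in the complement of the heavy tiers above $\max_{t > t^*} \alpha_t - \alpha_{t^*}$; but then the spectral ratio at $v_{t^*}$ is roughly $1 + \alpha_{t^*}/(\sigma^2 + M) \leq 1 + O\bigl(\alpha_{t^*}/\alpha_T\bigr)$, which is $1 + o(1)$ rather than $R$ whenever at least one heavier tier exists. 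Keeping the gap $R$ would require $\alpha_{t^*} \gtrsim R\,\alpha_T$, contradicting the ordering. The tiered augmented-indexing template works for \emph{linear} sketches, where Bob can effectively cancel the heavier known directions using linearity; a mergeable or streaming summary supports only insertion and merge, so that cancellation is unavailable, and the construction cannot simultaneously promote $v_{t^*}$ and certify spectral gap $R$.
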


\paragraph{Dependence on Accuracy.}

Theorem~\ref{thm:upper} shows that it is possible to solve
$O(\frac{\log d}{R})$-approximate PCA in near-linear space.  This is
$o(1)$, but cannot be driven towards $0$ in the way that other
settings allow (in the iid setting, the accuracy improves
exponentially in the number of samples; in the existing
$O(d^2/R)$-space worst-case algorithms, the space grows as
$\frac{1}{\eps}$ for accuracy $\eps$).  Unfortunately, we show that
this is inherent: there is a phase transition where aiming for more
than $\poly(1/R)$ accuracy requires quadratic rather than near-linear
space.

\begin{restatable}[Accuracy Lower Bound]{theorem}{lowerbound}\label{thm:lower}
  There exists a universal constant $C > 1$ such that: for any
  $R > 1$, $\frac{1}{C R^2}$-approximate PCA on streams with spectral
  gap $R$ requires at least $\frac{d^2}{C R^3}$ bits of
  space for sufficiently large $d > \poly(R)$.
\end{restatable}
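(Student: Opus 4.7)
The plan is to prove \cref{thm:lower} by reducing $1/(CR^2)$-approximate streaming PCA from a one-way communication problem of complexity $\Omega(d^2/R^3)$, such as augmented \textsc{index} on $N = \Theta(d^2/R^3)$ bits. The standard recipe has three components: (i) design a distribution over input streams parameterized by a hidden $N$-bit message $M$, such that every stream has covariance $\Sigma$ with spectral gap exactly $R$; (ii) show that any $\wh{u}$ with $\norm{P\wh{u}}^2 \leq 1/(CR^2)$ permits a decoder (given appropriate side information about the rest of the stream) to recover $M$ with constant probability; (iii) conclude that the $s$-bit final state of the streaming algorithm realizes an $s$-bit one-way protocol, so $s = \Omega(N) = \Omega(d^2/R^3)$.

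For step (i), the natural starting point is a spiked-covariance construction where $\Sigma = \lambda_1 v^* v^{*T} + \lambda_2 Q$ with $\lambda_1/\lambda_2 = R$ and $Q$ a projection onto a carefully designed subspace orthogonal to $v^*$. The eigenvalue $\lambda_2$ creates a natural ``noise floor'' of order $1/R^2$ for the top-eigenvector problem, so demanding accuracy $1/(CR^2)$ is exactly below this threshold and forces the algorithm to capture more than just the identity of $v^*$. To pack $\Omega(d^2/R^3)$ bits into the instance---beyond what any single direction in $\R^d$ can carry at $1/R$-precision---I would use a direct-sum structure across $\Theta(d/R)$ orthogonal blocks of dimension $\Theta(R)$, with $M$ partitioned among them, so that the decoder can probe any one block on demand.

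For step (ii), upon receiving the final state and the augmented-\textsc{index} side information describing all blocks except a queried $t^*$, the decoder continues the stream with vectors that amplify block $V_{t^*}$ and cancel the rest, while preserving the overall spectral gap $R$. The top eigenvector of the extended covariance then lies close to the spike in $V_{t^*}$, so an $1/(CR^2)$-accurate $\wh{u}$ pins this spike down to $1/R$ precision, which is enough to decode the message fragment stored in $V_{t^*}$ from a $1/R$-net on its unit sphere. Aggregating the per-block argument across the $\Theta(d/R)$ choices of $t^*$ recovers all of $M$, and the augmented-\textsc{index} lower bound then yields the claimed space bound.

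The central obstacle is aligning the single-vector output of PCA with an $N$-bit message: a generic direction in $\R^d$ carries at most $O(d\log R)$ bits at $1/R$-precision, so the bound cannot come from output entropy alone, and one needs an information-complexity argument that charges $\Omega(d/R^2)$ bits of state per block across the $\Theta(d/R)$ blocks, despite the fact that the algorithm never commits to a single block in advance. Calibrating the gap, the block dimension, and the net scale simultaneously so that (a) the spectral gap stays exactly $R$ through every possible continuation, (b) the amplified top eigenvector reliably isolates the queried block, and (c) the per-block bit count actually reaches $d/R^2$ rather than the trivial $R\log R$ is the main technical hurdle; I expect this is where the analysis will need the delicate interaction between the $1/R^2$ noise floor and the shape of the noise subspace $Q$.
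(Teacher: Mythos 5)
There is a genuine gap, and it is exactly the one you flag at the end without resolving. In your architecture the hidden message is partitioned among $\Theta(d/R)$ orthogonal blocks of dimension $\Theta(R)$, and each query recovers the spike of one block to $1/R$ precision. But a unit vector in an $R$-dimensional block known to precision $1/R$ carries only $O(R\log R)$ bits, so the total recoverable information is $O\bigl(\tfrac{d}{R}\cdot R\log R\bigr)=O(d\log R)$, which is far below the target $\Omega(d^2/R^3)$ once $d>\poly(R)$. No calibration of the gap, block dimension, and net scale fixes this: as long as the per-query unknown lives in a $\Theta(R)$-dimensional block, the decoder's output entropy per query is $O(R\log R)$, so the direct-sum structure you propose cannot certify $\Omega(d/R^2)$ bits per block. (A secondary issue: ``continues the stream with vectors that \emph{cancel} the rest'' is not available in an insertion-only stream; one can only amplify, never subtract. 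Also, the instance only needs spectral gap at least $R$, not exactly $R$, so that part of your concern is unnecessary.)

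The paper resolves the accounting problem by making the per-query unknown $\Theta(d)$-dimensional rather than $\Theta(R)$-dimensional, and by extracting \emph{weak} correlation rather than high precision. Alice streams a uniformly random $A=[A_1,A_2]\in\{-1,1\}^{n\times d}$ with $n=\Theta(d/R)$; Bob knows $A_1$ (the first $d/2$ columns) and the state. To query row $i$, Bob re-inserts $k=\Theta(R)$ copies of that row's known half padded with zeros; a spectral analysis of the resulting \textsc{PartialDuplicate} instance shows the top eigenvector has correlation $\Omega(\sqrt{d}/k)$ with the unknown half $y_i\in\{-1,1\}^{d/2}$, and that a $\frac{1}{Ck^2}$-approximate PCA output retains this. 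The key information-theoretic step (Lemma~\ref{lem:info-inequality}) is that even this weak correlation with a random sign vector in dimension $d/2$ yields $\Omega\bigl((1/k)^2 d\bigr)=\Omega(d/R^2)$ bits about $y_i$ --- weak correlation with a high-dimensional random unknown is worth many bits, which is precisely what your $R$-dimensional blocks cannot provide. Summing over the $n=\Theta(d/R)$ rows and applying data processing through the state $S$ (rather than a reduction to augmented \textsc{index}) gives $H(S)\geq I(\wh V;A_2)=\Omega(d^2/R^3)$. So your high-level recipe (stream distribution, decoder that extends the stream, communication/information bound) matches the paper, but the specific block construction you propose would fail at the counting step, and the fix requires restructuring the instance along the lines above.
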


Specializing to constant $R$ gives the following corollary:

\begin{theorem}
  \label{thm:constant_R}
  For any constant $R > 1$,
  there exists a constant $\eps>0$ such that $\eps$-approximate PCA on streams of spectral gap $R$ requires $\Omega(d^2)$ bits of space.
\end{theorem}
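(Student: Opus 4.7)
The plan is essentially to observe that Theorem~\ref{thm:constant_R} is an immediate specialization of Theorem~\ref{thm:lower} obtained by treating $R$ as a constant. Theorem~\ref{thm:lower} states that for any $R > 1$, $\frac{1}{CR^2}$-approximate PCA on streams of spectral gap $R$ requires $\frac{d^2}{CR^3}$ bits of space once $d > \poly(R)$. So the only work is to unwrap the dependence on $R$.

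Concretely, I fix a constant $R > 1$ and apply Theorem~\ref{thm:lower} with that value of $R$. Setting $\eps = \frac{1}{CR^2}$, this $\eps$ is a positive constant because $C$ is a universal constant and $R$ is held constant. The resulting space lower bound is $\frac{d^2}{CR^3}$, which is $\Omega(d^2)$ since $\frac{1}{CR^3}$ is a constant. The side condition $d > \poly(R)$ becomes ``$d$ larger than some constant depending on $R$,'' which is automatic under the $\Omega(d^2)$ conclusion since $\Omega$ already hides constants depending on $R$. This completes the derivation in one step.

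There is no real obstacle in this proof — the entire difficulty lives in establishing Theorem~\ref{thm:lower} itself, where one must exhibit a family of hard instances with spectral gap exactly $R$ on which $\frac{1}{CR^2}$-approximate PCA forces $\Omega(d^2/R^3)$ bits through, presumably, a communication-complexity reduction (e.g., from \textsc{Index} or \textsc{AugmentedIndex} on $\Theta(d^2/R^3)$ bits). The present theorem is purely a convenient restatement: it highlights that even for, say, $R = 2$, there is some fixed constant accuracy threshold below which near-linear space is impossible and quadratic space is forced.
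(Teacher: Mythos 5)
Your proposal is correct and matches the paper exactly: the paper presents Theorem~\ref{thm:constant_R} as a direct specialization of Theorem~\ref{thm:lower} to constant $R$, with $\eps = \frac{1}{CR^2}$ a constant and $\frac{d^2}{CR^3} = \Omega(d^2)$, just as you argue. No further comment is needed.
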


This shows that for constant $R$, storing the entire covariance matrix is essentially the only thing one can do to achieve $o(1)$ accuracy.
By contrast, Theorem~\ref{thm:upper} shows
that for $R = \Theta(\log n \log d)$, $\eps$-solving PCA for any constant
$\eps > 0$ is possible in $\wt{O}(d)$ bits of space.
This is a much lower threshold than the $R = \wt{\Theta}(d)$
needed for near-linear space by existing analyses.

Our results are summarized in Table~\ref{table:results}, which gives
upper and lower bounds for the requirements for near-linear space.

\begin{table}[ht]
\centering
\label{table:samplers}

\makebox[\textwidth][c]{
\begin{tabular}{|c|c|c|c|c|}
\hline
Setting & Method    & Mergeable? & Requirement for $\wt{O}(d)$ space & Citation\\
  \hline
  \hline
  \multirow{2}{*}{Distributional} & Oja's algorithm & No & $\lambda_1 > \lambda_2$ & \cite{oja1982simplified}\\
   & Block power method & No & $\lambda_1 > \lambda_2$ & \cite{hardt2014noisy,balcan2016improved}\\
  \hline
  \hline
  \multirow{3}{*}{Adversarial} & Linear Sketching & Yes & $\lambda_1 > (\lambda_2 + \dotsc + \lambda_d) \cdot \Omega(\frac{1}{\log^C d})$& \cite{upadhyay2016fast,tropp2017practical}\\
   & \textsc{FrequentDirections} & Yes & $\lambda_1 > (\lambda_2 + \dotsc + \lambda_d)\cdot \Omega(\frac{1}{\log^C d})$& \cite{liberty2013simple,ghashami2016frequent}\\
        & Algorithm~\ref{alg:unknownrate} & No & $\lambda_1 > \lambda_2 \cdot O(\log d \log n)$& Theorem~\ref{thm:upper}\\
  \hline
  \hline
  
    \multirow{2}{*}{Adversarial}    & \multirow{2}{*}{Impossibility} & Yes & $\lambda_1 \le \lambda_2 \cdot d^{0.49}$& Theorem~\ref{thm:sketching}\\
        &  & No & $\lambda_1 \le \lambda_2 \cdot 100$& Theorem~\ref{thm:constant_R}\\
\hline
\end{tabular}
}
\caption{In various settings, the requirement on the eigenvectors
  $\lambda_i$ of the covariance matrix for the algorithm to
  get small constant approximate PCA in $\wt{O}(d)$ space.  In the last two rows,
  we instead state a setting of $\lambda_1, \lambda_2$ for an instance in
  which $\Ot(d)$ space is impossible.}\label{table:results}
\end{table}

\subsection{Related Work}

Oja's algorithm has been extensively studied in the stochastic
setting where the data streams are sampled iid; see,
e.g.,~\cite{balsubramani2013fast,jain2016streaming,allen2017first,huang2021matrix,huang2021streaming,lunde2021bootstrapping}.
Since the goal in this setting is to approximate the underlying
distribution's principal components, there is a minimum sample
complexity for even an offline algorithm to estimate the principal
component.  This line of work~\cite{jain2016streaming} can show that
Oja's algorithm has a similar sample complexity to the optimal offline
algorithm, even for spectral ratios $R$ close to $1$.
Recent work of \cite{kumar2024streaming} extends Oja's algorithm to data sampled from a Markov chain instead of iid samples. They showed that, despite the data dependency inherent in Markovian data, the performance of Oja's algorithm is as good as the iid case when the Markov chain has large second eigenvalue.

Our analysis of Oja's algorithm is by necessity quite different from
these stochastic-setting analyses.  Oja's algorithm returns
$v_n = B_n v_0$ for a transformation matrix
$B_n = (I + \eta_n x_n x_n^T)(I + \eta_{n-1} x_{n-1}
x_{n-1}^T)\dotsb(I + \eta_1 x_1 x_1^T)$.  In the stochastic setting,
$B_n$ is a random variable, with
$\E[B_i \mid B_{i-1}] = (I + \eta \Sigma) B_i$; the analyses focus on
matrix concentration of $B_n$, essentially to bound the deviation of
$B_n$ around the ``expected'' $(I + \eta \Sigma)^n$. 
In our arbitrary-data setting, $B_n$ is not a random variable at all.
The only randomness is the initialization $v_0$.  This makes our
analysis quite different, instead tracking how much $\wh{v}_i$ can
move under the covariance constraints.

Our lower bound construction for high accuracy is closely related to one
in~\cite{woodruff2014low}, which shows an $\Omega(dk/\eps)$ lower
bound for a $(1+\eps)$-approximate rank-$k$ approximation of $\Sigma$
in Frobenius norm.  The~\cite{woodruff2014low} construction for
$k = 1$ and $\eps = \Theta(\frac{1}{n})$ is very similar to ours, and
would give an $\Omega(d^2)$ lower bound for a small constant
approximation when $R < 2$. Our construction has a more careful analysis in terms of $R$.

Much of the prior work on streaming PCA, for both the adversarial and
stochastic settings, is focused on solving $k$-PCA not just the single
top direction.  We leave the extension of our upper bound to general
$k$ as an open question. 

\section{Proof Overview}

\subsection{Upper Bound}
For our application of Oja's algorithm we use a fixed learning rate $\eta$
throughout the stream.
The $x_i$ correlated with $v^*$ could all
arrive at the beginning or the end of the stream, and we want to
weight them equally so that at least we can solve the commutative case
where Oja's algorithm is relatively simple.

As a basic intuition, Oja's algorithm
returns $\wh{v}_n = \frac{v_n}{\norm{v_n}}$, where
\begin{align*}
  v_n &= (I + \eta x_n x_n^T)(I + \eta x_{n-1} x_{n-1}^T)\dotsb(I + \eta x_1 x_1^T) v_0\\
      &\approx e^{\eta x_n x_n^T}e^{\eta x_{n-1} x_{n-1}^T}\dotsb e^{\eta x_{1} x_{1}^T}v_0
\end{align*}
where the approximation is good when $\eta \norm{x_i}^2 \ll 1$.
Imagine that these matrix exponentials commute (e.g., each $x_i$ is
$e_j$ for some $j$).  Then we would have
\begin{align}
  \label{eq:ojaapprox}
  v_n \approx e^{\eta X^T X} v_0.
\end{align}
This suggests that the important property of the learning rate $\eta$
is the spectrum of $\eta X^T X$.  Let $\eta X^T X$ have top eigenvalue
$\sigma_1 = n \eta \lambda_1$, with corresponding eigenvector $v^*$,
and all other eigenvalues at most $\sigma_2 = n \eta \lambda_2$.  For
Theorem~\ref{thm:Oja1}, we would like to show that Oja's algorithm
works if $\sigma_1 > O(\log d)$ and $\sigma_2 < \frac{1}{O(\log n)}$.

For~\eqref{eq:ojaapprox} to converge to $v^*$, as in the power method,
we want the $v^*$ coefficient of $v_0$ to grow by a $\poly(d)$ factor
more than any other eigenvalue, i.e.,
$e^{\sigma_1} \geq \poly(d) e^{\sigma_2}$ or
$\sigma_1 \geq \sigma_2 + O(\log d)$.  So we certainly need to set
$\eta$ such that $\sigma_1 \geq O(\log d)$.  But how large a spectral
gap do we need, i.e., how small does $\sigma_2$ need to be?

One big concern for adversarial-order Oja's algorithm is: even if most
of the stream clearly emphasizes $v^*$ so $v_i$ converges to it, a
small number of inputs at the end could cause $v_n$ to veer away from
$v^*$ to a completely wrong direction.  This cannot happen in the
commutative setting, but it can happen in general: $v_n$ can rotate by
$\Theta(\sqrt{\sigma_2})$, by ending the stream with $\frac{1}{\eta}$
copies of $v^* + \sqrt{\sigma_2} v'$ (see Figure~\ref{fig:endrotate}).
But this is the worst that can happen.  We show:

\begin{figure}
  \begin{center}
    \begin{tikzpicture}
      \coordinate (O) at (0,0);
      \coordinate (xcoord) at (30:2);
      \coordinate (vcoord) at (0:2);
      \coordinate (newcoord) at ($ (vcoord) + (30:1.73) $);
      \draw[thick] (O) circle (2);
      \draw[red, thick,->] (O) -- (vcoord);
      \draw[blue, dotted, thick,->] (O) -- (newcoord);
      \node[anchor=south,blue,shift={(7mm,0)}] (xn) at (newcoord) {$\wh{v}_{n-1} + \eta x_n x_n^T \wh{v}_{n-1}$};
      \node[red,anchor=west] (v) at (vcoord) {$\wh{v}_{n - 1} = v^*$};
      \draw[thick,->] (0, 0) -- (xcoord);
      \node[anchor=south west] (x) at (xcoord) {$x_n$};
      \draw pic [draw,->,angle radius=1cm,"$\sqrt{\sigma_2}$" shift={(7mm,3mm)}] {angle = vcoord--O--xcoord};
  \end{tikzpicture}
  \end{center}
  \caption{Suppose $\eta = 1$.  Then even after convergence to $v^*$
    exactly, a single final sample can skew the result by
    $\Theta(\sqrt{\sigma_2})$.  For smaller $\eta$, the same can
    happen with $\frac{1}{\eta}$ final samples. }\label{fig:endrotate}
\end{figure}
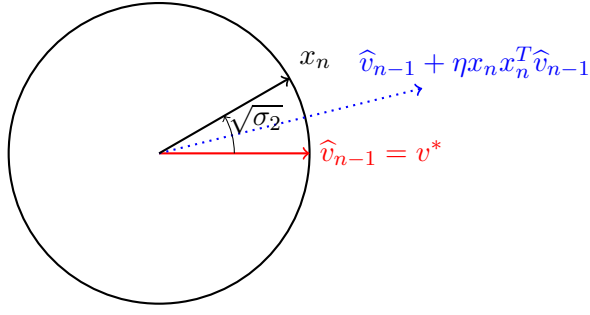

\begin{restatable}[Growth implies correctness]{lemma}{growthgivescorrectness}
  \label{lem:growth-gives-correctness}
  For any $v_0$ and all $i$, $\norm{P\wh{v}_i}  \leq \sqrt{\sigma_2} + \frac{\norm{Pv_0}}{\norm{v_i}}$.
\end{restatable}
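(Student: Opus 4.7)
My plan is to prove the slightly stronger unnormalized inequality
\[
  \norm{Pv_i} \le \norm{Pv_0} + \sqrt{\sigma_2/2}\,\norm{v_i},
\]
from which the claim follows after dividing by $\norm{v_i}$ and using $\sqrt{\sigma_2/2} \le \sqrt{\sigma_2}$. The strategy is to telescope the Oja update, dualize $\norm{Pv_i}$ by testing against unit vectors $w \perp v^*$, and then apply Cauchy--Schwarz, with the spectral bound on $PX^TXP$ controlling one factor and the growth of $\norm{v_j}^2$ controlling the other.

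Concretely, I would set $a_j := \inner{x_j, v_{j-1}}$, so that the unnormalized recursion $v_j = v_{j-1} + \eta a_j x_j$ telescopes to $v_i = v_0 + \eta \sum_{j=1}^i a_j x_j$. Fix a unit $w$ with $w \perp v^*$; since $Pw = w$ and $\inner{w, v_0} = \inner{w, Pv_0}$,
\[
  \inner{w, Pv_i} = \inner{w, Pv_0} + \eta \sum_{j=1}^i a_j \inner{w, x_j},
\]
and Cauchy--Schwarz bounds the sum by $\bigl(\sum_j a_j^2\bigr)^{1/2}\bigl(\sum_j \inner{w, x_j}^2\bigr)^{1/2}$.

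The second factor is where the eigenvalue hypothesis enters: since $w \perp v^*$ and $\norm{w}=1$, we have $\sum_{j=1}^n \inner{w, x_j}^2 = w^T X^T X w \le n\lambda_2 = \sigma_2/\eta$. For the first factor, expanding $\norm{v_j}^2 = \norm{v_{j-1}}^2 + 2\eta a_j^2 + \eta^2 a_j^2 \norm{x_j}^2$ gives $\norm{v_j}^2 - \norm{v_{j-1}}^2 \ge 2\eta a_j^2$, so telescoping yields $\sum_{j=1}^i a_j^2 \le \norm{v_i}^2/(2\eta)$. Multiplying through, $\eta \bigl|\sum_j a_j \inner{w, x_j}\bigr| \le \sqrt{\sigma_2/2}\,\norm{v_i}$; taking the supremum of $\inner{w, Pv_i}$ over unit $w \perp v^*$ gives the desired unnormalized bound.

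The one thing to get right is the coupling between $\sum_j a_j^2$ and $\norm{v_i}$: in an adversarial stream neither quantity can be bounded on its own, but every contribution $\eta a_j^2$ to the sum also pushes $\norm{v_j}^2$ up by at least $2\eta a_j^2$. This coupling is what converts the spectral bound on $X$ restricted to $(v^*)^{\perp}$ into a stream-order-independent $\sqrt{\sigma_2}$ bound on $\norm{P\wh{v}_i}$; no adversarial reordering can beat it, because both the ``numerator'' and ``denominator'' effects are tied to the same $a_j^2$.
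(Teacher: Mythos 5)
Your proposal is correct and follows essentially the same route as the paper: telescope the Oja update, test against a unit $w \perp v^*$, apply Cauchy--Schwarz with the spectral bound $\eta\sum_j\inner{x_j,w}^2\le\sigma_2$ on one factor and the growth of $\norm{v_i}$ on the other, and dualize via $w = P\wh{v}_i/\norm{P\wh{v}_i}$ --- this is exactly the paper's inequality~\eqref{eq:vwbound} specialized to the indices $0$ and $i$, followed by the same normalization step. The only (minor) difference is that you bound $\sum_j\inner{x_j,v_{j-1}}^2$ by directly telescoping $\norm{v_j}^2-\norm{v_{j-1}}^2\ge 2\eta\inner{x_j,v_{j-1}}^2$ rather than via Claim~\ref{claim:prodab}, a small simplification that avoids invoking $\eta\norm{x_i}^2\le 1$ at this point and yields the slightly better constant $\sqrt{\sigma_2/2}$.
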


This lemma has two useful implications: first, if we ever get close to
$v^*$, the final solution will be at most $\sqrt{\sigma_2}$ further from
$v^*$.  Second, no matter where we start, the final output is good if
$\norm{v_n}$ is very large.  This is how Algorithm~\ref{alg:oja} can
return either a correct answer or $\perp$: it just observes whether
$\norm{v_n}$ has grown by $\text{poly}(d)$.

So it suffices to show that $\norm{v_n}$ is large for a random $v_0$;
and since $v_0$ starts with a random $\frac{1}{\poly(d)}$ component in
the $v^*$ direction, it in fact suffices to show that $\norm{v_n}$
would grow by $\poly(d)$ if Oja's algorithm started at $v_0 = v^*$.
Now, one can show that
\begin{align}
\norm{v_n}^2 \geq  e^{\sum_{i=1}^n \eta \inner{x_i, \wh{v}_{i-1}}^2}.\label{eq:5}
\end{align}
So if $v_i$ were always exactly $v^*$, we would have
$\norm{v_n}^2 \geq e^{\eta (v^*)^T X^T X v^*} = e^{\sigma_1} \geq \poly(d)$
as needed. In addition, if we start at $v^*$, then
Lemma~\ref{lem:growth-gives-correctness} implies
$\norm{P \wh{v}_i} \leq \sqrt{\sigma_2}$ for all $i$, so we never
deviate \emph{much} from $v^*$. However, $v_i$ can deviate a little bit,
which could decrease $\inner{x_i, \wh{v}_{i-1}}^2$.  The question is, by how much?
Well, it's easy to show
\begin{align}
  \eta \inner{x_i, \wh{v}_{i-1}}^2 \geq \eta \frac{1-\sigma_2}{2}\inner{x_i, v^*}^2 - \eta \inner{x_i, P \wh{v}_{i-1}}^2\label{eq:4}
\end{align}
so we just need to show that
\begin{align}
  \eta \sum_i \inner{x_i, P \wh{v}_{i-1}}^2 \ll \sigma_1.\label{eq:xPvbound}
\end{align}
We know that $\norm{P \wh{v}_{i-1}}^2 \leq \sigma_2$, and
$\eta \sum_i \inner{x_i, w}^2 \leq \sigma_2$ for any fixed unit vector
$w \perp v^*$, but the worry is that $P\wh{v}_{i-1}$ could rotate
through many different orthogonal directions; each direction $w$ can
only contribute $\sigma_2^2$ to
$\eta \sum_i \inner{x_i, P \wh{v}_{i-1}}^2$, but the total could
conceivably be up to $\sigma_2^2d$.

Our main technical challenge is to rule this out, so
$\eta \sum_i \inner{x_i, P \wh{v}_{i-1}}^2$ is small.  For intuition,
in this overview we just rule out $P \wh{v}_{i-1}$ moving through many
\emph{standard} basis vectors by showing
\begin{align}
  \sum_{j=1}^d \max_i \inner{e_j, P \wh{v}_{i-1}}^2 \lesssim \sigma_2 \log^2 n \log \norm{v_n}.\label{eq:eiprod}
\end{align}
That is, $P \wh{v}_{i-1}$ cannot rotate through $\sqrt{\sigma_2}$
correlation with each of the $d$ different basis vectors (which would
give a value of $\sigma_2 d$) unless $\norm{v_n}$ is large (which is
what we wanted to show in the first place).

First, we show that $\norm{v_n}$ grows proportional to the
\emph{squared} movement of $P\wh{v}_i$:

\begin{restatable}{lemma}{normalizedmovement}\label{lem:normalized-movement}
  Suppose $Pv_0 = 0$.  For any two time steps $0 \leq a < b \leq n$,
  \[
    \norm{P\wh{v}_b - P\wh{v}_a}^2 
   \leq 4 \sigma_2 \log \frac{\norm{v_b}}{\norm{v_a}}
  \]
\end{restatable}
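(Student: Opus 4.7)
The strategy is to isolate two conceptually different sources of movement in $P\wh{v}_i$: the accumulation of new $v^{*\perp}$-components coming from the data, and the passive rescaling caused by the growth of $\norm{v_i}$. Concretely, I would write
\[
P\wh{v}_b - P\wh{v}_a \;=\; \frac{Pv_b - Pv_a}{\norm{v_b}} \;+\; P\wh{v}_a\!\left(\frac{\norm{v_a}}{\norm{v_b}} - 1\right),
\]
and then apply $\norm{x+y}^2 \leq 2\norm{x}^2 + 2\norm{y}^2$ to reduce the lemma to two independent estimates. The rescaling piece is immediate: since $Pv_0 = 0$, Lemma~\ref{lem:growth-gives-correctness} gives $\norm{P\wh{v}_a}^2 \leq \sigma_2$, so this contribution is at most $2\sigma_2\bigl(1 - \norm{v_a}/\norm{v_b}\bigr)^2$.

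For the main piece, I would unroll the Oja recurrence $v_i = v_{i-1} + \eta\inner{x_i, v_{i-1}} x_i$ to write $Pv_b - Pv_a = \sum_{i=a+1}^b c_i\, Px_i$ with $c_i := \eta\inner{x_i, v_{i-1}}$, and view this sum as $(PX^T)w$ for the weight vector $w$ supported on $(a,b]$. The defining property of $\sigma_2$ is that $PX^TXP$ has top eigenvalue at most $\lambda_2 n = \sigma_2/\eta$ on the subspace orthogonal to $v^*$, so $\norm{PX^T}_{\mathrm{op}}^2 \leq \sigma_2/\eta$ and
\[
\norm{Pv_b - Pv_a}^2 \;\leq\; \frac{\sigma_2}{\eta}\sum_{i=a+1}^b c_i^2 \;=\; \sigma_2\eta \sum_{i=a+1}^b \inner{x_i, v_{i-1}}^2.
\]
Telescoping $\norm{v_i}^2 - \norm{v_{i-1}}^2 \geq 2\eta \inner{x_i, v_{i-1}}^2$ bounds the right-hand side by $\tfrac{\sigma_2}{2}\bigl(\norm{v_b}^2 - \norm{v_a}^2\bigr)$, so the main piece contributes at most $\sigma_2\bigl(1 - (\norm{v_a}/\norm{v_b})^2\bigr)$ after dividing by $\norm{v_b}^2$ and including the factor of $2$ from the decomposition.

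Setting $\rho := \norm{v_a}/\norm{v_b} \in (0,1]$ and combining gives
\[
\norm{P\wh{v}_b - P\wh{v}_a}^2 \;\leq\; \sigma_2(1-\rho^2) + 2\sigma_2(1-\rho)^2 \;\leq\; 4\sigma_2(1-\rho),
\]
using $(1-\rho)(1+\rho)\leq 2(1-\rho)$ and $(1-\rho)^2 \leq 1-\rho$, after which the elementary inequality $1 - \rho \leq \log(1/\rho)$ closes the argument with the stated constant. I expect the main conceptual step to be the initial decomposition and the recognition that the spectral bound $PX^TXP \preceq (\sigma_2/\eta)\,I$ applies directly to the partial sum $\sum c_i Px_i$, rather than requiring any delicate tracking of the trajectory of $P\wh{v}_i$; once that observation is in place, the remaining calculation is routine and the constant $4$ drops out tightly.
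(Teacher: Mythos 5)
Your proof is correct and takes essentially the same route as the paper: the identical decomposition of $P\wh{v}_b - P\wh{v}_a$ into the unnormalized-difference term $\frac{Pv_b - Pv_a}{\norm{v_b}}$ and the rescaling term (bounded by $\sigma_2(1-\rho)^2$ via Lemma~\ref{lem:growth-gives-correctness} using $Pv_0=0$), with your operator-norm step $\norm{PX^TXP}\le \sigma_2/\eta$ being the same Cauchy--Schwarz-against-a-test-direction argument the paper packages as~\eqref{eq:vwbound}. The only cosmetic difference is that you bound $\eta\sum_{i=a+1}^b\inner{x_i,v_{i-1}}^2 \le \tfrac{1}{2}(\norm{v_b}^2-\norm{v_a}^2)$ by telescoping $\norm{v_i}^2$ directly rather than going through Claim~\ref{claim:prodab}, which is fine (and even gains a factor of $2$ you do not need).
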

As a result, for any subsequence $i_0, \dotsc, i_k$ of iterations, the sum
of squared movement has
\[
  \sum_{j=1}^k \norm{P\wh{v}_{i_j} - P\wh{v}_{i_{j-1}}}^2 \lesssim \sigma_2 \log \norm{v_n}.
\]
We use a combinatorial lemma to turn this bound on squared distances
over subsequences into~\eqref{eq:eiprod}.  For any set of vectors $A$
the following holds (see Figure~\ref{fig:combinatorial}):
\begin{lemma}[Simplified version of Lemma~\ref{lem:matsample}]\label{lem:matsamplesimple}
  Let $A_0 = 0$, and $A_1, \dotsc, A_n \in \R^d$ satisfy that every
  subsequence $S$ of $\{0, \dotsc, n\}$ has
  \[
    \sum_i \norm{A_{S_i} - A_{S_{i-1}}}_2^2 \leq B.
  \]
  for some $B > 0$.  Then
  \[
    \sum_{j=1}^d \max_{i \in [n]} (A_i)_j^2 \leq B (1 + \log_2 n)^2.
  \]
\end{lemma}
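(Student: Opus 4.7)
The plan is to prove the bound via a dyadic decomposition in the time index. For each coordinate $j \in [d]$, let $i_j^* \in \{0,\dotsc,n\}$ achieve $\max_i (A_i)_j^2$, and let $L = \lceil \log_2(n+1) \rceil$. I would approximate $A_{i_j^*}$ by a telescoping sum along a dyadic path from $0$ to $i_j^*$ of length only $L$. Concretely, for $i \in \{0,\dotsc,n\}$, set $p_k(i) := \lfloor i/2^{L-k} \rfloor \cdot 2^{L-k}$, so that $p_0(i)=0$ (since $n < 2^L$), $p_L(i)=i$, and each refinement $p_k(i) - p_{k-1}(i)$ is either $0$ or $2^{L-k}$, determined by a single bit of the binary expansion of $i$. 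Telescoping gives
\[
A_{i_j^*} \;=\; \sum_{k=1}^L \bigl(A_{p_k(i_j^*)} - A_{p_{k-1}(i_j^*)}\bigr),
\]
and Cauchy--Schwarz on the $j$-th coordinate yields $(A_{i_j^*})_j^2 \leq L \sum_{k=1}^L \bigl((A_{p_k(i_j^*)} - A_{p_{k-1}(i_j^*)})_j\bigr)^2$.

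Summing over $j \in [d]$ and swapping the order of summation, the claim reduces to showing that for every scale $k \in \{1,\dotsc,L\}$,
\[
\sum_{j=1}^d \bigl((A_{p_k(i_j^*)} - A_{p_{k-1}(i_j^*)})_j\bigr)^2 \;\leq\; B,
\]
since the total bound then becomes $L \cdot L \cdot B \leq (1+\log_2 n)^2 B$. The heart of the argument is a disjointness observation: at scale $k$, every non-trivial pair $(p_{k-1}(i_j^*),\,p_k(i_j^*))$ has the form $(a,\,a+2^{L-k})$ with $a$ an \emph{even} multiple of $2^{L-k}$, so as $j$ ranges over $[d]$ these pairs are pairwise disjoint and tile alternating dyadic intervals. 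Dropping coordinates via $((A_b - A_a)_j)^2 \leq \norm{A_b - A_a}^2$ and using disjointness, the displayed sum is at most $\sum_q \norm{A_{(q+1)2^{L-k}} - A_{q\,2^{L-k}}}^2$; applying the hypothesis to the subsequence of multiples of $2^{L-k}$ in $\{0,\dotsc,n\}$ bounds this by $B$.

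The main obstacle I anticipate is rigorously pinning down the disjointness at each scale: without it, summing $\norm{A_b - A_a}^2$ over the used pairs at a single scale could blow up by a factor of $d$, since each coordinate might otherwise choose a pair overlapping those of other coordinates, reducing the lemma to the trivial $\sum_j M_j \leq d B$. The dyadic path built from the binary expansion of $i_j^*$ is precisely what forces the used pairs to lie on alternating intervals of length $2^{L-k}$, which is what lets the subsequence hypothesis kick in once per scale. I would expect the paper's full Lemma~\ref{lem:matsample} to allow a more general setting (e.g.\ a weighted or scale-dependent budget coming from applying Lemma~\ref{lem:normalized-movement} directly to each subsequence, so that $B$ itself is replaced by something like $4\sigma_2 \log(\norm{v_n}/\norm{v_0})$), but the two-level Cauchy--Schwarz plus dyadic-disjointness skeleton should carry over unchanged.
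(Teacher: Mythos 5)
Your proof is correct and follows essentially the same route as the paper's Lemma~\ref{lem:matsample}: a binary-expansion telescoping path of length $O(\log n)$ from index $0$ to each coordinate's maximizer, Cauchy--Schwarz across the scales, and the subsequence hypothesis applied once per scale to the spacing-$2^{L-k}$ arithmetic progression. One bookkeeping caveat: different coordinates may select the \emph{same} dyadic pair at a given scale (the pairs are drawn from a disjoint family, but need not be distinct across coordinates), so instead of bounding each coordinate's term by the full squared norm of its pair, group the coordinates sharing a pair and use that a sum of squares over distinct coordinates of a single vector is at most its squared norm --- this is exactly what the paper's row-by-row accounting (bounding each row's term at scale $k$ by that row's sum over all consecutive pairs of the scale-$k$ subsequence) accomplishes implicitly.
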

Applying Lemma~\ref{lem:matsamplesimple} to $A_i := P \wh{v}_i$
immediately gives~\eqref{eq:eiprod}.

\begin{remark}
  The $\log^2 n$ factor in Lemma~\ref{lem:matsamplesimple} is why we
  need $R > O(\log d \log n)$, rather than just $R > O(\log d)$.  The
  factor in Lemma~\ref{lem:matsamplesimple} is tight for
  $n = \Theta(d)$: $A_{i,j} := \log \frac{n}{1 + \abs{i - j}}$ has
  $B = \Theta(n)$ while $\sum_{j=1}^d \max_{i \in [n]} (A_i)_j^2$ is
  $\Theta(n \log^2
  n)$.
\end{remark}

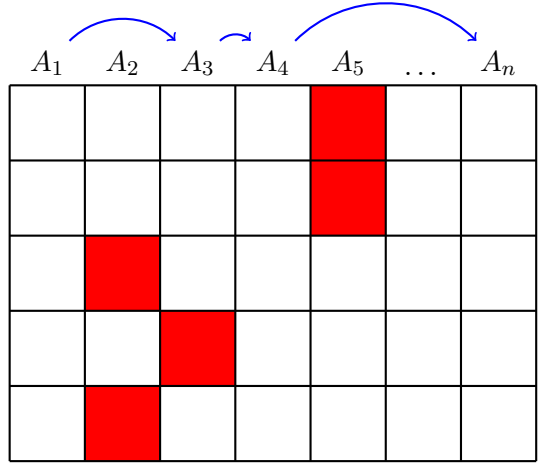
\begin{figure}
  \begin{center}
    \begin{tikzpicture}[
      myrect/.style={
        draw,thick,
        fill=white,
      }
      ]
      \newcommand{\cellsize}{1cm}
      \newcommand{\Rows}{5}
      \newcommand{\Cols}{7}
      \foreach \r / \c in {0/1, 1/2, 2/1, 3/4, 4/4} {
        \draw [myrect,draw=none,fill=red] ($ (\c*\cellsize, \r*\cellsize) $) rectangle ($ (\c*\cellsize+\cellsize, \r*\cellsize+\cellsize) $);
      }
      \foreach \r  in {0, 1,..., \Rows} {
        \draw[thick]  ($ (0, \r * \cellsize) $) --  ($ (\Cols * \cellsize, \r*\cellsize ) $);
      }
      \foreach \c  in {0,1,...,\Cols} {
        \draw[thick] ($ (\c*\cellsize, 0) $) --  ($ (\c *\cellsize, \Rows * \cellsize) $);
      }
      \foreach \c / \t  in {1/1,2/2,3/3,4/4,5/5,7/n} {
        \node[anchor=south] (A\c) at ($ (\c *\cellsize - .5 *\cellsize, \Rows * \cellsize) $) {$A_\t$};
      }
      \node[anchor=south] (A6) at ($ (6 *\cellsize - .5 *\cellsize, \Rows * \cellsize) $) {$\dotsc$};
      \draw[->,thick,blue] (A1) edge[bend left=45] (A3);
      \draw[->,thick,blue] (A3) edge[bend left=45] (A4);
      \draw[->,thick,blue] (A4) edge[bend left=45] (A7);
    \end{tikzpicture}
  \end{center}
  \caption{Lemma~\ref{lem:matsamplesimple} states that, if the sum of
    squared distances across any subsequence of vectors $A_i$ is at most $B$,
    then the vector selecting the maximum value in each coordinate has squared norm
    $B \log^2 n$.  }\label{fig:combinatorial}
\end{figure}

A similar approach, applied to $A_{i,j} = x_i^T P \wh{v}_j$, lets us
bound our actual target~\eqref{eq:xPvbound}:

\begin{restatable}{lemma}{uprodlem}\label{lem:uprod}
  If $v_0 = v^*$, then
  \[
    \eta \sum_{i=1}^n \inner{x_i, P \wh{v}_{i-1}}^2 \lesssim \sigma_2^2 \log^2 n \log \norm{v_n}
  \]
\end{restatable}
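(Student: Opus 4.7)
The plan is to reduce to the combinatorial bound of Lemma~\ref{lem:matsamplesimple}, applied to vectors that encode the correlations $\inner{x_i, P\wh{v}_j}$. Specifically, for each time index $j \in \{0, 1, \dotsc, n-1\}$ I would define $A_j \in \R^n$ by $(A_j)_i := \inner{x_i, P\wh{v}_j}$. Our target then satisfies
\[
  \sum_{i=1}^n \inner{x_i, P\wh{v}_{i-1}}^2 = \sum_{i=1}^n (A_{i-1})_i^2 \leq \sum_{i=1}^n \max_{0 \leq j \leq n-1} (A_j)_i^2,
\]
so it suffices to feed the $A_j$ into Lemma~\ref{lem:matsamplesimple}, which reduces the task to bounding sums of squared distances across arbitrary subsequences of $(A_j)$.

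Next I would bound $\norm{A_b - A_a}^2$ spectrally. Since $P\wh{v}_b - P\wh{v}_a$ lies in the orthogonal complement of $v^*$, where the largest eigenvalue of $X^T X$ is at most $n\lambda_2 = \sigma_2/\eta$,
\[
  \norm{A_b - A_a}^2 = \norm{X(P\wh{v}_b - P\wh{v}_a)}^2 \leq \frac{\sigma_2}{\eta}\,\norm{P\wh{v}_b - P\wh{v}_a}^2.
\]
The hypothesis $v_0 = v^*$ gives $Pv_0 = 0$, so Lemma~\ref{lem:normalized-movement} applies. Moreover, each $I + \eta x_i x_i^T$ is PSD with eigenvalues at least $1$, so $\norm{v_i}$ is non-decreasing, and over any subsequence $S_0 < S_1 < \dotsb < S_k$ the pairwise movement bounds telescope:
\[
  \sum_{\ell=1}^k \norm{P\wh{v}_{S_\ell} - P\wh{v}_{S_{\ell-1}}}^2 \leq 4\sigma_2 \sum_{\ell=1}^k \log \frac{\norm{v_{S_\ell}}}{\norm{v_{S_{\ell-1}}}} = 4\sigma_2 \log \frac{\norm{v_{S_k}}}{\norm{v_{S_0}}} \leq 4\sigma_2 \log \norm{v_n},
\]
where the last step uses $\norm{v_0} = 1$ together with monotonicity. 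Combining with the spectral step yields, for every subsequence, $\sum_\ell \norm{A_{S_\ell} - A_{S_{\ell-1}}}^2 \leq B$ with $B := 4\sigma_2^2 \log \norm{v_n}/\eta$.

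Finally, applying Lemma~\ref{lem:matsamplesimple} to the collection $\{A_0, \dotsc, A_{n-1}\} \subset \R^n$ with this value of $B$ gives
\[
  \sum_{i=1}^n \max_j (A_j)_i^2 \leq B \cdot O(\log^2 n) = O\!\left( \frac{\sigma_2^2 \log^2 n \log \norm{v_n}}{\eta} \right),
\]
and multiplying through by $\eta$ yields the claim. The conceptual crux is the spectral-to-movement reduction combined with Lemma~\ref{lem:normalized-movement}'s quadratic growth bound: this is what converts a per-subsequence bound that could a priori scale linearly in the subsequence length into the much smaller telescoping quantity $\log \norm{v_n}$. Everything else is bookkeeping to line things up with the combinatorial lemma.
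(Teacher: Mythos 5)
Your proposal is correct and follows essentially the same route as the paper: the same matrix of correlations $\inner{x_i, P\wh{v}_j}$, the same spectral bound via~\eqref{eq:otherdir}, and the same use of Lemma~\ref{lem:normalized-movement} with telescoping. The only cosmetic difference is that you verify the all-subsequence hypothesis and invoke the simplified Lemma~\ref{lem:matsamplesimple}, whereas the paper applies the dyadic Lemma~\ref{lem:matsample} directly to the same columns (noting, as you implicitly do via $v_0 = v^*$, that the initial column $A_0 = XP\wh{v}_0$ vanishes).
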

Combined with~\eqref{eq:5} and~\eqref{eq:4}, this implies that
$\norm{v_n} \geq e^{\Omega(\sigma_1)}$ if $\sigma_2 \ll \frac{1}{\log n}$:

\begin{restatable}[The right direction grows]{lemma}{rightgrowthlem}\label{lem:growth}
  Suppose $\sigma_2 < \frac{1}{2}$.  Then if $v_0 = v^*$ we have
  \[
    \log \norm{v_n} \gtrsim \frac{\sigma_1}{1 + \sigma_2^2 \log^2 n}.
  \]
\end{restatable}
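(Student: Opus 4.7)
\textbf{Proof Proposal for Lemma~\ref{lem:growth}.}
The plan is to chain together the three ingredients already laid out in the overview: the log-norm growth identity~\eqref{eq:5}, the pointwise projection bound~\eqref{eq:4}, and the key upper bound on $\eta\sum_i\langle x_i,P\wh v_{i-1}\rangle^2$ from Lemma~\ref{lem:uprod}. Concretely, since $v_0=v^*$ we have $Pv_0=0$, so Lemma~\ref{lem:growth-gives-correctness} gives $\norm{P\wh v_{i-1}}^2\leq \sigma_2$ for all $i$, which is exactly the hypothesis that underlies~\eqref{eq:4}.

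First I would take logarithms in~\eqref{eq:5} to obtain
\[
2\log\norm{v_n}\;\geq\;\sum_{i=1}^n \eta\,\langle x_i,\wh v_{i-1}\rangle^2.
\]
Then I would apply~\eqref{eq:4} term by term and sum, using that $\sum_{i=1}^n \eta\langle x_i,v^*\rangle^2 = \eta\,(v^*)^T X^T X v^* = \sigma_1$, to get
\[
2\log\norm{v_n}\;\geq\;\frac{1-\sigma_2}{2}\,\sigma_1\;-\;\eta\sum_{i=1}^n\langle x_i,P\wh v_{i-1}\rangle^2.
\]
Because $\sigma_2<1/2$, the first term is at least $\sigma_1/4$. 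I would then invoke Lemma~\ref{lem:uprod} (whose hypothesis $v_0=v^*$ is in force) to bound the subtracted term by $C\sigma_2^2\log^2 n\,\log\norm{v_n}$ for some absolute constant $C$, yielding
\[
(2+C\sigma_2^2\log^2 n)\,\log\norm{v_n}\;\geq\;\sigma_1/4.
\]
Rearranging gives $\log\norm{v_n}\gtrsim \sigma_1/(1+\sigma_2^2\log^2 n)$, as required.

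The only step requiring any care is verifying that~\eqref{eq:4} can be derived and summed while keeping the $1-\sigma_2$ factor uniform in $i$: this uses that $\langle v^*,\wh v_{i-1}\rangle^2 = 1-\norm{P\wh v_{i-1}}^2 \geq 1-\sigma_2$ together with the inequality $(a+b)^2\geq \tfrac12 a^2 - b^2$ applied to $a=\langle v^*,\wh v_{i-1}\rangle\langle x_i,v^*\rangle$ and $b=\langle x_i,P\wh v_{i-1}\rangle$. Every other step is a direct substitution, so there is no real obstacle beyond correctly tracking constants. The substantive work in establishing the lemma was already performed in Lemmas~\ref{lem:growth-gives-correctness} and~\ref{lem:uprod}; here we simply assemble them.
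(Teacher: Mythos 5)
Your proof is correct and follows essentially the same route as the paper's: decompose $\wh{v}_{i-1}$ along $v^*$ and its orthogonal complement, apply $(a+b)^2\ge\tfrac12 a^2-b^2$ together with $\inner{v^*,\wh v_{i-1}}^2\ge 1-\sigma_2$ from Lemma~\ref{lem:growth-gives-correctness}, sum and bound the error term via Lemma~\ref{lem:uprod}, then rearrange. The constant bookkeeping matches the paper's (which gets $\log\norm{v_n}\ge\tfrac18\sigma_1-O(\sigma_2^2\log^2 n\log\norm{v_n})$), so nothing further is needed.
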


Since the initial $v_0$ is random, it with high probability has a
$\frac{1}{\poly(d)}$ component in the $v^*$ direction; then linearity
of the unnormalized algorithm means $\norm{v_n}$ is large with high
probability.  By Lemma~\ref{lem:growth-gives-correctness}, this means
the angle between $v^*$ and final answer $\wh{v}_n$ is bounded by $\sqrt{\sigma_2} + d^{-C}$,  so the algorithm succeeds.

\subsection{Lower bound for mergeable summaries}
In this section, we outline the proof of the mergeable summary lower bound. Specifically, we show with spectral gap $R = o(\sqrt{d})$, no mergeable summary algorithm can 0.1-approximate PCA using $O(d)$ space. Consider the scenario where there are $R$ players each possessing $d / R - 1$ vectors that are i.i.d.~drawn from $\mathcal{N}(0, I_d)$. We then insert another vector $v^* \sim \mathcal{N}(0, I_d)$ into a random location in each player's data list. 
Consequently, from the viewpoint of each participant, their dataset consists of $d / R$ i.i.d.\ Gaussian vectors, making it impossible for them to individually identify the shared vector $v^*$.

Now we consider the spectral properties of the overall data. With high probability, the variance in the direction of $v^*$ will be $\Theta(R)$ times larger than orthogonal directions. This implies that
\begin{enumerate}
    \item The spectral gap of the data is at least $\Theta(R)$. 
    \item The principal component of the data is very close to $v^*$.
\end{enumerate}
Therefore, we can employ a mergeable summary algorithm that approximates PCA to approximate $v^*$. Now we only need to prove that each player's summary must have at least $\omega(d)$ bits.

Suppose each player runs this algorithm and writes an $s$ bits long
summary $S$ of their data to help approximate $v^*$.  We measure the
amount of information $S$ contains about $v^*$, i.e., $I(v^*; S)$.
Because each player cannot distinguish $v^*$ from the other $d/R - 1$
vectors they have, we can prove that
$I(S; v^*) \leq H(S) / (d/R) = R \cdot O(s / d)$. With $R = o(\sqrt{d})$
players, the combined summaries have at most
$R \cdot I(S; v^*) \le R^2 \cdot O(s / d) = o(s)$ bits of information about
$v^*$. Since an approximation of $v^*$ has $\Theta(d)$ bits of
information about $v^*$, this requires that $s = \omega(d)$.

\subsection{Lower bound for high accuracy in insertion-only streams}

To give an $\Omega(d^2)$ lower bound for constant $R$, we construct a
two-player one-way communication game, where Alice feeds a uniformly
random stream into the algorithm and passes the state to Bob.  Bob
then repeatedly takes this state, adds a few more vectors, and extracts
the PCA estimate.  We will show that Bob is able to learn
$\Omega(d^2)$ bits about Alice's input, and therefore the stream state
must have $\Omega(d^2)$ bits.  Our approach is illustrated in
Figure~\ref{fig:lower}.

\begin{figure}[b]
  \begin{center}
    \begin{tabular}{|cccc|cccc|}
      \hline
      1&-1&-1&1&-1& -1&1&1\\
      \textcolor{red}{\textbf{1}}&\textcolor{red}{\textbf{1}}&\textcolor{red}{\textbf{1}}&\textcolor{red}{\textbf{-1}}&\textcolor{blue}{\textbf{1}}& \textcolor{blue}{\textbf{1}}&\textcolor{blue}{\textbf{-1}}&\textcolor{blue}{\textbf{-1}}\\
      -1&1&1&-1&-1& -1&-1&-1\\
      -1&-1&1&-1&-1& -1&1&-1\\
      \hline
      \textcolor{red}{\textbf{1}}&\textcolor{red}{\textbf{1}}&\textcolor{red}{\textbf{1}}&\textcolor{red}{\textbf{-1}}& 0 & 0 & 0 & 0\\
      \textcolor{red}{\textbf{1}}&\textcolor{red}{\textbf{1}}&\textcolor{red}{\textbf{1}}&\textcolor{red}{\textbf{-1}}& 0 & 0 & 0 & 0\\
      \textcolor{red}{\textbf{1}}&\textcolor{red}{\textbf{1}}&\textcolor{red}{\textbf{1}}&\textcolor{red}{\textbf{-1}}& 0& 0 & 0 & 0\\
      \hline
    \end{tabular}
  \end{center}
  \caption{High-accuracy lower bound approach: Alice inserts a sequence of random bits (all but the last row).  Bob knows the left side and wants to approximate the right side.  To estimate the \textbf{\textcolor{blue}{blue}} bits on the right, he adds $O(1)$ vectors using the corresponding \textbf{\textcolor{red}{red}} bits on the left and random bits on the right.  With high probability, the principal component has constant correlation with the blue bits.}\label{fig:lower}
\end{figure}

Suppose that Alice feeds in a random binary stream
$x_1, x_2, \dotsc, x_n \in \{-1, 1\}^{d}$.  What can Bob insert so the
PCA solution reveals information about (say) $x_1$?

First, suppose Bob inserted $k-1$ more copies of $x_1$ for some
constant $k$.  Then (if $n < d/100$) the PCA solution would be very
close to $x_1$: $v = \frac{x_1}{\norm{x_1}}$ has
$\norm{Xv}^2 \geq k d$ from just the copies of $x_1$, while every
orthogonal direction has variance at most
$(\sqrt{n} + \sqrt{d})^2 \approx 1.1d$ by standard bounds on singular
values of subgaussian matrices~\cite{rudelson2010non}.  Thus the
spectral ratio $R = \frac{\lambda_1}{\lambda_2} > \frac{k}{1.1}$, so
the streaming algorithm should return a vector highly correlated with
$x_1$.  The problem with this approach is that Bob cannot insert $x_1$
without knowing $x_1$, so the streaming PCA solution does not reveal
any \emph{new} information to him.

But what if Bob inserts $z_2, \dotsc, z_k$ that match $x_1$ on the
first half of bits, and are all 0 on the second half?  The top
principal component $u^*$ will still be highly correlated with $x_1$:
the vector $v$ that matches $x_1, z_2, \dotsc, z_k$ on the first half
of bits and is zero on the rest has variance that is a
$O(k)$ factor larger than any orthogonal direction. 

A more careful analysis shows that the top principal component $v^*$ is
not only correlated with the half of bits of $x_1$ shared
with the $z_i$, but (on the remaining bits) is very highly correlated
with the average $\frac{1}{k}(x_1 + z_2 + \dotsb + z_k)$.  In fact, it
is \emph{so} highly correlated with the average that $v^*$ must be at
least somewhat---$\Theta(1/k^2)$---correlated with the last $10\%$ of
bits in $x_1$.  This analysis is robust to a PCA approximation, so the
streaming PCA algorithm lets Bob construct $\wh{v}$ with constant
correlation with the last half of bits in $x_1$.

Thus Bob can learn $\Omega(d)$ bits about the first row by inserting
$z_2, \dotsc, z_k$ that match the first half of bits and looking at the
PCA solution on the last half of bits.  If he does this for every
row, he learns $\Omega(nd) = \Omega(d^2)$ bits about Alice's input.
Therefore the algorithm state Alice sent needs $\Omega(d^2)$ space.

This construction is very similar to the one in~\cite{woodruff2014low}
for lower-bounding low-rank Frobenius approximation.  The difference
in~\cite{woodruff2014low} is that Bob only inserts one row, so
necessarily $R < 2$. Our main contribution here is the more careful
analysis in terms of $R$.

\section{Proof of Upper Bound}

For most of this section we focus on Oja's method
(Theorem~\ref{thm:Oja1}), then in Section~\ref{sec:upper} we show
Theorem~\ref{thm:upper}.
For simplicity, the proof is given assuming exact arithmetic.  In
Section~\ref{sec:precision} we discuss why $O(\log (nd))$ bits of
precision suffice.

\paragraph{Setup.}  $\wh{v}_i$ is the normalized state at time $i$,
$v_i$ is the unnormalized state, $x_i$ is the sample, $\eta$ is the
learning rate, $v^*$ is the direction of maximum variance,
$P = I - v^* (v^*)^T$ to be the projection matrix that removes the
$v^*$ component.  Let $\sigma_1 = \eta \norm{X^TX}$ and
$\sigma_2 = \eta \norm{P X^TXP}$, so:
\begin{align}
  \sum_{i=1}^n \inner{v^*, x_i}^2 &= \sigma_1\label{eq:bigdir}\\
  \eta \sum_{i=1}^n \inner{w, x_i}^2 &\leq \sigma_2 & (\forall w \perp v^*)\label{eq:otherdir}
\end{align}
For much of the proof we will also need $\sigma_1 \geq C\log d$
and $\sigma_2 \leq \frac{1}{C \log n}$, but this will be stated as
needed.

Oja's algorithm works by starting with a (typically random) vector
$v_0$, then repeatedly applying Hebb's update rule that ``neurons that
fire together, wire together'':
\begin{align}\label{eq:unnormalized-update}
  v_i = v_{i-1} + \eta \inner{x_i, v_{i-1}} x_i = (I + \eta x_i x_i^T)v_{i-1}.
\end{align}
The algorithm only keeps track of the normalized vectors
$\wh{v}_i = v_i / \norm{v_i}$, but for analysis purposes we will often
consider the unnormalized vectors $v_i$.

The norm $\norm{v_i}$ grows in each step, according to
\begin{align}
  \norm{v_{i}}^2 = \norm{v_{i-1}}^2 (1 + (2\eta + \eta^2\norm{x}^2) \inner{x_i, \wh{v}_{i-1}}^2),
\end{align}
and in particular (since Theorem~\ref{thm:Oja1} assumes $\eta \norm{x_i}^2 \leq 1$)
\begin{align}
  \log \frac{\norm{v_i}^2}{\norm{v_{i-1}}^2} \geq \eta \inner{x_i, \wh{v}_{i-1}}^2.\label{eq:normstep}
\end{align}
Our goal is to show that $\wh{v}_n \approx v^*$, or equivalently, that
$\norm{P \wh{v}_n}$ is small.

\subsection{Initial Lemmas}

\begin{claim}\label{claim:prodab}
  Let $0 \leq a_1, a_2, \dotsc, a_n$ and define $b_i = e^{\sum_{j \leq i} a_i}$ for $i \in \{0, 1, \dotsc, n\}$.  Then:
  \[
    \sum_{i=1}^n a_i b_{i-1} \leq b_n - 1.
  \]
\end{claim}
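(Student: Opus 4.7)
The plan is to prove this by a simple telescoping argument driven by the elementary inequality $e^x - 1 \geq x$ for $x \geq 0$.

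First I would observe that the sequence $b_i$ satisfies a clean multiplicative recursion: since $b_i = e^{\sum_{j \leq i} a_j}$ and all $a_j$ are nonnegative, we have $b_0 = 1$ and $b_i = b_{i-1} \cdot e^{a_i}$. In particular each $b_i \geq b_{i-1} \geq \dotsb \geq b_0 = 1 \geq 0$. The consecutive differences then factor as
\[
  b_i - b_{i-1} = b_{i-1}(e^{a_i} - 1).
\]

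Next I would apply the inequality $e^{a_i} - 1 \geq a_i$, valid since $a_i \geq 0$. Combined with $b_{i-1} \geq 0$, this yields
\[
  b_i - b_{i-1} \geq a_i \, b_{i-1}
\]
for every $i \in \{1, \dotsc, n\}$. Summing both sides over $i$ from $1$ to $n$ telescopes the left-hand side:
\[
  \sum_{i=1}^n a_i b_{i-1} \leq \sum_{i=1}^n (b_i - b_{i-1}) = b_n - b_0 = b_n - 1,
\]
which is exactly the claimed bound.

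There is essentially no obstacle here; the only thing one has to be careful about is the direction of the inequality $e^x - 1 \geq x$ (which requires the factor $b_{i-1}$ to be nonnegative in order to be preserved under multiplication), and this is automatic since each partial sum $\sum_{j\leq i-1} a_j$ is nonnegative. So the entire proof is two lines once one spots the telescoping.
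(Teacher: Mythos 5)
Your proof is correct and is essentially the same argument as the paper's: the paper proceeds by induction on $n$ using $1 + a_n \leq e^{a_n}$ in the inductive step, which is exactly your per-index inequality $b_i - b_{i-1} = b_{i-1}(e^{a_i}-1) \geq a_i b_{i-1}$ unrolled into a telescoping sum.
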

\begin{proof}
  This follows from induction on $n$.  $n=0$ is trivial, and then
  \[
    \sum_{i=1}^n a_i b_{i-1} \leq b_{n-1} - 1 + a_n b_{n-1} = (1 + a_n) b_{n-1} - 1 \leq e^{a_n} b_{n-1} - 1 = b_n - 1.
  \]
\end{proof}

Define $B_i = \frac{\norm{v_i}^2}{\norm{v_0}^2}$, and $A_i = \log \frac{B_i}{B_{i-1}}$
which satisfies $A_i \geq \eta \inner{x_i, \wh{v}_{i-1}}^2$
by~\eqref{eq:normstep}.  Therefore
  \begin{align}
    \eta \sum_{i=1}^n \inner{x_i, v_{i-1}}^2 \leq \norm{v_0}^2 \sum_{i = 1}^n A_i B_{i-1} \leq \norm{v_0}^2 (B_n-1) = \norm{v_n}^2 - \norm{v_0}^2
    \label{eq:normthing}
  \end{align}
  by Claim~\ref{claim:prodab}.  Then for any unit vector $w$ with $Pw = w$,
  \begin{align*}
    \inner{v_n - v_0, w}^2 &= \left(\eta \sum_{i=1}^n \inner{x_i, v_{i-1}} \inner{x_i, w}\right)^2 && \text{by~\eqref{eq:unnormalized-update}}\\
                         &\leq \eta \sum_{i=1}^n \inner{x_i, v_{i-1}}^2  \cdot \eta \sum_{i=1}^n\inner{x_i, w}^2&& \text{by Cauchy-Schwarz}\\
                         &\leq (\norm{v_n}^2 - \norm{v_0}^2) \sigma_2. &&\text{by~\eqref{eq:normthing} and \eqref{eq:otherdir}}
  \end{align*}
  There's nothing special about the start and final indices, giving the following bound for general indices
  $a \leq b$:
  \begin{align}
    \inner{v_b - v_a, w}^2 &\leq (\norm{v_b}^2 - \norm{v_a}^2) \sigma_2\label{eq:vwbound}.
  \end{align}

\growthgivescorrectness*

\begin{proof}
  By~\eqref{eq:vwbound}, for any $w$ with $w = Pw$,
  \[
    \inner{v_i - v_0, w} \leq \sqrt{\sigma_2} \norm{v_i}.
  \]
  Hence
  \begin{align*}
    \inner{\wh{v}_i, w} = \frac{\inner{v_i - v_0, w} + \inner{v_0, w}}{\norm{v_i}} \leq \sqrt{\sigma_2} + \frac{\inner{v_0, w}}{\norm{v_i}}.
  \end{align*}
  Setting $w = P\wh{v}_i / \norm{P \wh{v}_i}$, we have
  $\inner{\wh{v}_i, w} = \norm{P \wh{v}_i}$ and
  $\inner{v_0, w} \leq \norm{P v_0}$, giving the result.
\end{proof}

Lemma~\ref{lem:growth-gives-correctness} implies that, if we start at
$v^*$, we never move by more than $\sqrt{\sigma_2}$ from it.  We now
show that you cannot even move $\sqrt{\sigma_2}$ without increasing the
norm of $v$.

\normalizedmovement*
\begin{proof}
  Define $w$ to be the unit vector in direction $P(\wh{v}_b - \wh{v}_a)$.  By~\eqref{eq:vwbound} we have
  \[
    \inner{v_b - v_a, w}^2 \leq \sigma_2(\norm{v_b}^2 - \norm{v_a}^2).
  \]
  Therefore
  \begin{align*}
    \norm{P\wh{v}_b - P\wh{v}_a}^2 = \inner{P(\wh{v}_b - \wh{v}_a), w}^2 &= \inner{\wh{v}_b - \wh{v}_a, w}^2\\
    &\leq 2\inner{\wh{v}_b - \frac{\norm{v_a}}{\norm{v_b}} \wh{v}_a, w}^2 + 2\inner{\frac{\norm{v_a}}{\norm{v_b}} \wh{v}_a - \wh{v}_a, w}^2\\
    &\leq 2 \frac{1}{\norm{v_b}^2}\inner{v_b - v_a, w}^2 + 2 (\frac{\norm{v_a}}{\norm{v_b}} - 1)^2 \norm{P \wh{v}_a}^2\\
                                                                          &\leq 2\sigma_2 (1 - \frac{\norm{v_a}^2}{\norm{v_b}^2}) + 2(1 - \frac{\norm{v_a}}{\norm{v_b}})^2\sigma_2 \\
                                                                          &= 4\sigma_2 (1 - \frac{\norm{v_a}}{\norm{v_b}}).
  \end{align*}
  Finally, $(1 - 1/x) \leq \log x$ for all $x > 0$.
\end{proof}

\subsection{Results on Sequences}
The following combinatorial result is written in a self-contained
fashion, independent of the streaming PCA application.

\begin{lemma}\label{lem:matsample}
  Let $A \in \R^{d \times n}$ have first column all zero.  Define
  $b^{(k)}_i$ to be column $1 + 2^k i$ of $A$.  Then:
  \[
    \sum_i \max_j A_{ij}^2 \leq (1+\log_2 n) \sum_{k=0}^{\log_2 n} \sum_{j > 0} \norm{b^{(k)}_j - b^{(k)}_{j-1}}^2.
  \]
\end{lemma}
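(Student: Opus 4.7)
The plan is to realize each column of $A$ as the endpoint of a short ``binary-telescoping'' path from the zero first column, where every hop is a consecutive difference in one of the subsampled sequences $b^{(k)}$. Fix a column index $j \in \{1, \dotsc, n\}$ and write $j - 1 = \sum_{k=0}^{K} \epsilon_k 2^k$ in binary, with $K = \lfloor \log_2 n \rfloor$ and $\epsilon_k \in \{0,1\}$. Setting $t_k := \sum_{\ell > k} \epsilon_\ell \, 2^{\ell - k}$, whenever $\epsilon_k = 1$ the step from column $1 + 2^k t_k$ to column $1 + 2^k (t_k + 1)$ is exactly $b^{(k)}_{t_k + 1} - b^{(k)}_{t_k}$. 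Scanning $k$ from $K$ down to $0$ produces a telescoping path from column $1$ to column $j$ that uses at most one hop per level $k$, hence at most $K + 1 \leq 1 + \log_2 n$ hops in total.

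Since the first column of $A$ is zero, telescoping along this path gives, for every coordinate $i$,
\[
  A_{ij} \;=\; \sum_{k : \epsilon_k = 1} \bigl(b^{(k)}_{t_k + 1} - b^{(k)}_{t_k}\bigr)_i.
\]
Cauchy-Schwarz on this sum of at most $1 + \log_2 n$ summands yields
\[
  A_{ij}^2 \;\leq\; (1 + \log_2 n) \sum_{k : \epsilon_k = 1} \bigl((b^{(k)}_{t_k + 1} - b^{(k)}_{t_k})_i\bigr)^2.
\]
For each row $i$ set $j^*_i := \argmax_j A_{ij}^2$, let $H(i)$ collect the hops $(k, t_k)$ appearing on its path, and sum the above over $i$:
\[
  \sum_i \max_j A_{ij}^2 \;\leq\; (1 + \log_2 n) \sum_i \sum_{(k,t) \in H(i)} \bigl((b^{(k)}_{t + 1} - b^{(k)}_{t})_i\bigr)^2.
\]

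The last step is a summation swap. For each fixed pair $(k, t)$ indexing a hop, its total contribution on the right-hand side is $\sum_{i \in S_{k,t}} \bigl((b^{(k)}_{t+1} - b^{(k)}_{t})_i\bigr)^2$, where $S_{k,t} := \{i : (k, t) \in H(i)\} \subseteq [d]$; this is trivially at most $\norm{b^{(k)}_{t+1} - b^{(k)}_{t}}^2$. Summing over all admissible $(k, t)$ yields exactly the claimed right-hand side. The only piece of real content is the binary-path construction; Cauchy-Schwarz and reordering are routine. The combinatorial insight is that doubling the spacing between hops at each successive level keeps the path length $O(\log n)$, which is precisely where the $(1 + \log_2 n)$ factor originates (and gives the $(1 + \log_2 n)^2$ bound in \cref{lem:matsamplesimple} once combined with the uniform subsequence hypothesis, via the $1 + \log_2 n$ outer levels).
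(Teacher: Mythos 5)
Your proof is correct and matches the paper's argument: the dyadic path you build via the binary expansion of $j-1$ is exactly the paper's $j^{(k)} = 1 + 2^k\lfloor (j^*-1)/2^k\rfloor$ telescoping, followed by the same Cauchy--Schwarz over the at most $1+\log_2 n$ hops. The only difference is cosmetic bookkeeping at the end (you group contributions by hop $(k,t)$, while the paper bounds each row's level-$k$ term by the full level-$k$ sum before summing over rows), which changes nothing substantive.
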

\begin{proof}
  We will show this separately for each row $i$; the result is just
  the sum over these rows.  For fixed $i$, let $j^* = \argmax_j A_{ij}^2$.

  Let $j^{(k)} = 1+2^k\floor{\frac{j^*-1}{2^k}}$ set the last $k$ bits of
  $j^*-1$ to zero.  We have that $j^{(0)} = j^*$ and $j^{\log_2 n} = 0$.  Therefore
  \[
    A_{ij^*} = \sum_{k=0}^{\log_2 n} (A_{i, j^{(k)}} - A_{i, j^{(k+1)}}).
  \]
  Now, $j^{(k)}$ is either $j^{(k+1)}$ or $j^{(k+1)} + 2^k$.  Each
  value in the right sum is either zero (if $j^{(k)}$ is $j^{(k+1)}$)
  or the $i$th coordinate of $b^{(k)}_{j'} - b^{(k)}_{j'-1}$ for some $j'$ (if
  $j^{(k)} = j^{(k+1)}+2^k$, using $j' = j^{(k)}/2^k$).  Thus, by
  Cauchy-Schwarz,
  \begin{align*}
    A_{ij^*}^2 &\leq (1+\log_2 n) \cdot \sum_{k=0}^{\log_2 n} (A_{i, j^{(k)}} - A_{i, j^{(k+1)}})^2\\
    &\leq (1+\log_2 n) \cdot \sum_{k=0}^{\log_2 n} \sum_{j > 0} ((b^{(k)}_j)_i - (b^{(k)}_{j-1})_i)^2.
  \end{align*}
  Summing over $i$,
  \[
    \sum_i \max_j A_{ij}^2 \leq (1+\log_2 n) \sum_{k=0}^{\log_2 n} \sum_{j > 0} \norm{b^{(k)}_j - b^{(k)}_{j-1}}^2.
  \]
\end{proof}

\subsection{Proof of Growth}

We return to the streaming PCA setting.  The goal of this section is
to show that, if $v_0 = v^*$, then $\norm{v_n}$ is large.

\uprodlem*
\begin{proof}
  Define $u_i = P \wh{v}_i$.
  We apply Lemma~\ref{lem:matsample} to the matrix $A_{ij} = \inner{x_i, u_{j-1}}$ for $i, j \in [n]$, getting:
  \[
    \sum_{i=1}^n \max_{j \leq n-1} \inner{x_i, u_j}^2 \leq (1+\log_2 n) \sum_{k=0}^{\log_2 n} \sum_{j > 0} \sum_{i=1}^n (\inner{x_i, u_{2^k j}} - \inner{x_i, u_{2^k (j-1)}})^2.
  \]
  Now,
  \begin{align*}
    \sum_{i=1}^n (\inner{x_i, u_{2^k j}} - \inner{x_i, u_{2^k (j-1)}})^2 &= (u_{2^k j} - u_{2^k (j-1)}) X^TX (u_{2^k j} - u_{2^k (j-1)})\\
    &\leq \frac{\sigma_2}{\eta} \norm{u_{2^k j} - u_{2^k (j-1)}}^2.
  \end{align*}
  by the assumption~\eqref{eq:otherdir} on $X$ and that every
  $u_j \perp v^*$.  Then, for each $k$, Lemma~\ref{lem:normalized-movement} shows that
  \[
    \sum_{j > 0} \norm{u_{2^k j} - u_{2^k (j-1)}}^2 \leq    4\sigma_2 \log \frac{\norm{v_n}}{\norm{v_0}} = 4\sigma_2 \log \norm{v_n}
  \]
  and thus
  \[
    \eta\sum_i \inner{x_i, P \wh{v}_{i-1}}^2 \leq \eta \sum_i \max_j \inner{x_i, u_j}^2 \leq (1 + \log_2 n) \sum_{k=0}^{\log_2 n} 4\sigma_2^2 \log \norm{v_n} \lesssim \sigma_2^2  \log^2 n \log \norm{v_n}
  \]
  as desired.
\end{proof}

\rightgrowthlem*
\begin{proof}
  We will show that $\eta \sum_{i=1}^n \inner{x_i, \wh{v}_{i-1}}^2 \gtrsim \sigma_1$, giving the result by~\eqref{eq:normstep}.

  Recall that $(x + y)^2 \geq \frac{1}{2}x^2 - y^2$ for all $x, y$.
  Thus, if $\wh{v}_i = a_i v^* + u_i$ for $u_i \perp v^*$, we have
  \[
    \inner{x_i, \wh{v}_{i-1}}^2 \geq \frac{a_{i-1}^2}{2}\inner{x_i, v^*}^2 - \inner{x_i, u_{i-1}}^2.
  \]
  Lemma~\ref{lem:growth-gives-correctness} shows that
  $a_i^2 \geq 1 - \sigma_2 \geq \frac{1}{2}
  $, so summing up over $i$,
  \[
    \eta \sum_{i=1}^n \inner{x_i, \wh{v}_{i-1}}^2 \geq \frac{1}{4} \sigma_1 - \eta \sum_{i=1}^n \inner{x_i, u_{i-1}}^2.
  \]
  Then~\eqref{eq:normstep} and Lemma~\ref{lem:uprod}  give
  \[
    \log \norm{v_n} \geq \frac{1}{2}    \eta \sum_{i=1}^n \inner{x_i, \wh{v}_{i-1}}^2 \geq \frac{1}{8}\sigma_1 - O(\sigma_2^2 \log^2 n \log \norm{v_n}),
  \]
  or
  \[
    \log \norm{v_n} \gtrsim \frac{\sigma_1}{1 + \sigma_2^2 \log^2 n}.
  \]
\end{proof}

\begin{claim}\label{claim:gaussianvecnorm}
  Let $a \sim N(0, 1)$.  For any two vectors $u$ and $v$, with probability $1-\delta$,
  \[
    \norm{au + v} \geq \delta \sqrt{\pi/2} \norm{u}.
  \]
\end{claim}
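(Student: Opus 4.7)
The plan is to reduce the problem to a one-dimensional statement about shifted Gaussians by decomposing $v$ along and orthogonal to $u$. Specifically, write $v = cu + v_\perp$ where $c = \inner{v,u}/\norm{u}^2$ and $v_\perp \perp u$. Then $au + v = (a+c)u + v_\perp$, and since $u \perp v_\perp$,
\[
  \norm{au+v}^2 = (a+c)^2 \norm{u}^2 + \norm{v_\perp}^2 \geq (a+c)^2 \norm{u}^2,
\]
so it suffices to show $|a+c| \geq \delta \sqrt{\pi/2}$ with probability at least $1-\delta$.

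The random variable $a+c$ is distributed as $N(c, 1)$, whose density is uniformly bounded by $\frac{1}{\sqrt{2\pi}}$. Hence for any $t > 0$,
\[
  \Prb{|a+c| \leq t} \leq \int_{-t}^{t} \frac{1}{\sqrt{2\pi}}\, dx = \frac{2t}{\sqrt{2\pi}}.
\]
Setting $t = \delta\sqrt{\pi/2}$ makes the right-hand side equal to $\delta$, which gives the claim.

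There is no real obstacle here; the only thing to be careful about is matching the numerical constant, which works out exactly because $\frac{2 \cdot \delta\sqrt{\pi/2}}{\sqrt{2\pi}} = \delta$. The argument does not use anything about $v$ beyond its decomposition along $u$, so it holds uniformly in $v$.
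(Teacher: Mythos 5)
Your proof is correct and follows essentially the same route as the paper: reduce to the one-dimensional statement by discarding the component of $v$ orthogonal to $u$ (which only helps), then bound the probability that a unit-variance Gaussian shifted by $c$ lands in an interval of half-width $\delta\sqrt{\pi/2}$ using the density bound $1/\sqrt{2\pi}$. The only difference is that you write out the orthogonal decomposition and the constant check explicitly, which the paper leaves implicit.
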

\begin{proof}
  First, without loss of generality $v$ is collinear with $u$; any
  orthogonal component only helps.  So we can only consider
  real-valued $u$ and $v$, and in fact rescale so $u = 1$.  The claim
  is then: with probability $1-\delta$, a sample from $N(v, 1)$ has
  absolute value at least $\delta\sqrt{\pi/2}$.  This follows from the
  standard Gaussian density being at most $1/\sqrt{2\pi}$.
\end{proof}

\thmOja*

\begin{proof}
  We assume that $\eta \norm{x_i}^2 \leq 1$ for all $i$, since the
  theorem is otherwise vacuous.

  We begin with the last statement.  Algorithm~\ref{alg:oja} only
  returns $\wh{v} \neq \perp$ if $s_n = \log \frac{\norm{v_n}}{\norm{v_0}} > 10 \log d$.
  But then by Lemma~\ref{lem:growth-gives-correctness},
  \[
    \norm{P\wh{v}_n} \leq \sqrt{\sigma_2} + \frac{\norm{v_0}}{\norm{v_n}} \leq  \sqrt{\sigma_2} + d^{-10}.
  \]

  All that remains is to show that, if $\sigma_1 > C \log d$ and
  $\sigma_2 < \frac{1}{C \log n}$, $\wh{v} \neq \perp$ with at least
  $1 - d^{-\Omega(C)}$ probability.  And of course,
  $\wh{v} \neq \perp$ if $\frac{\norm{v_n}}{\norm{v_0}} \geq d^{10}$.

  Oja's algorithm starts with $\wh{v}_0$ uniformly on the sphere, and
  is indifferent to the initial scale $\norm{v_0}$, so $v_0$ could be
  constructed as $\frac{v_0}{\norm{v_0}}$ for $v_0 \sim N(0, I_d)$.

  Let $v_0 = a v^* + u$ for $u \perp v^*$.  Let
  $B = \prod_{i=1}^n (I + \eta x_i x_i^T)$, so $v_n = B v_0$.

  By Lemma~\ref{lem:growth} and the bound on $\sigma_2$, we know
  $\norm{B v^*} \geq e^{c'\sigma_1}$ for some constant $c'$.  Then by
  Claim~\ref{claim:gaussianvecnorm}, with probability $1-\delta$,
  \[
    \norm{v_n} = \norm{aBv^* + Bu} \geq \delta \sqrt{\pi/2} \norm{B v^*} \geq \delta e^{c' \sigma_1}.
  \]
  The (very naive) Markov bound from $\E[\norm{v_0}^2] = d$ gives that
  \[
    \frac{\norm{v_n}}{\norm{v_0}} \geq \frac{\delta^{3/2} e^{c' \sigma_1}}{\sqrt{d}}
  \]
  with probability $1-2\delta$.  For sufficiently large $C$ in
  $\sigma_1 \geq C \log d$, this gives
  \[
    \frac{\norm{v_n}}{\norm{v_0}} \geq d^{10}
  \]
  with probability $1 - d^{-\Omega(C)}$.
\end{proof}

\subsection{Proof of Theorem~\ref{thm:upper}}\label{sec:upper}

\thmupper*
\begin{proof}
  Let $C$ be the constant in Theorem~\ref{thm:Oja1}.  For $R$ to be
  well defined, $\lambda_1 \neq 0$ so some $x_i \neq 0$.  Therefore
  $2^{-2b} \leq \lambda_1 \leq nd^2 2^{2b}$.  Thus one of the $\eta_i$
  considered in Algorithm~\ref{alg:unknownrate} is such that
  $\eta n \lambda_1 \in [C \log d, 2 C \log d]$.  Let $\wh{i}$ be this
  $i$.  For sufficiently large constant in the choice of $R$, we have
  \[
    \eta_i n \lambda_2 \leq \frac{1}{C \log n}
  \]
  for all $i \leq \wh{i}$.

  Let $\overline{x}$ be the $x_i$ of maximum norm, as computed by the
  algorithm.  We now show that
  $\wh{x} := \frac{\overline{x}}{\norm{\overline{x}}}$ is a sufficiently good answer if $\eta_{\wh{i}} \norm{\overline{x}}^2 \geq 1$.
  Decompose $\overline{x} = a v^* + b w$
  for $w \perp v^*$ a unit vector. By~\eqref{eq:otherdir}, $b$ is
  fairly small:

  \[
    b^2 \leq \eta_{\wh{i}} \sum_i \inner{x_i, w}^2 \leq \norm{PX^TXP} = n \lambda_2 \leq \frac{2 C \log d}{R \eta_{\wh{i}} }.
  \]
  The unit vector $\wh{x}$ in direction $\overline{x}$ has error
  \begin{align}
    \norm{P\wh{x}}^2 = \frac{b^2}{\norm{\overline{x}}^2} \leq \frac{2 C \log d}{R \eta_{\wh{i}} \norm{\overline{x}}^2} \lesssim \frac{\log d}{R \eta_{\wh{i}} \norm{\overline{x}}^2}.\label{eq:overbaraccuracy}
  \end{align}
  Therefore if $\eta_{\wh{i}} \norm{\overline{x}}^2 \geq 1$, $\wh{x}$ is a sufficiently accurate answer.

  The last statement in Theorem~\ref{thm:Oja1} shows that, if  $\eta_{i^*} \norm{\overline{x}}^2 \leq 1$ and $i^* \leq \wh{i}$, then
  \begin{align}
    \norm{Pv^{(i^*)}}^2 \leq (\sqrt{ \eta_{i^*} n \lambda_2} + d^{-9})^2 \lesssim \eta_{i^*} n \lambda_2 + d^{-18} \leq \frac{2 C \log d}{R} + d^{-18}\label{eq:OjaOutput}
  \end{align}
  which is sufficiently accurate.  We now split into case analysis.

  \textbf{In one case}, suppose $\eta_{\wh{i}}\norm{\overline{x}}^2 < 1$.
  Therefore the main body of Theorem~\ref{thm:Oja1} states that
  $v^{(\wh{i})} \neq \perp$ with high probability.  In particular,
  this means $i^* \leq \wh{i}$, so
  $\eta_{i^*} \norm{\overline{x}}^2 < 1$, and the algorithm's
  answer is $v^{(i^*)}$ which is sufficiently accurate by~\eqref{eq:OjaOutput}.

  \textbf{Otherwise}, $\eta_{\wh{i}}\norm{\overline{x}}^2 \geq 1$.
  Then outputting $\overline{x}$ is sufficiently accurate
  by~\eqref{eq:overbaraccuracy}.  If $i^* \geq \wh{i}$, the algorithm
  will definitely output $\overline{x}$; if $i^* < \wh{i}$, the
  algorithm might output $v^{(i^*)}$, but only if
  $\eta_{i^*} \norm{\overline{x}}^2 < 1$, in which case this is
  sufficiently accurate by~\eqref{eq:OjaOutput}.
\end{proof}

\subsection{Precision}\label{sec:precision}

Finally, we discuss why $O(\log(nd))$ bits of precision suffice for
the algorithm.  Algorithm~\ref{alg:oja} tracks two values: a unit
vector $\wh{v}_i$ and the log-norm $s_i$ of the unnormalized $v_i$.
The main concern is that the error in $\wh{v}_i$ could compound.

Consider $\wh{v}_i$ and $s_i$ to be the values computed by the
algorithm, which has some $\eps = \frac{1}{\poly(nd)}$ error (in
$\ell_2$) added in each iteration.  We can enforce $s_i \geq s_{i-1}$
despite the error.  Redefine $v_i$ to $2^{s_i} \wh{v}_i$.

We now redo the proof of~\eqref{eq:vwbound} with $\eps$ error in each step.
Define $B_i = \frac{\norm{v_i}^2}{\norm{v_0}^2} = 2^{s_i}$, and
$A_i = \log \frac{B_i}{B_{i-1}} = (s_i - s_{i-1})$ which satisfies
$A_i \geq \eta \inner{x_i, \wh{v}_{i-1}}^2 - O(\eps)$
by~\eqref{eq:normstep}.  Therefore
\begin{align}
  \eta\sum_{i=1}^n \inner{x_i, v_{i-1}}^2 \leq \norm{v_0}^2 \sum_{i = 1}^n (A_i + O(\eps)) B_{i-1} \leq \norm{v_0}^2 ((B_n-1) + O(\eps n B_n)) = \norm{v_n}^2 - \norm{v_0}^2 + O(\eps n \norm{v_n}^2)
  \label{eq:normthingprec}
\end{align}
by Claim~\ref{claim:prodab}.  Then for any unit vector $w$ with $Pw = w$,
\begin{align*}
  \inner{v_n - v_0, w}^2 &= (\sum_{i=1}^n \inner{v_i - v_{i-1}, w})^2\\
                         &= (\sum_{i=1}^n \eta \inner{x_i, v_{i-1}} \inner{x_i, w} + O(\eps) \norm{v_{i-1}})^2\\
                         &= \left(O(n \eps \norm{v_n}) + \eta \sum_{i=1}^n \inner{x_i, v_{i-1}} \inner{x_i, w}\right)^2 && \text{by~\eqref{eq:unnormalized-update}}\\
                         &\leq \eta \sum_{i=1}^n \inner{x_i, v_{i-1}}^2  \cdot \eta \sum_{i=1}^n\inner{x_i, w}^2 + O(n^2 \eps \norm{v_n}^2)&& \text{by Cauchy-Schwarz}\\
                         &\leq (\norm{v_n}^2 - \norm{v_0}^2) \sigma_2 + O(\eps n^2 \norm{v_n}^2). &&\text{by~\eqref{eq:normthingprec} and \eqref{eq:otherdir}}
\end{align*}
There's nothing special about the start and final indices, giving the following bound for general indices
$a \leq b$:
\begin{align}
  \inner{v_b - v_a, w}^2 &\leq (\norm{v_b}^2 - \norm{v_a}^2) \sigma_2 + O(\eps n^2 \norm{v_b}^2)\label{eq:vwbound-prec}.
\end{align}

Given~\eqref{eq:vwbound-prec}, the error tolerance flows through the
rest of the proof easily.  Lemmas~\ref{lem:growth-gives-correctness}
and~\ref{lem:normalized-movement} follow immediately with
$O(\eps n^2)$ additive error.  Lemma~\ref{lem:uprod} gets additive
error $O(\sigma_2 \eps n^3 \log^2 n)$, so both the numerator and
denominator of Lemma~\ref{lem:growth} change by $\eps \poly(n)$.  Both
the conditions and result of Theorem~\ref{thm:Oja1} only change by an
additive $\eps \poly(n)$ error, which for sufficiently small
polynomial $\eps$ are absorbed by the constant factors and
$\frac{1}{d^9}$ additive error.  And Algorithm~\ref{alg:unknownrate}
does nothing that could compound the error by more than a constant
factor, so Theorem~\ref{thm:upper} holds as well.

\section{Lower Bound for Mergeable Summaries}
In this section, we show that any all the mergeable summaries require $\Omega(d^2 / R^2)$ bits of space, even just to approximate PCA with 0.1 error. This is significantly worse than our upper bound for streaming algorithms.
\mergeablelowerbound*
    To prove the theorem, for $p > 1$, we define distribution $\mathcal{D}_p$ over $\R^{d \times d}$ such that $X \sim \mathcal{D}_p$ is drawn according to the following randomized procedure:
    \begin{enumerate}
        \item Sample $v^* \sim \mathcal{N}(0, I_d)$.
        \item Define $k := d / p$. For each $i \in [p]$, we sample a $X^{(i)} \in \R^{k \times d}$ as follows: Randomly choose a $j_i^* \in [k]$ and set $X^{(i)}_{j_i^*}$ to be $v^{*}$.
        For $j \in [k] \setminus \{j_i^*\}$, independently sample $X^{(i)}_{j} \sim \mathcal{N}(0, I_d)$. Let $X^{(i)} := (X^{(i)}_1, X^{(i)}_2, \dots, X^{(i)}_k)^T$.
        \item Finally, let $X$ be the concatenation of all the $X^{(i)}$'s, i.e., \[
        X := \begin{pmatrix}
    X^{(1)} \\
    X^{(2)} \\
    \vdots \\
    X^{(p)}
    \end{pmatrix}.
        \]
    \end{enumerate}

We first show that for $X \sim \mathcal{D}_p$, $X^TX$ has a large spectral gap with high probability:

\begin{lemma}
    \label{lem:sketching_spectral_gap}
    For $X \sim \mathcal{D}$,  with $1 - o(1)$ probability, \[
        \min_{\substack{v' \perp v^*\\\norm{v'} = 1}} \frac{\|X\wh{v^*}\|^2} {\norm{Xv'}^2} \ge 0.1p.
    \]
    Furthermore, the spectral gap of $X^TX$ is at least $0.1p$.
\end{lemma}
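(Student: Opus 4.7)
The plan is to split $X$ into its $p$ \emph{signal} rows (those equal to $v^*$, one per block) and its $d-p$ \emph{noise} rows (i.i.d.\ $\mathcal N(0,I_d)$, independent of $v^*$ by construction). Stack the noise rows into a matrix $G \in \R^{(d-p)\times d}$. Then $X^TX = p\, v^*(v^*)^T + G^TG$, so the two extremal quantities we care about decouple cleanly.

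First I would lower bound the numerator. We have $\|X\widehat{v^*}\|^2 \geq p\|v^*\|^2$, dropping the nonnegative contribution of the noise rows. Since $\|v^*\|^2 \sim \chi^2_d$, standard Gaussian concentration gives $\|v^*\|^2 \geq 0.9\,d$ with probability $1 - e^{-\Omega(d)}$. Hence $\|X\widehat{v^*}\|^2 \geq 0.9\,pd$ with high probability.

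Next I would upper bound the denominator uniformly over $v' \perp v^*$ with $\|v'\|=1$. For such $v'$, every signal row contributes $\langle v^*,v'\rangle^2 = 0$, so $\|Xv'\|^2 = \|Gv'\|^2 \leq \sigma_{\max}(G)^2$. Because $G$ is independent of $v^*$, the standard tail bound for the operator norm of a Gaussian matrix (Rudelson--Vershynin / Davidson--Szarek) gives $\sigma_{\max}(G) \leq \sqrt{d-p} + \sqrt{d} + O(\sqrt{\log d}) \leq (2+o(1))\sqrt{d}$ with probability $1-o(1)$, hence $\sigma_{\max}(G)^2 \leq 4d$ whp. Dividing gives $\min_{v'\perp v^*}\|X\widehat{v^*}\|^2 / \|Xv'\|^2 \geq 0.9\,pd / (4d) \geq 0.1p$, establishing the first claim.

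For the spectral gap of $X^TX$, I would observe that $\lambda_1(X^TX) \geq \|X\widehat{v^*}\|^2 \geq 0.9\,pd$ by the Rayleigh quotient, while by the min--max characterization, taking the test subspace to be the orthogonal complement of $v^*$, $\lambda_2(X^TX) \leq \max_{v'\perp v^*,\,\|v'\|=1} \|Xv'\|^2 \leq 4d$ from the previous paragraph. Therefore $\lambda_1/\lambda_2 \geq 0.9p/4 \geq 0.1p$, as claimed. There is no real obstacle here; the only thing to be slightly careful about is that $v^*$ and $G$ are genuinely independent under $\mathcal D_p$, which is immediate from the sampling procedure, and that $\|v^*\|^2$ and $\|G\|_{\mathrm{op}}$ concentrate on the correct scales so that the $p$ factor in the numerator is not washed out by the $d-p$ Gaussian contributions.
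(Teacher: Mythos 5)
Your proof is correct and follows essentially the same route as the paper: decompose $X$ into the $p$ copies of $v^*$ plus the independent Gaussian rows, lower bound $\|X\wh{v^*}\|^2 \geq p\norm{v^*}^2 \gtrsim pd$ via $\chi^2$ concentration, upper bound $\norm{Xv'}^2$ for $v'\perp v^*$ by the operator norm of the Gaussian block ($O(d)$ whp), and conclude both the ratio bound and the eigenvalue gap via min--max. The only difference is the constant in the operator-norm bound ($2\sqrt d$ vs.\ the paper's $3\sqrt d$), which is immaterial.
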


\begin{proof}
    We can decompose the rows of $X$ into two parts: repetitions of $v^*$ and other randomly sampled rows. Define $\wt{X} \in \R^{(d - p) \times d}$ as $X$ excluding the  $v^*$'s in each $X^{(i)}$. 
    
    For an arbitrary unit vector $u \in \R^d$, $X$'s variance on $u$ is equal to \[
        \|X u\|^2 = p \inner{v^*, u}^2 + \|\wt{X}u\|^2.
    \]
    Therefore, $X$'s variance at the direction of $v^*$ is  \[
        \|X \wh{v^*}\|^2 = p \inner{v^*, \wh{v^*}}^2 + \|\wt{X} \wh{v^*}\|^2 \ge  p \inner{v^*, \wh{v^*}}^2 = p\norm{v^*}^2.
    \]
     For any unit vector $u$ orthogonal to $v^*$, we have \[
        \|X u\|^2 = p \inner{v^*, u}^2 + \|\wt{X}u\|^2 = \|\wt{X}u\|^2 \le \|\wt{X}\|^2.
     \]
     Utilizing this, we have \[
       \min_{\substack{v' \perp v^*\\\norm{v'} = 1}} \frac{\|X\wh{v^*}\|^2} {\norm{Xv'}^2} \ge \frac{p\|v^*\|^2}{\|\wt{X}\|^2}.
     \]

      Note that $v^* \sim \mathcal{N}(0, I_d)$, by Lemma~\ref{lem:chi_squared_concentration}, $\norm{v^*}^2 \ge 0.9d$ holds with probability at least $1 - o(1)$.
      In addition, since every instance in $\wt{X}$ follows $\mathcal{N}(0, 1)$ independently, by Lemma~\ref{lem:rudelsonvershynin}, we have \[
        \Pr\left[\|\wt{X}\| \ge 3\sqrt{d}\right] \le o(1).
      \]
      This states that with probability $1 - o(1)$, \[
        \min_{\substack{v' \perp v^*\\\norm{v'} = 1}} \frac{\|X\wh{v^*}\|^2} {\norm{Xv'}^2} \ge \frac{p\|v^*\|^2}{\|\wt{X}\|^2} \ge \frac{0.9pd}{9d} = 0.1p.
      \]
      Furthermore, the spectral gap of $X^TX$ is given by\[
       \max_{\norm{v} = 1} \min_{\substack{v' \perp v\\\norm{v'} = 1}} \frac{\norm{Xv}^2}{\norm{Xv'}^2} \ge \min_{\substack{v' \perp v^*\\\norm{v'} = 1}} \frac{\|X\wh{v^*}\|^2} {\norm{Xv'}^2} \ge 0.1p.
      \]
\end{proof}

\begin{lemma}
    \label{lem:v*_close_to_PCA}
    For $X \sim \mathcal{D}$, let $v$ be the top eigenvalue of $X^TX$. Let $\alpha$ be an arbitrarily small positive constant. There exists a constant $C$ such that when $p \ge C$, with $1 - o(1)$ probability, $\sin^2(\wh{v^*}, v) \le \alpha^2$.
\end{lemma}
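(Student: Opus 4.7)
The plan is to use the spectral bounds already established in Lemma~\ref{lem:sketching_spectral_gap}. Condition on the $1-o(1)$ event that both $\|X\wh{v^*}\|^2 \geq p\|v^*\|^2 \geq 0.9pd$ and $\|\wt{X}\|^2 \leq 9d$ hold; on this event, every unit vector $u \perp v^*$ satisfies $\|Xu\|^2 = \|\wt{X}u\|^2 \leq 9d$, so $\|X\wh{v^*}\|^2$ already dominates $\|Xu\|^2$ by a factor $\Omega(p)$. The entire proof then reduces to a one-shot Davis--Kahan style argument exploiting this gap.

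Decompose the top eigenvector of $X^TX$ as $v = \cos\theta \cdot \wh{v^*} + \sin\theta \cdot u$ for some unit vector $u \perp v^*$ (WLOG $\sin\theta \geq 0$); the quantity $\sin^2(\wh{v^*}, v)$ is exactly $\sin^2\theta$. Since $v$ maximizes $\|X\cdot\|^2$ over unit vectors, in particular $\|Xv\|^2 \geq \|X\wh{v^*}\|^2$. Expanding $\|Xv\|^2$ under the decomposition and rearranging yields
\[
\sin\theta\,(\|X\wh{v^*}\|^2 - \|Xu\|^2) \;\leq\; 2\cos\theta\,\inner{X\wh{v^*},\,Xu}.
\]
Once $p$ exceeds a small universal constant, the event gives $\|X\wh{v^*}\|^2 - \|Xu\|^2 \geq \tfrac{1}{2}\|X\wh{v^*}\|^2$, so the left-hand coefficient of $\sin\theta$ is a constant fraction of $\|X\wh{v^*}\|^2$. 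Bounding the right-hand cross term by Cauchy--Schwarz as $2\|X\wh{v^*}\|\|Xu\|$ and dividing through gives
\[
\tan\theta \;\lesssim\; \frac{\|Xu\|}{\|X\wh{v^*}\|} \;\lesssim\; \frac{1}{\sqrt{p}},
\]
so $\sin^2\theta \leq \tan^2\theta = O(1/p)$.

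Choosing $C$ sufficiently large (as a function of $\alpha$) so that the hidden constant satisfies $O(1/C) \leq \alpha^2$ then yields $\sin^2(\wh{v^*}, v) \leq \alpha^2$ with the same $1-o(1)$ probability. The only subtlety worth being careful about is separating the two uses of ``$p$ large'': we first need $p$ larger than a universal constant so that $\|X\wh{v^*}\|^2 - \|Xu\|^2$ is a constant fraction of $\|X\wh{v^*}\|^2$ (making the coefficient of $\sin\theta$ benign), and then a second, $\alpha$-dependent threshold so that $O(1/p) \leq \alpha^2$. There is no genuine obstacle; the argument is essentially the observation that a spectral-gap lower bound of $\Omega(p)$ pins the top eigenvector to within angle $O(1/\sqrt{p})$ of any witness direction achieving top variance.
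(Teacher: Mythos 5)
Your proof is correct and takes essentially the same approach as the paper's: both decompose the top eigenvector into its $\wh{v^*}$ component plus an orthogonal unit vector, invoke Lemma~\ref{lem:sketching_spectral_gap} for the $\Omega(p)$ variance ratio, and control the cross term via Cauchy--Schwarz. The only difference is presentational --- you rearrange to get the direct quantitative bound $\sin^2\theta = O(1/p)$, whereas the paper assumes $\eps > \alpha$ and derives the contradiction $\norm{Xv}^2 < \nnorm{X\wh{v^*}}^2$ --- plus the minor point that you should also fix the sign of $v$ so the $\wh{v^*}$ coefficient is nonnegative (as the paper does implicitly) before dividing by $\cos\theta$.
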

\begin{proof}
    Without loss of generality, we express $v$ as $
        v = \sqrt{1 - \eps^2} \wh{v^*} + \eps u$
    for some $\eps > 0$ and unit vector $u$ orthogonal to $v^*$.
    We only need to prove that $\eps \le \alpha$.

    We have
    \begin{align*}
        \norm{Xv}^2 \le& (1 - \eps^2)\nnorm{X\wh{v^*}}^2 + \eps^2\norm{Xu}^2 + 2\sqrt{1 - \eps^2}\eps \nnorm{X\wh{v^*}}\cdot \norm{Xu}.
    \end{align*}
    By \cref{lem:sketching_spectral_gap}, we have with probability $1 - o(1)$, $\nnorm{X\wh{v^*}}^2 \ge 0.1p\nnorm{Xu}^2$. Therefore, \[
        \norm{Xv}^2 \le \nnorm{X\wh{v^*}}^2(1 - \eps^2 + \eps^2\frac{10}{p} + 2\eps\sqrt{\frac{10}{p}}).
    \]
    When $\eps > \alpha$, there exists a constant $C > 0$ such that for $p \ge C$, \[
         \nnorm{X\wh{v^*}}^2(1 - \eps^2 + \eps^2\frac{10}{p} + 2\eps\sqrt{\frac{10}{p}}) \le \nnorm{X\wh{v^*}}^2(1 - \frac{\eps^2}{2}) \le \nnorm{X\wh{v^*}}^2.
    \]
    This contradicts the assumption that the direction of $v$ has larger variance than $v^*$. This proves the lemma.
\end{proof}

Let $\mathcal{A}$ be an arbitrary deterministic mergeable summary for PCA that satisfies 
\begin{equation}
    \label{eq:sketching_succeeds}
    \Pr_{X \sim \mathcal{D}_p}[\sin^2(\mathcal{A}(X), v^*) \le 0.105] \ge 0.9.    
\end{equation}

\begin{lemma}
    \label{lem:A_lower_bound}
    $\mathcal{A}$ requires $\Omega(d^2 / p^2)$ bits of space.
\end{lemma}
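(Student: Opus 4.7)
The plan is to prove an information-theoretic lower bound: the merged summary must contain $\Omega(d)$ bits of information about $v^*$, yet each individual summary can contain at most $O(ps/d)$ bits about $v^*$, and these aggregate to at most $O(p^2 s/d)$. Because $\mathcal{A}$ is mergeable, the output $\hat v := \mathcal{A}(X)$ is a deterministic function of the per-player summaries $S^{(i)} := \mathcal{A}(X^{(i)})$, each of bit length at most $s$, and by data processing $I(v^*; \hat v) \leq I(v^*; S^{(1)}, \ldots, S^{(p)})$. The rest of the proof lower-bounds the left-hand side and upper-bounds the right-hand side, and then solves for $s$.

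For the lower bound: by~\eqref{eq:sketching_succeeds}, with probability $0.9$ the unit vector $\hat v$ lies within a spherical cap of constant angle around $u^* := v^*/\|v^*\|$. Since $v^* \sim \mathcal{N}(0, I_d)$ makes $u^*$ uniform on $S^{d-1}$ and such a cap has relative measure $2^{-\Omega(d)}$, a standard Fano-type argument applied to a $\Theta(1)$-angle net of $S^{d-1}$ of cardinality $2^{\Omega(d)}$ yields $I(v^*; \hat v) \geq I(u^*; \hat v) = \Omega(d)$.

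For the per-player bound: crucially, since $v^*$ is itself $\mathcal{N}(0, I_d)$, after marginalizing over the uniform position $J_i$ the rows $X^{(i)}_1, \ldots, X^{(i)}_k$ of player $i$'s data are in fact i.i.d.~$\mathcal{N}(0, I_d)$, with $v^* = X^{(i)}_{J_i}$ and $J_i$ independent of $X^{(i)}$. The standard subadditivity identity for independent variables, $\sum_{j=1}^k I(S^{(i)}; X^{(i)}_j) \leq I(S^{(i)}; X^{(i)}) \leq H(S^{(i)}) \leq s$, combined with the symmetry of $J_i$, then gives
\[
  I(S^{(i)}; v^*) \;=\; I(S^{(i)}; X^{(i)}_{J_i} \mid J_i) \;=\; \tfrac{1}{k}\sum_{j=1}^k I(S^{(i)}; X^{(i)}_j) \;\leq\; s/k \;=\; ps/d.
\]

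To aggregate: conditional on $v^*$, the matrices $X^{(1)}, \dots, X^{(p)}$ are drawn from independent randomness, so $S^{(1)}, \ldots, S^{(p)}$ are conditionally independent given $v^*$. Two applications of the chain rule to $I(v^*, S^{(<i)}; S^{(i)})$, using $I(S^{(<i)}; S^{(i)} \mid v^*) = 0$, give $I(v^*; S^{(i)} \mid S^{(<i)}) = I(v^*; S^{(i)}) - I(S^{(<i)}; S^{(i)}) \leq I(v^*; S^{(i)})$. Telescoping the chain rule then yields $I(v^*; S^{(1)}, \ldots, S^{(p)}) \leq \sum_i I(v^*; S^{(i)}) \leq p \cdot ps/d = p^2 s/d$, which combined with the $\Omega(d)$ lower bound gives $s = \Omega(d^2/p^2)$. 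The main subtlety will be making the $I(u^*; \hat v) = \Omega(d)$ step fully rigorous given both the $10\%$ failure probability and the fact that $v^*$ is continuous; quantizing $u^*$ to an appropriately-chosen $\Theta(1)$-angle net and applying Fano should handle both cleanly.
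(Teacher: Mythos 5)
Your proposal follows essentially the same route as the paper's proof: the per-player bound $I(S^{(i)}; v^*) \le s/k = ps/d$ by averaging over the hidden index and subadditivity over independent rows, aggregation to $p^2 s/d$ via conditional independence of the summaries given $v^*$ plus the chain rule, an $\Omega(d)$ lower bound on $I(v^*;\hat v)$ via a covering/Fano argument on the sphere (the paper invokes a covering-number lemma of Jalal et al.\ with a spherical-cap area bound, which is the same idea as your net-plus-Fano step), and the data processing inequality to conclude $s = \Omega(d^2/p^2)$. One detail to fold into your flagged ``main subtlety'': since $\sin^2(\hat v, v^*)$ is invariant under $\hat v \mapsto -\hat v$, success only places $\hat v$ in the union of two antipodal caps around $\pm v^*/\norm{v^*}$, so you should pass to $\sign(\hat v^T v^*)\,\hat v$ at the cost of one extra bit of mutual information, exactly as the paper does.
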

Before proving Lemma~\ref{lem:A_lower_bound}, we first show the proof of Theorem~\ref{thm:sketching} assuming Lemma~\ref{lem:A_lower_bound} is true.

\begin{proof}[Proof of Theorem~\ref{thm:sketching}]
    Suppose we have a mergeable summary $\mathcal{S}$ that 0.1-approximates PCA uses $o(d^2 / R^2)$ bits of space with high probability.
    Let $\alpha > 0$ be a constant such that \[
        \sin^2(\arcsin \alpha + \arcsin \sqrt{0.1}) \le 0.105.
    \]
    Let $C$ be the constant in \cref{lem:v*_close_to_PCA} corresponding to $\alpha$.
    By Lemma~\ref{lem:sketching_spectral_gap}, for $X \sim \mathcal{D}_{10R + C}$, $X^TX$ has spectral gap $R$ with high probability. Therefore, $\mathcal{S}$ succeeds in 0.1-approximating PCA with high probability. Combining with \cref{lem:v*_close_to_PCA}, we have $\mathcal{S}$ succeeds in 0.105-approximating $v^*$ on $X \sim \mathcal{D}_{10R+C}$ with high probability.
    
    By Yao's minimax principle, there must be a \textit{deterministic} mergeable summary $\mathcal{A}$ that also uses $o(d^2 / R^2)$ bits of space and 0.105-approximates PCA on $X \sim \mathcal{D}_{10R + C}$ with high probability, i.e., it satisfies \eqref{eq:sketching_succeeds}. By Lemma~\ref{lem:A_lower_bound},  $\mathcal{A}$ must require $\Omega(d^2 / R^2)$ bits of space, which is a contradiction.
    
\end{proof}

We use $s$ to denote the bits of space of $\mathcal{A}$. To prove Lemma~\ref{lem:A_lower_bound}, we will show that $s = \Omega(d^2 / p^2)$. We use $m_i$ to denote $\mathcal{A}$'s summary for $X^{(i)}$. The key property we use here is that each $m_i$ is a deterministic function of $X^{(i)}$, so $m_i$'s are independent except for the shared vector $v^*$. We start with the following classical result:
\begin{proposition}[Chain Rule for Mutual Information]
\label{prop:conditional_chain_rule}
    Let $X$, $Y$, $Z$ be random variables. We have\[
        I(X; Y \mid Z) = I(X; Y) - (I(X; Z) - I(X; Z \mid Y)).
    \]
\end{proposition}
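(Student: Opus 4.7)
The statement is a routine algebraic consequence of the standard chain rule for mutual information, so my plan is simply to apply the standard chain rule in two different orderings and equate the results. Concretely, I would use the identity
\[
  I(X; Y, Z) = I(X; Z) + I(X; Y \mid Z),
\]
which expands the joint mutual information of $X$ with the pair $(Y, Z)$ by first revealing $Z$ and then $Y$. The same quantity admits the symmetric expansion
\[
  I(X; Y, Z) = I(X; Y) + I(X; Z \mid Y),
\]
obtained by reversing the order in which $Y$ and $Z$ are revealed. Both of these are standard and can be cited directly (e.g., from Cover--Thomas), so the plan is to treat them as given rather than re-derive them.

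With these two expressions in hand, I would equate their right-hand sides to obtain
\[
  I(X; Z) + I(X; Y \mid Z) = I(X; Y) + I(X; Z \mid Y),
\]
and then solve for $I(X; Y \mid Z)$ by subtracting $I(X; Z)$ from both sides. This yields
\[
  I(X; Y \mid Z) = I(X; Y) + I(X; Z \mid Y) - I(X; Z) = I(X; Y) - \bigl(I(X; Z) - I(X; Z \mid Y)\bigr),
\]
which is exactly the stated identity.

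There is no real obstacle here; the only thing to be careful about is making sure each application of the standard chain rule is invoked with the correct conditioning side, since the statement is just a rearrangement that isolates the ``gap'' $I(X; Z) - I(X; Z \mid Y)$ on the right. The reason this form is useful in the surrounding argument is that this gap measures how much $Y$ reveals about the relationship between $X$ and $Z$, which is the quantity the players' summaries will need to control. So beyond writing out the two chain-rule expansions and subtracting, no further work is required.
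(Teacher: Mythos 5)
Your proof is correct: expanding $I(X; Y, Z)$ via the chain rule in both orders and equating the two expressions immediately gives the stated identity. The paper itself states this proposition as a classical fact without proof, and your derivation is exactly the standard argument one would cite (e.g., from Cover--Thomas), so there is nothing further to add.
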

 
\begin{corollary}
\label{cor:independent_mutual_information}
     Let $X$, $Y$, $Z$ be random variables. If $X$ and $Z$ are independent, \[
        I(X; Y) \le I(X; Y \mid Z).
     \]
\end{corollary}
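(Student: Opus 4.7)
The plan is to deduce the corollary as a direct algebraic consequence of Proposition~\ref{prop:conditional_chain_rule}, using only two standard facts from information theory: that independent random variables have zero mutual information, and that (conditional) mutual information is non-negative.

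First, I would rewrite the chain rule given in Proposition~\ref{prop:conditional_chain_rule} by moving $I(X; Y)$ to the left-hand side, yielding
\[
    I(X; Y \mid Z) - I(X; Y) \;=\; I(X; Z \mid Y) - I(X; Z).
\]
Next, I would invoke the assumption that $X$ and $Z$ are independent to conclude $I(X; Z) = 0$. Substituting into the display above gives
\[
    I(X; Y \mid Z) - I(X; Y) \;=\; I(X; Z \mid Y).
\]
Finally, since conditional mutual information is always non-negative, the right-hand side is at least $0$, which rearranges to the claimed inequality $I(X; Y) \le I(X; Y \mid Z)$.

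There is no real obstacle here; the content of the corollary is just that the proposition's ``correction term'' $I(X; Z) - I(X; Z \mid Y)$ becomes $-I(X; Z \mid Y) \le 0$ under independence. The only thing to be careful about is the sign in the rearrangement of the chain rule and the invocation of non-negativity of conditional mutual information, both of which are routine.
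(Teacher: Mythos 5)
Your proof is correct and is essentially the paper's own argument: both set $I(X;Z)=0$ by independence, apply Proposition~\ref{prop:conditional_chain_rule}, and use non-negativity of $I(X;Z\mid Y)$ (which the paper leaves implicit but you spell out). No issues.
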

\begin{proof}
    Since $X$ and $Z$ are independent, $I(X; Z) = 0$. Applying proposition~\ref{prop:conditional_chain_rule} gives the result.
\end{proof}

Using these results, we can prove the next two lemmas  bounding the mutual information between $v^*$ and $m_i$'s in terms of $s$.

\begin{lemma}
    \label{lem:m_i_v_star}
    For every $i \in [p]$, $I(m_i; v^*) \le s/k$.
\end{lemma}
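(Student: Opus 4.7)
The plan is to exploit the exchangeability of the rows of $X^{(i)}$. Because $v^* \sim \mathcal{N}(0, I_d)$ has the same marginal distribution as the other $k-1$ Gaussian rows, the joint distribution of $(X^{(i)}, j_i^*)$ is equivalent to first drawing $k$ iid $\mathcal{N}(0, I_d)$ rows for $X^{(i)}$, then independently drawing $j_i^* \sim \mathrm{Unif}([k])$, and finally setting $v^* := X^{(i)}_{j_i^*}$. In particular, $j_i^*$ is independent of $X^{(i)}$, and hence independent of $m_i$.

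Using this reformulation, I would first show $I(m_i; v^*) \le \frac{1}{k}\sum_{j=1}^k I(m_i; X^{(i)}_j)$. Since $v^*$ is a deterministic function of $(X^{(i)}_{j_i^*}, j_i^*)$ and $j_i^* \perp m_i$,
\[
I(m_i; v^*) \;\le\; I(m_i;\, v^*, j_i^*) \;=\; I(m_i; j_i^*) + I(m_i; X^{(i)}_{j_i^*} \mid j_i^*) \;=\; I(m_i; X^{(i)}_{j_i^*} \mid j_i^*),
\]
and averaging over the uniform $j_i^*$ (which is independent of both $m_i$ and $X^{(i)}$) gives $I(m_i; X^{(i)}_{j_i^*} \mid j_i^*) = \frac{1}{k}\sum_{j} I(m_i; X^{(i)}_j)$.

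Next I would bound $\sum_{j=1}^k I(m_i; X^{(i)}_j) \le s$ by combining the chain rule with Corollary~\ref{cor:independent_mutual_information}. Since the rows $X^{(i)}_1, \dots, X^{(i)}_k$ are mutually independent, $X^{(i)}_j$ is independent of $X^{(i)}_{<j}$, so the corollary yields $I(m_i; X^{(i)}_j) \le I(m_i; X^{(i)}_j \mid X^{(i)}_{<j})$ for every $j$. Summing and applying the standard chain rule,
\[
\sum_{j=1}^k I(m_i; X^{(i)}_j) \;\le\; \sum_{j=1}^k I(m_i; X^{(i)}_j \mid X^{(i)}_{<j}) \;=\; I(m_i; X^{(i)}) \;=\; H(m_i) \;\le\; s,
\]
where the penultimate equality uses that $m_i$ is a deterministic function of $X^{(i)}$ and the last bound uses that $m_i$ is at most $s$ bits. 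Combining this with the averaging bound above yields $I(m_i; v^*) \le s/k$.

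I do not anticipate a major technical obstacle; the main subtlety is recognizing the exchangeability that makes $j_i^* \perp X^{(i)}$ (the Gaussian tail of $v^*$ matches the non-special rows), which is what collapses the conditional-on-$j_i^*$ mutual information to a uniform average. Once that is in hand, the superadditivity of mutual information over independent inputs, supplied by Corollary~\ref{cor:independent_mutual_information}, closes the argument immediately.
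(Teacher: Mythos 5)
Your proof is correct and follows essentially the same route as the paper's: both use the independence of $m_i$ and $j_i^*$ (which you justify explicitly via exchangeability, and the paper asserts directly) to pass to the conditional mutual information and average over $j$, then apply the superadditivity of mutual information over the independent rows via Corollary~\ref{cor:independent_mutual_information} and the chain rule to bound $\sum_j I(m_i; X^{(i)}_j) \le H(m_i) \le s$. The only difference is presentational, so no further comparison is needed.
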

\begin{proof}
    Since $m_i$ and $j_i^*$ are independent, by corollary~\ref{cor:independent_mutual_information} we have \[
         I(m_i; X_{j_i^*}^{(i)}) \le I(m_i; X_{j_i^*}^{(i)} \mid j_i^*) .
    \]
    Thus, \[
     I(m_i; v^*) = I(m_i; X_{j_i^*}^{(i)}) \le I(m_i; X_{j_i^*}^{(i)} \mid j_i^*) = \frac{1}{k}\sum_{j \in [k]} I(m_i; X_{j}^{(i)} \mid j_i^* = j) = \frac{1}{k}\sum_{j \in [k]} I(m_i; X^{(i)}_j).
    \]
    Furthermore, since each $X^{(i)}_j$ is sampled independently, by applying corollary~\ref{cor:independent_mutual_information}, we have for each $j \in [k]$, \[
        I(m_i; X^{(i)}_j) \le I(m_i; X^{(i)}_j \mid X^{(i)}_1, \dots, X^{(i)}_{j-1}).
    \]
    We have \[
        \sum_{j \in [k]} I(m_i; X^{(i)}_j) \le \sum_{j \in [k]} I(m_i; X^{(i)}_j \mid X^{(i)}_1, \dots, X^{(i)}_{j-1}) = I(m_i; X^{(i)}_1, \dots, X^{(i)}_{n}) \le H(m_i).
    \]
    Since $\mathcal{A}$ only has $s$ bits of space, $H(m_i) \le s$.
    Therefore, \[
        I(m_i; v^*) \le \frac{1}{k} H(m_i) \le \frac{s}{k}.
    \]
\end{proof}
\begin{lemma}
    \label{lem:component_information}
    $I(v^* ; m_1, m_2, \dots, m_p) \le p^2 s / d$ 
\end{lemma}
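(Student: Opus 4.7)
The plan is to expand $I(v^*; m_1,\dots, m_p)$ by the chain rule of mutual information, then exploit the fact that the summaries $m_1,\dots, m_p$ are conditionally independent given $v^*$ to reduce everything to the per-player bound from Lemma~\ref{lem:m_i_v_star}, which already gives $I(v^*; m_i) \le s/k = ps/d$ (using $k = d/p$).

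First, I would note that because $m_i$ is a deterministic function of $X^{(i)}$, and because the blocks $X^{(1)},\dots, X^{(p)}$ are independent once $v^*$ is fixed (each block's randomness lives in its own fresh $j_i^*$ and fresh Gaussian rows), we have the conditional independence
\[
    I(m_i; m_1, \dots, m_{i-1} \mid v^*) = 0
\]
for every $i \in [p]$.

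Next, I would apply the chain rule
\[
    I(v^*; m_1, \dots, m_p) = \sum_{i=1}^p I(v^*; m_i \mid m_1, \dots, m_{i-1}).
\]
To bound each term, I invoke Proposition~\ref{prop:conditional_chain_rule} with $X = m_i$, $Y = v^*$, $Z = (m_1,\dots, m_{i-1})$, giving
\[
    I(m_i; v^* \mid m_1,\dots, m_{i-1}) = I(m_i; v^*) - \bigl(I(m_i; m_1,\dots, m_{i-1}) - I(m_i; m_1,\dots, m_{i-1} \mid v^*)\bigr).
\]
The parenthesized term is nonnegative (the conditional mutual information vanishes by the observation above, while the unconditional one is $\ge 0$), so $I(v^*; m_i \mid m_1,\dots, m_{i-1}) \le I(v^*; m_i)$.

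Finally, Lemma~\ref{lem:m_i_v_star} bounds each $I(v^*; m_i) \le s/k = ps/d$, and summing over $i \in [p]$ yields $I(v^*; m_1,\dots, m_p) \le p \cdot ps/d = p^2 s/d$, as required. There is no real obstacle here beyond carefully identifying the right conditional independence structure; the rest is a standard bookkeeping application of the information-theoretic identities already collected in Proposition~\ref{prop:conditional_chain_rule} and Corollary~\ref{cor:independent_mutual_information}.
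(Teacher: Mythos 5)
Your proof is correct and follows essentially the same route as the paper: use the conditional independence $I(m_i; m_1,\dots,m_{i-1}\mid v^*)=0$ together with Proposition~\ref{prop:conditional_chain_rule} to get $I(v^*; m_i \mid m_1,\dots,m_{i-1}) \le I(v^*; m_i)$, then sum the chain rule and apply Lemma~\ref{lem:m_i_v_star}. No substantive differences from the paper's argument.
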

\begin{proof}
    We first prove that for $i \in [p]$, $I(v^* ; m_i \mid m_1, \dots, m_{i-1}) \le I(v^*; m_i)$.           Note that \[
        I(m_i; m_1, \dots, m_{i-1} \mid v^*) \le I(X^{(i)}; X^{(1)}, \dots, X^{(i-1)} \mid v^*) = 0.
    \]
    By proposition~\ref{prop:conditional_chain_rule}, we have 
    \begin{align*}
        I(v^* ; m_i \mid m_1, \dots, m_{i-1}) = I(v^*; m_i) - I(m_i; m_1, \dots, m_{i-1}) + I(m_i; m_1, \dots, m_{i-1} \mid v^*) \le I(v^*; m_i).
    \end{align*}
    Then by Lemma~\ref{lem:m_i_v_star}, we have \[
        I(v^*; m_1, \dots, m_p) = \sum_{i \in [p]}  I(v^* ; m_i \mid m_1, \dots, m_{i-1}) \le \sum_{i \in [p]} I(v^*; m_i) \le \frac{ps}{k} = \frac{p^2s}{d}.
    \]
\end{proof}

Next, we show that the mutual information between $v^*$ and the output of $\mathcal{A}$ must be at least $\Omega(d)$. For this purpose, we refer to a special case of lemma 4.4 from \cite{jalal2021instance}:

\begin{lemma}[Lemma 4.4 of \cite{jalal2021instance}] 
\label{lem:estimate_info}
Consider random variable $x$ uniformly distributed over $D \subseteq \R^d$ and random variable $\tilde{x}$ in $\R^d$.
If the joint distribution of  $ (x, \tilde{x}) $ satisfies 
\[
\Pr[\|x - \tilde{x}\| \leq \eta] \geq 0.9,
\]
then we have
\[
\frac{1}{8}\log \mathrm{Cov}_{3\eta, 1/2}(D) \leq I(x; \tilde{x}) + 1.98,
\]
where $\mathrm{Cov}_{3\eta, 1/2}$ denotes the minimum number of $d$-dimensional balls of radius $3\eta$ required to cover at least half of $D$.
\end{lemma}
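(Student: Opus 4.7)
The plan is to reduce the continuous mutual-information bound to a discrete hypothesis test and invoke Fano's inequality. First, fix a near-optimal half-covering: let $M = \mathrm{Cov}_{3\eta, 1/2}(D)$ and choose balls $B_1, \ldots, B_M$ of radius $3\eta$ whose union $C$ satisfies $|C \cap D| \geq |D|/2$. Define a discrete random variable $Y \in \{1, \ldots, M, \bot\}$ by $Y = j$ if $x \in B_j$ (breaking ties arbitrarily) and $Y = \bot$ otherwise. Since $Y$ is a deterministic function of $x$, the data-processing inequality yields $I(x; \tilde{x}) \geq I(Y; \tilde{x})$, so it suffices to lower-bound the right side by a constant fraction of $\log M$.

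Next, construct a predictor $\hat{Y}$ from $\tilde{x}$: let $\hat{Y}$ be the index of any ball $B_j$ whose center lies within distance $4\eta$ of $\tilde{x}$, or $\bot$ if none exists. On the joint event $\{x \in C\} \cap \{\|x - \tilde{x}\| \leq \eta\}$---which has probability at least $0.9 + 0.5 - 1 = 0.4$ by a union bound---if $x \in B_j$ then the center of $B_j$ lies within $3\eta$ of $x$, hence within $4\eta$ of $\tilde{x}$, so $\hat{Y}$ returns the index of a ball containing $x$. Since $x$ is uniform on $D$ and at least half its mass sits in $C$, the variable $Y$ restricted to $Y \neq \bot$ has entropy $\Theta(\log M)$ under mild regularity of the covering.

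Finally, Fano's inequality on the $(M+1)$-ary pair $(Y, \hat{Y})$ gives $H(Y \mid \hat{Y}) \leq H_2(P_e) + P_e \log M$, where $P_e = \Pr[\hat{Y} \neq Y]$. Combining with the lower bound on $H(Y)$ and the data-processing step above yields $I(x; \tilde{x}) \geq I(Y; \hat{Y}) \geq \frac{1}{8} \log M - O(1)$ after absorbing constants. The main obstacle is precisely this calibration: naive Fano applied with $P_e$ near $1/2$ is essentially vacuous, so the cleanest fix is to replace the cover by a maximal $6\eta$-packing of $C$---which shrinks $M$ by only a constant factor but guarantees that any $\eta$-close estimator uniquely identifies the correct packing point, so on the good event $\hat{Y} = Y$ exactly. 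The $\frac{1}{8}$ factor in the statement is the price paid for (i) the packing--covering conversion, (ii) the conditioning loss from $\Pr[x \notin C]$, and (iii) the binary-entropy slack in Fano; the additive $1.98$ absorbs all $O(1)$ terms along the way, leaving the bound non-trivial whenever $M$ is moderately large.
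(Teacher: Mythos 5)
The paper never proves this statement: it is imported verbatim as Lemma 4.4 of \cite{jalal2021instance}, so your argument has to stand on its own, and as written it does not. The quantize-and-Fano skeleton is a natural instinct, but the two load-bearing steps are precisely the ones left unestablished. First, your decoder $\hat Y$ (``any ball whose center lies within $4\eta$ of $\tilde x$'') is never shown to equal $Y$: covering balls overlap and their centers can be arbitrarily close, so on the good event you only know the true index is \emph{among} the candidates, and $P_e$ is uncontrolled (not ``near $1/2$'' but possibly near $1$). You acknowledge this and propose replacing the cover by a maximal $6\eta$-packing, asserting this ``shrinks $M$ by only a constant factor.'' That assertion is false in general: the standard relation only gives that a $6\eta$-separated set has size at most the $3\eta$-covering number, and the gap can be $2^{\Theta(d)}$. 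Concretely, take $D$ to be the unit sphere with $3\eta = 0.9$: the $3\eta$-half-covering number is $2^{\Theta(d)}$ by the spherical-cap bound, while any $1.8$-separated subset of the sphere has pairwise negative inner products and hence at most $d+1$ points. Since the conclusion is $\tfrac18 \log \mathrm{Cov}_{3\eta,1/2}(D)$ up to an \emph{additive} constant, losing a $2^{\Theta(d)}$ factor in the count loses $\Theta(d)$ additively in the logarithm, i.e.\ the entire content of the bound---and this is exactly the regime in which the paper invokes the lemma (Lemma~\ref{lem:output_information} covers the unit sphere by balls of radius $1-3c$). Moreover, even granting the packing, $6\eta$ separation does not give unique decoding: two centers within $4\eta$ of $\tilde x$ may be as far as $8\eta$ apart, so you would need separation greater than $8\eta$, worsening the loss further.

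Second, the claim that $H(Y) = \Theta(\log M)$ ``under mild regularity of the covering'' is an assumption you are not entitled to: uniformity of $x$ on $D$ does not make the index of the covering ball near-uniform. A minimum half-cover can consist of one ball carrying mass $0.49$ of $D$ together with $M-1$ balls sharing total mass $0.01$ (this is consistent with minimality when the remainder of $D$ is very spread out), in which case $H(Y) = O(1) + 0.01\log M$, so no universal constant times $\log M$ is available, let alone the specific constants $\tfrac18$ and $1.98$, which you never derive. Handling this non-uniformity is the real work; a repair along your lines would more plausibly go through a generalized (approximate-recovery) Fano inequality stated in terms of $\sup_t \Pr[x \in B(t,3\eta)]$, combined with a separate argument relating that maximal ball mass to the half-covering number---neither of which appears in your sketch. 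As written, the proposal has genuine gaps and does not establish the lemma.
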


\begin{lemma}
    \label{lem:output_information}
    Let unit vector $\tilde{v}$ be an approximation of $\wh{v^*}$ such that \[
        \Pr[\sin^2(v^*, \tilde{v}) \leq 0.105] \geq  0.9.
    \]
    Then,  \[
        I(\wh{v^*}; \tilde{v}) \gtrsim d.
    \]
\end{lemma}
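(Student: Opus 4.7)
The plan is to apply Lemma~\ref{lem:estimate_info} after a small amount of cleanup. Since $v^* \sim \mathcal{N}(0, I_d)$, the unit vector $\wh{v^*}$ is uniformly distributed on $S^{d-1} \subseteq \R^d$, so we may invoke the lemma with $x = \wh{v^*}$ and $D = S^{d-1}$. The only subtlety is that $\sin^2(v^*, \tilde{v}) = 1 - \inner{\wh{v^*}, \tilde{v}}^2$ is invariant under $\tilde{v} \mapsto -\tilde{v}$, so $\tilde{v}$ only identifies $\wh{v^*}$ up to a sign. To handle this, I define the sign-corrected estimator $\tilde{x} := \sign(\inner{\wh{v^*}, \tilde{v}}) \cdot \tilde{v}$, which is still a unit vector. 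Then $\sin^2(v^*, \tilde{v}) \leq 0.105$ forces $\inner{\wh{v^*}, \tilde{x}} \geq \sqrt{0.895}$ and hence $\nnorm{\wh{v^*} - \tilde{x}}^2 = 2 - 2\inner{\wh{v^*}, \tilde{x}} \leq 2 - 2\sqrt{0.895} < 1/9$. Setting $\eta := \sqrt{2 - 2\sqrt{0.895}} < 1/3$, we obtain $\Prb{\nnorm{\wh{v^*} - \tilde{x}} \leq \eta} \geq 0.9$, so Lemma~\ref{lem:estimate_info} yields
\[
I(\wh{v^*}; \tilde{x}) \geq \tfrac{1}{8}\log \mathrm{Cov}_{3\eta, 1/2}(S^{d-1}) - 1.98.
\]

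The next step is to show $\log \mathrm{Cov}_{3\eta, 1/2}(S^{d-1}) = \Omega(d)$, i.e.\ that covering half of $S^{d-1}$ by Euclidean balls of radius $r := 3\eta < 1$ requires exponentially many balls. An elementary calculation shows that the intersection of any radius-$r$ ball with $S^{d-1}$ is contained in a spherical cap $\{x \in S^{d-1} : \inner{x, p} \geq \tau\}$, where $p$ is a unit vector and the threshold is minimized over the ball's center at $\tau := \sqrt{1 - r^2} > 0$ (the extremal ball is centered at $\sqrt{1-r^2}\,p$). Standard concentration on the sphere gives $\Prb[x \sim S^{d-1}]{\inner{x, p} \geq \tau} \leq e^{-\Omega(\tau^2 d)}$, so any union of such caps covering half of $S^{d-1}$ has size $2^{\Omega(d)}$. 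Combining with the previous paragraph, $I(\wh{v^*}; \tilde{x}) \gtrsim d$.

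Finally, I translate back to $\tilde{v}$. Writing $b := \sign(\inner{\wh{v^*}, \tilde{v}}) \in \{-1, +1\}$, the pair $(\tilde{v}, b)$ deterministically produces $\tilde{x} = b\tilde{v}$, so by the chain rule for mutual information,
\[
I(\wh{v^*}; \tilde{x}) \leq I(\wh{v^*}; \tilde{v}, b) \leq I(\wh{v^*}; \tilde{v}) + H(b) \leq I(\wh{v^*}; \tilde{v}) + 1.
\]
Rearranging gives $I(\wh{v^*}; \tilde{v}) \geq \Omega(d)$, as desired. The main obstacle in this argument is the covering-number estimate: the threshold $0.105$ in the hypothesis is picked precisely so that $3\eta < 1$, which is exactly what forces each Euclidean ball to miss all but an exponentially small portion of $S^{d-1}$. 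If $3\eta \geq 1$, a single ball (e.g.\ centered at the origin) contains all of $S^{d-1}$ and the whole bound collapses; this is exactly why the $\alpha$-slack was introduced in Lemma~\ref{lem:v*_close_to_PCA} before funneling into Lemma~\ref{lem:estimate_info}.
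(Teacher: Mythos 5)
Your proof is correct and follows essentially the same route as the paper's: sign-correct $\tilde{v}$, apply Lemma~\ref{lem:estimate_info} with $D$ the unit sphere, lower bound the covering number by the exponentially small measure of the spherical caps that radius-$(3\eta)$ balls carve out, and then recover $I(\wh{v^*};\tilde{v})$ by paying the single bit $H(\sign(\tilde{v}^Tv^*))$ via the chain rule. The only cosmetic difference is that you derive the cap threshold $\sqrt{1-r^2}$ directly and invoke generic sphere concentration where the paper cites its spherical-cap-area lemma.
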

    
\begin{proof}
    We define $\tilde{v}' := \sign(\tilde{v}^Tv^*)\tilde{v}$. Then it is easy to verify that $\sin^2(v^*, \tilde{v}) \le 0.105$ implies that $\|\wh{v^*} - \tilde{v}'\| \le 1/3 - c$ for some constant $c > 0$. Therefore, by Lemma~\ref{lem:estimate_info}, we have \[
        I(\wh{v^*} ; \tilde{v}') \ge \frac{1}{8} \log \mathrm{Cov}_{1 - 3c, 1/2}(S_d) - 1.98,
    \]
    where $S_d$ denotes the $d$-dimensional unit sphere. Note that each ball of radius $1 - 3c$ can cover a spherical cap with height at most $1 - 3c$ on the unit sphere, and the union of these caps need to cover at least half of the surface area of a unit sphere. Using a bound on the area of a spherical cap (Lemma~\ref{lem:spherical_cap_area}), we have \[
        \log \mathrm{Cov}_{1 - 3c, 1/2} \ge \log \frac{\text{area of }S_d}{\text{area of height-$(1 - 3c)$ spherical cap}} \gtrsim d.
    \]
    Therefore, \[
         I(\wh{v^*} ; \tilde{v}') \gtrsim  d.
    \]
    This implies that  \[
        d \lesssim I(\wh{v^*}; \tilde{v}') = I(\wh{v^*}; \sign(\tilde{v}^Tv^*)\tilde{v}) \le I(\wh{v^*}; \tilde{v}, \sign(\tilde{v}^Tv^*)) = I(\wh{v^*}; \tilde{v}) + I(\wh{v^*}; \sign(\tilde{v}^Tv^*) \mid \tilde{v}).
    \]
    In addition, since \[
        I(\wh{v^*}; \sign(\tilde{v}^Tv^*) \mid \tilde{v}) \le H(\sign(\tilde{v}^Tv^*)) \le 1,
    \]
    we have \[
        I(\wh{v^*}; \tilde{v}) \gtrsim d.
    \]
\end{proof}

This gives us a lower bound for $s$:
\begin{proof}[Proof of Lemma~\ref{lem:A_lower_bound}]  
    Let $\tilde{v}$ be the output of $\mathcal{A}$.
    By Lemma~\ref{lem:output_information}, we have \[
        I(\tilde{v}; \wh{v^*}) \gtrsim d.
    \]
    By Lemma~\ref{lem:component_information},  \[
        I(m_1, \dots, m_p; v^*) \le \frac{p^2s}{d}.
    \] Using the data processing inequality, we get \[
       d \lesssim I(\tilde{v}; \wh{v^*}) \le I(m_1, \dots, m_p; v^*) \le \frac{p^2s}{d}.
    \]
    Therefore,  \[
        s \gtrsim \frac{d^2}{p^2}.
    \]
\end{proof}

\section{Lower Bound for Accuracy}

Our lower bound is based on the \textsc{PartialDuplicate} instance, where an instance is a matrix $X \in \{0, -1, 1\}^{(k + n + 1) \times d}$ can be expressed as follows:
\begin{itemize}
\item The first row equals $x + y$, where  $x, y \in \{0, -1, 1\}^d$ have $\supp(x) = \{1, 2, \dotsc, d / 2\}$ and $\supp(y) = \{d / 2 +1, \dotsc, d\}$.
\item For $i \in \{2, \dots, k+1\}$, the $i$-th row equals $x$.
\item The last $n$ rows form a uniformly random matrix $X' \in \{-1, 1\}^{n \times d}$.
\end{itemize}
That is, the entries look like:
\[
  X =
  \begin{array}{|c|c|}
    \hline
    {x}& y\\
    {x}& 0\\
    \vdots&\vdots\\
    {x}& 0\\
    \hline
    \multicolumn{2}{|c|}{X'}\\
    \hline
  \end{array}
\]
except that $x, y$ are zero-padded to $d$ dimensions. Without loss of generality, we assume $d$ is superconstant and $k = o(d)$.

\subsection{Spectral properties of \textsc{PartialDuplicate}}

Let $v^*$ be the top unit eigenvector of $X^TX$.
We can decompose $v^*$ into three components: the
${x}$ direction, the ${y}$ direction, and the
component orthogonal to both of these.  This is:
\[
 v^* = a \wh x + b \wh y + c \wt w,
\]
where $a^2 + b^2 + c^2 = 1$ and $\wt w$ is an arbitrary unit vector orthogonal to
${x}$ and $y$. 

We have that
\begin{align}
    \|Xv^*\|^2 &= \|X'v^*\|^2 + k \inner{x, v^*}^2 + \inner{x + y, v^*}^2 \nonumber\\ 
    &=  \|X'v^*\|^2 + a^2k\|x\|^2 + (a \|x\| + b \|y\|)^2 \nonumber\\
    &= \|X'v^*\|^2 + a^2(k + 1)\|x\|^2 + 2ab \|x\| \|y\| + b^2\|y\|^2\nonumber\\
    &= \|X'v^*\|^2 + \frac{a^2(k + 1) d}{2} + abd +\frac{b^2 d}{2}  \label{eq:Xw}.
\end{align}

\begin{lemma}
\label{lem:c_equals_0}
    Suppose $n \le d$. Then $|c| \le O(1 / k)$ with high probability.
\end{lemma}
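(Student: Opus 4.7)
My plan is to exploit that the ``orthogonal'' direction $\wt w$ lies in the kernel of the deterministic part of $X^T X$, turning the eigenvalue equation $X^TX v^* = \lambda_1 v^*$ into a direct identity for $c$. Decompose $X^T X = M + X'^T X'$, where
\[
M \;:=\; (x+y)(x+y)^T + k\, x x^T
\]
captures the contribution of the first $k+1$ rows of $X$. Because $\wt w$ is orthogonal to both $x$ and $y$, we have $M \wt w = 0$, so $\wt w^T M v^* = 0$. Taking the inner product of $X^T X v^* = \lambda_1 v^*$ with $\wt w$ then gives
\[
\lambda_1 c \;=\; \wt w^T X'^T X' v^* \;=\; \inner{X'\wt w,\, X' v^*}.
\]

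I then bound the right-hand side by Cauchy--Schwarz: $|\inner{X'\wt w, X' v^*}| \le \|X'\wt w\|\cdot\|X' v^*\| \le \|X'\|^2$, since $\wt w$ and $v^*$ are unit vectors. Because $X' \in \{-1,1\}^{n\times d}$ has i.i.d.\ Rademacher entries with $n \le d$, the standard subgaussian matrix-norm estimate (Lemma~\ref{lem:rudelsonvershynin}) gives $\|X'\| \le O(\sqrt d)$ with high probability, and hence $\|X'\|^2 \le O(d)$.

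For the denominator, I lower bound $\lambda_1$ via the Rayleigh quotient at the unit vector $\wh x$. Since $(x+y)^T \wh x = \|x\|$ and $x^T \wh x = \|x\|$, the first $k+1$ rows of $X$ alone contribute
\[
\lambda_1 \;\ge\; \|X\wh x\|^2 \;\ge\; (k+1)\|x\|^2 \;=\; (k+1)\,d/2.
\]
Combining the two bounds yields $|c| \le \|X'\|^2/\lambda_1 \le O(d)/((k+1)d/2) = O(1/k)$, as desired.

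The only nontrivial ingredient is the high-probability bound on $\|X'\|$, which follows from standard subgaussian random matrix concentration; everything else is an algebraic consequence of the identity $M\wt w = 0$. A more naive Rayleigh-quotient comparison between $v^*$ and the best vector in $\mathrm{span}(x,y)$ only yields $|c| \le O(1/\sqrt{k})$ because of a cross term of order $\|X'\|^2$, so the key observation is that projecting the eigenvalue equation onto $\wt w$ cancels this cross term exactly.
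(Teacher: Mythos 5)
Your proof is correct, but it takes a genuinely different route from the paper. The paper argues variationally: it perturbs $v^*$ to $v' = \sqrt{a^2+c^2}\,\wh x + b\,\wh y$ and uses optimality of $v^*$ (the Rayleigh quotient of $v'$ cannot exceed that of $v^*$), bounding the cross term $(v'+v^*)^T X'^T X' (v'-v^*)$ by $\norm{X'}^2\norm{v'-v^*}\norm{v'+v^*} = O(\abs{c}d)$ and concluding $\tfrac{c^2(k+1)d}{2} \leq O(\abs{c}d)$, hence $\abs{c} \le O(1/k)$. You instead project the eigenvalue equation $X^TX v^* = \lambda_1 v^*$ onto $\wt w$, using that $\wt w$ kills the deterministic rank-two part $M = (x+y)(x+y)^T + k\,xx^T$, to get the exact identity $\lambda_1 c = \inner{X'\wt w, X' v^*}$; Cauchy--Schwarz, the bound $\norm{X'} = O(\sqrt d)$ (the same application of Lemma~\ref{lem:rudelsonvershynin}, valid since $n \le d$), and $\lambda_1 \ge (k+1)d/2$ then give $\abs{c} \le O(1/k)$. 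Both arguments hinge on exactly the same probabilistic ingredient; yours is somewhat more direct, avoiding the perturbation algebra and the sign normalization of $b$, and it cleanly isolates why the bound is $1/k$ rather than $1/\sqrt{k}$. One small correction to your closing remark: the Rayleigh-quotient comparison is not inherently limited to $O(1/\sqrt{k})$ --- the paper's version also achieves $O(1/k)$, precisely because its cross term scales with $\norm{v'-v^*} = O(\abs{c})$ rather than $O(1)$; your identity accomplishes the same cancellation exactly rather than up to a factor of $\abs{c}$.
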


\begin{proof}
    Since $\sign(b) \cdot v^*$ is also a top eigenvector of $X^TX$, without loss of generality, we assume $b > 0$.
    We consider unit vector $v' = \sqrt{a^2 + c^2} \wh x + b \wh y$, we have 
    \begin{align*}
        &\|Xv'\|^2 - \|Xv^*\|^2 \\
        =& \|X'v'\|^2  - \|X'v^*\|^2 +  \frac{c^2(k + 1) d}{2} + (\sqrt{a^2 + c^2} - a)bd \\
        \ge & (v' + v^*)^T{X'}^TX'(v' - v^*) + \frac{c^2(k + 1) d}{2}.
    \end{align*}
    By Lemma~\ref{lem:rudelsonvershynin}, with high probability, \[
        \abs{(v' + v^*)^T{X'}^TX'(v' - v^*)} \le  \norm{X'}^2 \norm{v' - v^*} \norm{v' + v^*} \le O(\abs{c d}).
    \]
    Thus, \[
        \|Xv'\|^2 - \|Xv^*\|^2 \ge  - O(\abs{c d}) +  \frac{c^2(k + 1) d}{2}.
    \]
    Since $v^*$ is the top eigenvector, this implies that \[
        - O(\abs{cd}) +  \frac{c^2(k + 1) d}{2} \le 0.
    \]
    Hence, \[
        \abs{c} \le O(1/k).
    \]
\end{proof}

\begin{lemma}
    \label{lem:b_is_large}
    Suppose $k \ge C$ and $n \le \frac{d}{9k}$ for a sufficiently large constant $C$. Then $|b| \ge \frac{1}{3k}$ with high probability.
\end{lemma}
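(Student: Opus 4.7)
The plan is to use the first-order eigenvector identity $X^T X v^* = \lambda_1 v^*$, project it onto $\wh y$, and solve for $b$. Writing $X^T X = (k+1)xx^T + xy^T + yx^T + yy^T + X'^T X'$ and using $\wh y \perp x$ together with $\wh y^T y = \sqrt{d/2}$, this projection reads
\[
\lambda_1 b \;=\; \tfrac{d}{2}(a + b) + \wh y^T X'^T X' v^*.
\]
Decomposing $v^* = a\wh x + b\wh y + c\wt w$, the noise term is $a\langle X'\wh x, X'\wh y\rangle + b\|X'\wh y\|^2 + c\,\wh y^T X'^T X'\wt w$. Since $\wh x, \wh y$ are orthogonal and independent of $X'$, standard Rademacher concentration gives $|\langle X'\wh x, X'\wh y\rangle| = O(\sqrt n)$ and $\big|\|X'\wh y\|^2 - n\big| = O(\sqrt n)$ with high probability. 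The awkward piece is $c\,\wh y^T X'^T X'\wt w$: since $\wt w$ depends on $X'$, fixed-vector concentration fails, so I will fall back on the crude spectral bound $|c|\cdot \|X'\|^2 = O(d/k)$, combining $|c| = O(1/k)$ from Lemma~\ref{lem:c_equals_0} with $\|X'\|^2 \le 9d$ from Lemma~\ref{lem:rudelsonvershynin}. With $n \le d/(9k)$ making $\sqrt n \le \sqrt{d/k} \le d/k$, everything collapses into
\[
(\lambda_1 - d/2 - n)\,b \;=\; \tfrac{d}{2}\,a + O(d/k).
\]

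Next, I will show $\lambda_1 - d/2 - n = \Theta(kd)$. The lower bound $\lambda_1 \ge \|X\wh x\|^2 = (k+1)d/2 + \|X'\wh x\|^2 \ge (k+1)d/2 + n - O(\sqrt n)$ comes from testing against $\wh x$. The upper bound $\lambda_1 \le (k+1+1/k)d/2 + 9d$ follows by splitting $\|Xv\|^2$ into the deterministic $2\times 2$ quadratic form on $\mathrm{span}(\wh x, \wh y)$ (whose maximum eigenvalue is $(k+2+\sqrt{k^2+4})/2 \le k+1+1/k$) plus $\|X'v\|^2 \le \|X'\|^2 \le 9d$. Together, for $k \ge C$ with $C$ a sufficiently large constant, $\lambda_1 - d/2 - n$ lies in $[kd/3,\,kd/2 + O(d)]$.

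To conclude, if $|b| \ge 1/\sqrt k$ we are done since $1/\sqrt k > 1/(3k)$. Otherwise $b^2 + c^2 \le O(1/k)$ forces $|a| \ge 1 - O(1/k)$; after flipping the sign of $v^*$ we may assume $a > 0$. Then the key equation gives
\[
|b| \;\ge\; \frac{(d/2)(1 - O(1/k)) - O(d/k)}{kd/2 + O(d)} \;=\; \frac{1 - O(1/k)}{k + O(1)},
\]
which exceeds $1/(3k)$ once $k \ge C$ for $C$ large enough. The main obstacle in this plan is the noise term $c\,\wh y^T X'^T X'\wt w$: because $\wt w$ is data-dependent, fixed-vector Rademacher concentration does not apply; the saving grace is the previous lemma's bound $|c| = O(1/k)$, which makes even the crude operator-norm estimate $O(d/k)$ small enough to be dominated by the $\Theta(d)$ driving term $\tfrac{d}{2}a$.
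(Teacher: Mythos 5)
Your proof is correct, but it takes a genuinely different route from the paper. The paper argues by contradiction with a perturbation (exchange) argument: assuming $|b| < \frac{1}{3k}$ (and $|c| \lesssim 1/k$ from Lemma~\ref{lem:c_equals_0}), it shifts weight $\eps = \frac{1}{3k}$ from the $\wh x$-coordinate to the $\wh y$-coordinate of $v^*$ and shows, using \eqref{eq:Xw}, $\norm{X'} \le 3\sqrt d$, and $\norm{X'\wh x}^2, \norm{X'\wh y}^2 \le 1.1n < \frac{d}{8k}$, that the perturbed vector has strictly larger Rayleigh quotient, contradicting the optimality of $v^*$. You instead work with the exact stationarity condition: project $X^TXv^* = \lambda_1 v^*$ onto $\wh y$, control the $X'$-contributions (handling the data-dependent direction $\wt w$ via $|c|\,\norm{X'}^2 = O(d/k)$, which is the right fix since fixed-vector concentration indeed does not apply there), and combine with two-sided estimates $\lambda_1 - d/2 - n \in [kd/3,\ kd/2 + O(d)]$ obtained from testing $\wh x$ and from the exact $2\times 2$ eigenvalue of the deterministic part. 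Both proofs lean on the same ingredients (Lemma~\ref{lem:c_equals_0}, Lemma~\ref{lem:rudelsonvershynin}, concentration of $\norm{X'\wh x}^2$ and $\norm{X'\wh y}^2$, and $n \le \frac{d}{9k}$), so neither is circular. What your route buys: a sharper and more informative conclusion, namely $b = \frac{1-O(1/k)}{k+O(1)}$ with the same sign as $a$, rather than just $|b| \ge \frac{1}{3k}$; it also needs the extra cross-term bound $|\langle X'\wh x, X'\wh y\rangle| \lesssim \sqrt{n}$ and an upper bound on $\lambda_1$, neither of which the paper's local perturbation argument requires. Your concentration statements carry the usual suppressed logarithmic factors in ``with high probability,'' but the slack $\sqrt n \le \sqrt{d/(9k)} \ll d/k$ absorbs them, matching the level of rigor of the paper's own proof.
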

\begin{proof}
    Suppose $|b| < \frac{1}{3k}$ with non-negligible probability.  Combining with Lemma~\ref{lem:c_equals_0}, this implies that with non-negligible probability, $|b| < \frac{1}{3k}$ and $|c| \le O(1/k)$. We will show that with high probability, any unit vector $v = a \wh x + b \wh y + c \wt w$ satisfying $|b| < \frac{1}{3k}$ and $|c| < \frac{\log k}{k}$ is not the top eigenvector of $X^TX$. This contradicts the assumption and proves the lemma.

    Without loss of generality, we only consider the case when $b > 0$. Let $t := \sqrt{a^2 + b^2} = \sqrt{1 - c^2} > 2 / 3$. 
    Therefore, we have $a = \sqrt{t^2 - b^2} = t - \Theta(b^2)$. 
    Taking this into \eqref{eq:Xw}, we have \[
        \norm{Xv}^2 = \|X'v\|^2 + \frac{(t^2 - b^2)k d}{2} + b(t - \Theta(b^2))d.
    \]
    We consider vector $v' = \sqrt{t^2 - (b + \eps)^2} \wh x + (b + \eps) \wh y + c\wt w$ for $\eps = \frac{1}{3k}$.
    Now we prove that with high probability $\|Xv'\|^2 - \|Xv\|^2 > 0$.
    We have
    \begin{align*}
       \norm{Xv'}^2 - \norm{Xv}^2 &= \norm{X'v'}^2 - \norm{X'v}^2 + \frac{kd}{2}(b^2 - (b + \eps)^2) + \eps td + \Theta(b^3)d - \Theta((b + \eps)^3)d\\
       &= \norm{X'v'}^2 - \norm{X'v}^2 - bkd\eps - \frac{kd\eps^2}{2} +\eps td + \Theta(b^3)d - \Theta((b + \eps)^3)d \\
       &\ge \norm{X'v'}^2 - \norm{X'v}^2 - \frac{d}{9k} - \frac{d}{18k} + \frac{d}{3k} \pm O(\frac{1}{k^3})d\\
       &\ge - \norm{X'v}^2 + \frac{d}{8k}
    \end{align*}
    with our choice of $k$.
   Note that
    \begin{align*}
        \norm{X'v}^2 \le \norm{aX' \wh{x}}^2 + \norm{bX' \wh{y}}^2 + \norm{c X' \wt w}^2 &\le \norm{X'\wh{x}}^2 + \frac{1}{9k^2}\norm{X'\wh{y}}^2 + \frac{\log^2 k}{k^2} \norm{X'}^2.
    \end{align*}
    By Lemma~\ref{lem:rudelsonvershynin}, with high probability, \[
        \norm{X'} \le 2 \sqrt{d}. 
    \]
    Furthermore, since $\wh{x}$ and $\wh{y}$ are independent of $X'$, by Claim~\ref{claim:subgamma-sum}, with high probability, \[
        \norm{X'\wh x}^2 \le n + o(n) \quad \text{and} \quad \norm{X' \wh y}^2 \le n + o(n).
    \]
    Therefore, with high probability, \[
        \norm{X'v}^2 \leq 1.1 n < \frac{d}{8k}.
    \]
    Hence, with high probability, \[
        \norm{Xv'}^2 - \norm{Xv}^2 > 0.
    \]

\end{proof}

\begin{lemma}
    \label{lem:approx_direction}
    Suppose $k \ge C$, $n \le \frac{d}{9k}$ and $\eps \le \frac{1}{Ck^2}$ for a sufficiently large constant $C$. For any $\eps$-approximate PCA solution $w$, ${\inner{w, y}} \ge \Omega(\sqrt{d}/{k})$ with high probability.
\end{lemma}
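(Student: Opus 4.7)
The plan is to exploit the bound $|b| \ge 1/(3k)$ from Lemma~\ref{lem:b_is_large}: since $v^*$ itself carries a non-negligible $\wh{y}$-component, any vector close enough to $v^*$ must also carry a non-negligible $\wh{y}$-component. Throughout I interpret the conclusion as $|\inner{w,y}| \ge \Omega(\sqrt{d}/k)$; the stated form is recovered by replacing $w$ with $-w$ if necessary, which does not affect the PCA approximation quality.

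First, I would decompose the candidate solution as $w = \alpha v^* + \beta u$ with $u \perp v^*$, $\norm{u}=1$, and $\alpha^2+\beta^2=1$. The approximation guarantee $\norm{Pw}^2 \le \eps$ is exactly $\beta^2 \le \eps$, so $|\alpha| \ge \sqrt{1-\eps}$. Next, substituting $v^* = a\wh{x} + b\wh{y} + c\wt{w}$ gives
\[
  \inner{w,\wh{y}} \;=\; \alpha\,b \;+\; \beta\,\inner{u,\wh{y}},
\]
and the reverse triangle inequality combined with $|\inner{u,\wh{y}}|\le 1$ yields
\[
  |\inner{w,\wh{y}}| \;\ge\; |\alpha|\,|b| \;-\; |\beta| \;\ge\; \sqrt{1-\eps}\cdot\frac{1}{3k} \;-\; \sqrt{\eps},
\]
using $|b| \ge 1/(3k)$ from Lemma~\ref{lem:b_is_large}, which holds with high probability under the hypotheses $k \ge C$ and $n \le d/(9k)$.

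Finally, I would choose the constant $C$ in the hypothesis $\eps \le 1/(Ck^2)$ large enough that $\sqrt{\eps} \le 1/(12k)$ and $\sqrt{1-\eps} \ge 3/4$, so the right-hand side is at least $1/(4k) - 1/(12k) = \Omega(1/k)$. Multiplying by $\norm{y} = \sqrt{d/2}$ gives $|\inner{w,y}| = \norm{y}\cdot|\inner{w,\wh{y}}| = \Omega(\sqrt{d}/k)$, as claimed.

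The main obstacle is purely bookkeeping: calibrating the universal constant $C$ so that the approximation error term $\sqrt{\eps}$ is dominated by the $|b|$-based term $\sqrt{1-\eps}/(3k)$. Since the heavy lifting (the lower bound on $|b|$ and the spectral estimates) is already in Lemmas~\ref{lem:c_equals_0} and~\ref{lem:b_is_large}, no new probabilistic tools are required here.
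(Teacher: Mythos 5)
Your proposal is correct and follows essentially the same route as the paper: invoke Lemma~\ref{lem:b_is_large} to get $|\inner{v^*,y}| \ge \Omega(\sqrt{d}/k)$, write $w$ as $v^*$ plus an orthogonal perturbation of size at most $\sqrt{\eps}$, and absorb the perturbation's contribution $\sqrt{\eps}\,\norm{y}$ using $\eps \le 1/(Ck^2)$. Your version is marginally more careful about normalization ($\alpha=\sqrt{1-\eps}$) and the sign convention, but the argument is the same.
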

\begin{proof}
    By Lemma~\ref{lem:b_is_large}, with high probability, \[
        {\inner{v^*, y}} = \inner{b \wh y, y} \ge \Omega(\sqrt{d} / k) .
    \]
    Therefore, for $w = v^* + \sqrt{\eps} u$ for some unit vector $u$, we have \[
        \inner{w, y} = \inner{v^* + \sqrt{\eps} u, y}  = \inner{v^*, y} + \sqrt{\eps}\inner{u, y} \ge \Omega(\sqrt{d} / k) - \sqrt{\eps} \norm{y} \ge \Omega(\sqrt{d} / k).
    \]
\end{proof}

\begin{lemma}\label{lem:spectralgap}
  Suppose $n \le d$. The spectral gap $R$ is at least $k / 20$ with high probability.
\end{lemma}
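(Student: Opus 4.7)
The plan is to decompose $X^T X = M + {X'}^T X'$, where $M$ collects the contribution from the first $k+1$ structured rows, and then bound $\lambda_1(X^T X)$ from below and $\lambda_2(X^T X)$ from above by applying Weyl's inequality to the two pieces separately.

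First I would analyze $M = (x+y)(x+y)^T + k \, x x^T$. Its range is contained in $\mathrm{span}(\hat x, \hat y)$, so it has rank at most $2$, and I can read off its nonzero eigenvalues by restricting to this plane. Using $\|x\|^2 = \|y\|^2 = d/2$, we compute $v^T M v = \tfrac{d}{2}[(k+1) a^2 + 2ab + b^2]$ for $v = a\hat x + b\hat y$, so in the basis $(\hat x, \hat y)$ the operator $M$ is represented by $\tfrac{d}{2}\bigl(\begin{smallmatrix} k+1 & 1 \\ 1 & 1\end{smallmatrix}\bigr)$. Its two eigenvalues $\mu_+, \mu_-$ satisfy $\mu_+ + \mu_- = k+2$ and $\mu_+ \mu_- = k$, and since $\sqrt{k^2+4} \ge k$ we obtain $\mu_+ \ge k+1$ and hence $\mu_- \le 1$. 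Therefore $\lambda_1(M) \ge (k+1)d/2$ while $\lambda_2(M) \le d/2$ (and all other eigenvalues of $M$ are zero). Separately, since $X' \in \{-1,1\}^{n \times d}$ has i.i.d.\ Rademacher entries and $n \le d$, Lemma~\ref{lem:rudelsonvershynin} (as already invoked earlier in this section) gives $\|X'\| \le 2\sqrt{d}$ with high probability, i.e.\ $\lambda_1({X'}^T X') \le 4d$.

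Finally, applying Weyl's inequality to $X^T X = M + {X'}^T X'$ yields
\[
  \lambda_1(X^T X) \ge \lambda_1(M) + \lambda_{\min}({X'}^T X') \ge \tfrac{(k+1)d}{2}
  \quad\text{and}\quad
  \lambda_2(X^T X) \le \lambda_2(M) + \lambda_1({X'}^T X') \le \tfrac{d}{2} + 4d = \tfrac{9d}{2},
\]
so $R = \lambda_1/\lambda_2 \ge (k+1)/9 \ge k/20$. The one point that requires care, and what I would call the main ``gotcha,'' is choosing the correct form of Weyl's inequality for $\lambda_2$: the pairing $\lambda_1(M) + \lambda_2({X'}^T X')$ would be useless since $\lambda_1(M) = \Theta(kd)$, but the other pairing $\lambda_2(M) + \lambda_1({X'}^T X')$ works precisely because $M$ is rank $2$ with small second eigenvalue, so the noise from $X'$ contributes only an additive $O(d)$ perturbation on top of $\lambda_2(M) \le d/2$.
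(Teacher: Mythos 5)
Your proof is correct and carries the same quantitative content as the paper's, just packaged differently. The paper lower-bounds $\lambda_1$ by plugging the single test vector $\wh x$ into $\norm{Xv}^2$, and upper-bounds $\lambda_2$ via Courant--Fischer restricted to $v' \perp x$, where the structured rows contribute only $\inner{y, v'}^2 \le d/2$, giving $\lambda_2 \le \norm{X'}^2 + d/2$; you instead split $X^TX = M + {X'}^TX'$ with $M$ the rank-two structured part, diagonalize its $2\times 2$ restriction exactly (trace/determinant giving $\lambda_1(M) \ge (k+1)d/2$ and $\lambda_2(M) \le d/2$), and combine with Weyl's inequality, which in this use is essentially the same min--max argument. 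Your explicit eigenvalue computation for $M$ is slightly more work than the paper's test-vector trick but makes the rank-two structure transparent, and your choice of the pairing $\lambda_2(M) + \lambda_1({X'}^TX')$ in Weyl is exactly right. One small quibble: with only $n \le d$ assumed, Lemma~\ref{lem:rudelsonvershynin} gives $\norm{X'} \le (2+\tau)\sqrt d$ for a constant $\tau > 0$ (the paper uses $3\sqrt d$), not $\norm{X'} \le 2\sqrt d$ as you state, since at $n = d$ the bound $\sqrt n + \sqrt d$ is already $2\sqrt d$; this is harmless, because $\lambda_2 \le d/2 + 9d$ still yields $R \ge (k+1)/19 \ge k/20$.
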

\begin{proof}
The first eigenvalue $\lambda_1$ of $X^TX$ satisfies
  \begin{align*}
    \lambda_1 &= \max_{\norm{v} = 1} \norm{Xv}^2 \geq \norm{X\wh x}^2 \geq (k + 1) \inner{x, \wh x}^2 \ge \frac{kd}{2}.
  \end{align*}
  The second eigenvalue $\lambda_2$ of $X^TX$ satisfies
  \begin{align*}
    \lambda_2 &= \min_v \max_{\substack{v' \perp v\\\norm{v'} = 1}} \norm{Xv'}^2\\
             &\leq \max_{\substack{v' \perp {x}\\\norm{v'} = 1}} \norm{Xv'}^2\\
               &= \max_{\substack{v' \perp {x}\\\norm{v'} = 1}} \left(\norm{X' v'}^2 + \inner{y, v'}^2 \right)\\
               &\leq \norm{X'}^2 + \frac{d}{2}.
  \end{align*}
  By Lemma~\ref{lem:rudelsonvershynin} , with high probability, $\|X'\| \le 3\sqrt{d}$. Therefore, \[
    \lambda_2 \leq 10d.
  \]
  Hence the spectral ratio
  \[
    R = \frac{\lambda_1}{\lambda_2} \ge \frac{k}{20}.
  \]
\end{proof}

\subsection{Accuracy lower bound}
\lowerbound*

\begin{proof}
  Suppose that we have such an $\eps$-approximate streaming PCA
  algorithm.  We set up a two player one-way communication protocol.
  Let $A_1 \in \{-1, 1\}^{n \times \frac{d}{2}}$ and
  $A_2 \in \{-1, 1\}^{n \times \frac{d}{2}}$ be chosen uniformly at
  random.  Let $A = [A_1, A_2] \in \{-1, 1\}^{n \times d}$ be their
  concatenation. Let $A' = [A_1, 0] \in \{0, -1,  1\}^{n \times d}$ be the matrix that pads $A_1$ to $d$ columns with $0$.

  In this protocol, Alice receives $A = [A_1, A_2]$ and Bob receives
  $A_1$.  Alice feeds $A$ to the streaming algorithm, reaching some
  stream state $S$, which she sends to Bob.  Bob uses $A_1$ and $S$ to
  construct an approximation $\wh{A}$ to $A_2$ in the following
  fashion.  For each $i \in [n]$, 
  Bob sets the streaming algorithm's state to $S$, inserts the $i$-th row of $A'$ for $k$ times and computes the algorithm's approximate PCA
  solution $\wh{v}_i$. Let $\wh{V} \in \R^{n \times d}$ be the matrix with the $i$-th row being $\wh{v}_i$.  Let
  $\wh{V}_2 \in R^{n \times \frac{d}{2}}$ be the last $d / 2$ columns
  of $\wh{V}$.  We will show that $I(A_2; \wh{V}) \gtrsim d^2 / R^3$ for
  an appropriate choice of parameters.

  Note that when Bob produces $\wh{v}_i$, the streaming algorithm has
  effectively seen the stream $A$ followed by $k$ vectors that match
  the $i$th row of $A$.  Up to reordering of rows, this is distributed
  identically to \textsc{PartialDuplicate}.  Reordering
  the rows, of course, does not change the covariance matrix.

  We choose $k = \max(20R, C)$, $n = \frac{d}{9k}$ and $\eps =\frac{1}{Ck^2}$ for the
  constant $C$ in Lemma~\ref{lem:approx_direction}.
  By Lemma~\ref{lem:spectralgap}, with high probability the stream has
  spectral gap at least $k/20 \geq R$.  Therefore the streaming
  algorithm's PCA solution should be $\eps$-approximate with at least
  $2/3$ probability. Then
  Lemma~\ref{lem:info-inequality} says that
  \[
    I(\wh{V}; A_2) \geq \Omega\left( \frac{1}{k^2} \cdot \frac{d}{k} \cdot \frac{d}{2}\right) - d  = \Omega\left({d^2}/{R^3}\right).
     \]
  Now, $\wh{V}$ is independent of $A_2$ conditioned on $(S, A_1)$ so by the data processing inequality,
  \[
    I(\wh{V}; A_2) \leq I(A_1, S; A_2) \leq I(A_1; A_2) + I(S; A_2 \mid A_1) \leq 0 + H(S).
  \]
  Thus, if the state $S$ contains $\abs{S}$ bits, we have
  \begin{align*}
    \Omega(d^2/R^3) \leq H(S) = H(\abs{S}) + H(S \mid \abs{S}) \leq \E[\abs{S}] + H(\abs{S})
  \end{align*}
  Now, for any random variable $X$ over positive integers,
  \begin{align*}
    H(X) &= \sum_{i=1}^{\infty} p(i) \log \frac{1}{p(i)}\\
         &= \left(\sum_{i: p(i) \leq 2^{-i}} p(i) \log \frac{1}{p(i)}\right) + \left(\sum_{i: p(i) > 2^{-i}} p(i) \log \frac{1}{p(i)}\right)\\
         &\leq \left(\sum_{i: p(i) \leq 2^{-i}} 2^{-i} \cdot i \right) + \left(\sum_{i: p(i) > 2^{-i}} i p(i) \right)\\
         &= 2 + \E[X]
  \end{align*}
  so $\Omega(d^2/R^3) \leq 2\E[\abs{S}] + 2$, or
  \[
    \E[\abs{S}] \geq \Omega(d^2/R^3).
  \]
  Thus the streaming algorithm must store $\Omega(d^2/R^3)$ bits on
  average after Alice has finished feeding in her part of the stream.
\end{proof}

\section*{Acknowledgments}
We thank David Woodruff and anonymous reviewers for helpful comments.
Eric Price and Zhiyang Xun are supported by NSF award CCF-1751040 (CAREER) and the NSF AI Institute for Foundations of Machine Learning (IFML).

\printbibliography

@inproceedings{rudelson2010non,
  title={Non-asymptotic theory of random matrices: extreme singular values},
  author={Rudelson, Mark and Vershynin, Roman},
  booktitle={Proceedings of the International Congress of Mathematicians 2010 (ICM 2010) (In 4 Volumes) Vol. I: Plenary Lectures and Ceremonies Vols. II--IV: Invited Lectures},
  pages={1576--1602},
  year={2010},
  organization={World Scientific}
}

@article{feldheim2010universality,
  title={A universality result for the smallest eigenvalues of certain sample covariance matrices},
  author={Feldheim, Ohad N and Sodin, Sasha},
  journal={Geometric And Functional Analysis},
  volume={20},
  number={1},
  pages={88--123},
  year={2010},
  publisher={Springer}
}

@article{upadhyay2016fast,
  title={Fast and space-optimal low-rank factorization in the streaming model with application in differential privacy.},
  author={Upadhyay, Jalaj},
  journal={NeurIPS},
  year=2018
}

@article{woodruff2014sketching,
  title={Sketching as a tool for numerical linear algebra},
  author={Woodruff, David P},
  journal={Foundations and Trends{\textregistered} in Theoretical Computer Science},
  volume={10},
  number={1--2},
  pages={1--157},
  year={2014},
  publisher={Now Publishers, Inc.}
}

@inproceedings{clarkson2009numerical,
  title={Numerical linear algebra in the streaming model},
  author={Clarkson, Kenneth L and Woodruff, David P},
  booktitle={Proceedings of the forty-first annual ACM symposium on Theory of computing},
  pages={205--214},
  year={2009}
}

@article{ghashami2016frequent,
  title={Frequent directions: Simple and deterministic matrix sketching},
  author={Ghashami, Mina and Liberty, Edo and Phillips, Jeff M and Woodruff, David P},
  journal={SIAM Journal on Computing},
  volume={45},
  number={5},
  pages={1762--1792},
  year={2016},
  publisher={SIAM}
}

@inproceedings{liberty2013simple,
  title={Simple and deterministic matrix sketching},
  author={Liberty, Edo},
  booktitle={Proceedings of the 19th ACM SIGKDD international conference on Knowledge discovery and data mining},
  pages={581--588},
  year={2013}
}

@inproceedings{boutsidis2016optimal,
  title={Optimal principal component analysis in distributed and streaming models},
  author={Boutsidis, Christos and Woodruff, David P and Zhong, Peilin},
  booktitle={Proceedings of the forty-eighth annual ACM symposium on Theory of Computing},
  pages={236--249},
  year={2016}
}

@article{tropp2017practical,
  title={Practical sketching algorithms for low-rank matrix approximation},
  author={Tropp, Joel A and Yurtsever, Alp and Udell, Madeleine and Cevher, Volkan},
  journal={SIAM Journal on Matrix Analysis and Applications},
  volume={38},
  number={4},
  pages={1454--1485},
  year={2017},
  publisher={SIAM}
}

@article{balsubramani2013fast,
  title={The fast convergence of incremental {PCA}},
  author={Balsubramani, Akshay and Dasgupta, Sanjoy and Freund, Yoav},
  journal={Advances in neural information processing systems},
  volume={26},
  year={2013}
}

@inproceedings{arora2012stochastic,
  title={Stochastic optimization for {PCA} and {PLS}},
  author={Arora, Raman and Cotter, Andrew and Livescu, Karen and Srebro, Nathan},
  booktitle={2012 50th Annual Allerton Conference on Communication, Control, and Computing (Allerton)},
  pages={861--868},
  year={2012},
  organization={IEEE}
}

@article{mitliagkas2013memory,
  title={Memory limited, streaming {PCA}},
  author={Mitliagkas, Ioannis and Caramanis, Constantine and Jain, Prateek},
  journal={Advances in neural information processing systems},
  volume={26},
  year={2013}
}

@article{hardt2014noisy,
  title={The noisy power method: A meta algorithm with applications},
  author={Hardt, Moritz and Price, Eric},
  journal={Advances in neural information processing systems},
  volume={27},
  year={2014}
}

@article{oja1982simplified,
  title={Simplified neuron model as a principal component analyzer},
  author={Oja, Erkki},
  journal={Journal of mathematical biology},
  volume={15},
  number={3},
  pages={267--273},
  year={1982},
  publisher={Springer}
}

@inproceedings{jain2016streaming,
  title={Streaming {PCA}: Matching matrix bernstein and near-optimal finite sample guarantees for {Oja}’s algorithm},
  author={Jain, Prateek and Jin, Chi and Kakade, Sham M and Netrapalli, Praneeth and Sidford, Aaron},
  booktitle={Conference on learning theory},
  pages={1147--1164},
  year={2016},
  organization={PMLR}
}

@inproceedings{huang2021streaming,
  title={Streaming k-{PCA}: Efficient guarantees for {Oja}’s algorithm, beyond rank-one updates},
  author={Huang, De and Niles-Weed, Jonathan and Ward, Rachel},
  booktitle={Conference on Learning Theory},
  pages={2463--2498},
  year={2021},
  organization={PMLR}
}

@article{lunde2021bootstrapping,
  title={Bootstrapping the error of Oja's Algorithm},
  author={Lunde, Robert and Sarkar, Purnamrita and Ward, Rachel},
  journal={Advances in Neural Information Processing Systems},
  volume={34},
  pages={6240--6252},
  year={2021}
}

@inproceedings{allen2017first,
  title={First efficient convergence for streaming k-{PCA}: a global, gap-free, and near-optimal rate},
  author={{Allen-Zhu}, Zeyuan and Li, Yuanzhi},
  booktitle={2017 IEEE 58th Annual Symposium on Foundations of Computer Science (FOCS)},
  pages={487--492},
  year={2017},
  organization={IEEE}
}

@article{huang2021matrix,
  title={Matrix concentration for products},
  author={Huang, De and Niles-Weed, Jonathan and Tropp, Joel A and Ward, Rachel},
  journal={Foundations of Computational Mathematics},
  pages={1--33},
  year={2021},
  publisher={Springer}
}

@inproceedings{balcan2016improved,
  title={An improved gap-dependency analysis of the noisy power method},
  author={Balcan, Maria-Florina and Du, Simon Shaolei and Wang, Yining and Yu, Adams Wei},
  booktitle={Conference on Learning Theory},
  pages={284--309},
  year={2016},
  organization={PMLR}
}

@article{woodruff2014low,
  title={Low rank approximation lower bounds in row-update streams},
  author={Woodruff, David},
  journal={Advances in neural information processing systems},
  volume={27},
  year={2014}
}

@article{LaurentMassart,
author = {Laurent, B. and Massart, Pascal},
year = {2000},
month = {10},
pages = {},
title = {Adaptive estimation of a quadratic functional by model selection},
volume = {28},
journal = {Annals of Statistics},
doi = {10.1214/aos/1015957395}
}

@inproceedings{jalal2021instance,
  title={Instance-Optimal Compressed Sensing via Posterior Sampling},
  author={Jalal, A and Karmalkar, S and Dimakis, A and Price, E},
  booktitle={International Conference on Machine Learning (ICML)},
  year={2021}
}

@inproceedings{MV10,
author = {Micciancio, Daniele and Voulgaris, Panagiotis},
title = {Faster Exponential Time Algorithms for the Shortest Vector Problem},
year = {2010},
isbn = {9780898716986},
publisher = {Society for Industrial and Applied Mathematics},
address = {USA},
abstract = {We present new faster algorithms for the exact solution of the shortest vector problem in arbitrary lattices. Our main result shows that the shortest vector in any n-dimensional lattice can be found in time 23.199n (and space 21.325n), or in space 21.095n (and still time 2O(n)). This improves the best previously known algorithm by Ajtai, Kumar and Sivakumar [Proceedings of STOC 2001] which was shown by Nguyen and Vidick [J. Math. Crypto. 2(2):181--207] to run in time 25.9n and space 22.95n. We also present a practical variant of our algorithm which provably uses an amount of space proportional to τn, the "kissing" constant in dimension n. No upper bound on the running time of our second algorithm is currently known, but experimentally the algorithm seems to perform fairly well in practice, with running time 20.52n, and space complexity 20.2n.},
booktitle = {Proceedings of the Twenty-First Annual ACM-SIAM Symposium on Discrete Algorithms},
pages = {1468–1480},
numpages = {13},
keywords = {algorithm analysis, shortest vector problem, cryptography, sieving algorithms, software implementations},
location = {Austin, Texas},
series = {SODA '10}
}

@misc{cryptoeprint:2015/1128,
      author = {Anja Becker and Léo Ducas and Nicolas Gama and Thijs Laarhoven},
      title = {New directions in nearest neighbor searching with applications to lattice sieving},
      howpublished = {Cryptology ePrint Archive, Paper 2015/1128},
      year = {2015},
      note = {\url{https://eprint.iacr.org/2015/1128}},
      url = {https://eprint.iacr.org/2015/1128}
}

@article{agarwal2013mergeable,
  title={Mergeable summaries},
  author={Agarwal, Pankaj K and Cormode, Graham and Huang, Zengfeng and Phillips, Jeff M and Wei, Zhewei and Yi, Ke},
  journal={ACM Transactions on Database Systems (TODS)},
  volume={38},
  number={4},
  pages={1--28},
  year={2013},
  publisher={ACM New York, NY, USA}
}

@article{kumar2024streaming,
  title={Streaming PCA for Markovian Data},
  author={Kumar, Syamantak and Sarkar, Purnamrita},
  journal={Advances in Neural Information Processing Systems},
  volume={36},
  year={2024}
}

@article{johnstone2001distribution,
  title={On the distribution of the largest eigenvalue in principal components analysis},
  author={Johnstone, Iain M},
  journal={The Annals of statistics},
  volume={29},
  number={2},
  pages={295--327},
  year={2001},
  publisher={Institute of Mathematical Statistics}
}

@InProceedings{LiWoodruff,
  author =	{Li, Yi and Woodruff, David P.},
  title =	{{Tight Bounds for Sketching the Operator Norm, Schatten Norms, and Subspace Embeddings}},
  booktitle =	{Approximation, Randomization, and Combinatorial Optimization. Algorithms and Techniques (APPROX/RANDOM 2016)},
  pages =	{39:1--39:11},
  series =	{Leibniz International Proceedings in Informatics (LIPIcs)},
  ISBN =	{978-3-95977-018-7},
  ISSN =	{1868-8969},
  year =	{2016},
  volume =	{60},
  editor =	{Jansen, Klaus and Mathieu, Claire and Rolim, Jos\'{e} D. P. and Umans, Chris},
  publisher =	{Schloss Dagstuhl -- Leibniz-Zentrum f{\"u}r Informatik},
  address =	{Dagstuhl, Germany},
  URL =		{https://drops.dagstuhl.de/entities/document/10.4230/LIPIcs.APPROX-RANDOM.2016.39},
  URN =		{urn:nbn:de:0030-drops-66620},
  doi =		{10.4230/LIPIcs.APPROX-RANDOM.2016.39},
  annote =	{Keywords: data streams, sketching, matrix norms, subspace embeddings}
}

@article{JL,
  title={Extensions of Lipschitz mappings into Hilbert space},
  author={William B. Johnson and Joram Lindenstrauss},
  journal={Contemporary mathematics},
  year={1984},
  volume={26},
  pages={189-206},
  url={https://api.semanticscholar.org/CorpusID:117819162}
}

\appendix

\section{Utility lemmas for the Lower Bounds}
We use the following bound on the maximum singular value of an iid
subgaussian matrix:

\begin{lemma}[Feldheim and Sodin~\citep{feldheim2010universality}, see also (2.4)
        of~\cite{rudelson2010non}] \label{lem:rudelsonvershynin} Let
  $A$ be an $n \times N$ random matrix with independent
  subgaussian entries of zero mean and variance $1$, for $n \leq N$.
  There exists a universal constant $c > 0$ such that
  \[
    \Pr[\norm{A} \geq \sqrt{n} + \sqrt{N} + \tau \sqrt{N}] \lesssim e^{-c n \tau^{3/2}}
  \]
  for any $\tau > 0$.
\end{lemma}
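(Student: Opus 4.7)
\emph{Proof Proposal.} The plan is to use the trace moment method of Soshnikov, extended to the Wishart / bipartite setting by Feldheim and Sodin. As a warm-up, the elementary $\epsilon$-net argument yields the strictly weaker tail $e^{-c\tau^2 N}$: cover $S^{n-1} \times S^{N-1}$ with nets of total size at most $9^{n+N}$, use that for each fixed $(u,v)$ the quantity $u^T A v = \sum_{i,j} u_i v_j A_{ij}$ is subgaussian with variance $1$, and apply Hoeffding plus a union bound. This bound is tight for constant $\tau$ but loses the correct exponent in the Tracy-Widom regime $\tau \ll 1$, so a sharper argument is required there.

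The key moment estimate is
\[
  \E\norm{A}^{2k} \leq \E\Tr\bigl((AA^T)^k\bigr) = \sum_{\gamma} \E\prod_{(a,b) \in \gamma} A_{ab},
\]
where the sum runs over closed walks $\gamma$ of length $2k$ on the complete bipartite graph $K_{n,N}$. Independence and mean-zero entries kill every walk that traverses some edge an odd number of times, so only ``even'' walks remain. Classifying these by the number $r$ of excess edge identifications past the tree-like case (each edge traversed exactly twice), and using the subgaussian moment bound $\E|A_{ab}|^{2p} \leq (Cp)^p$ after a mild truncation of the entries, a combinatorial enumeration gives
\[
  \E\Tr\bigl((AA^T)^k\bigr) \leq (\sqrt n + \sqrt N)^{2k} \cdot \exp\!\bigl(O(k^3/n^2)\bigr).
\]
The tree-like walks produce the Marchenko-Pastur edge $(\sqrt n + \sqrt N)^{2k}$, and higher-defect walks contribute the exponential correction.

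With the moment bound in hand, apply Markov's inequality: setting $\delta := \tau\sqrt N / (\sqrt n + \sqrt N) \geq \tau/2$, we have $(\sqrt n + \sqrt N)(1+\delta) \geq \sqrt n + \sqrt N + \tau\sqrt N$, so
\[
  \Pr[\norm{A} \geq \sqrt n + \sqrt N + \tau\sqrt N] \leq (1+\delta)^{-2k} \exp\!\bigl(O(k^3/n^2)\bigr) \leq \exp\!\bigl(-2k\log(1+\delta) + O(k^3/n^2)\bigr).
\]
Choosing $k \asymp n\sqrt{\tau}$ balances the two terms in the exponent at order $n\tau^{3/2}$, yielding $e^{-cn\tau^{3/2}}$ in the small-$\tau$ regime; for $\tau = \Omega(1)$ the naive $\epsilon$-net bound $e^{-c\tau^2 N}$ is already at least as strong as the claimed bound. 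The main obstacle is the combinatorial bookkeeping for walks by defect $r$, together with the truncation / universality reduction from arbitrary subgaussian moments to the Gaussian-style count — this is the technical heart of the Feldheim-Sodin argument, and once it is in place the tail follows from the Markov calculation above.
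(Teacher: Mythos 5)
The paper does not prove this lemma at all: it is imported as a black box from Feldheim--Sodin (via (2.4) of the Rudelson--Vershynin survey), so there is no internal argument to compare yours against. Your outline does identify the route taken in the cited source --- a Soshnikov-style trace-moment method adapted to sample covariance (bipartite) walk counts, with a truncation step to reduce general subgaussian entries to a controllable moment count --- and your closing Markov computation is arithmetically sound: with $\delta=\tau\sqrt{N}/(\sqrt{n}+\sqrt{N})\ge\tau/2$ and $k\asymp n\sqrt{\tau}$ with a small implicit constant, the exponent $-2k\log(1+\delta)+O(k^3/n^2)$ is indeed $-\Omega(n\tau^{3/2})$ for bounded $\tau$.

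As a proof, however, there is a genuine gap: the displayed estimate $\E\Tr\bigl((AA^T)^k\bigr)\le(\sqrt{n}+\sqrt{N})^{2k}\exp\bigl(O(k^3/n^2)\bigr)$ is essentially the whole theorem and is asserted rather than proved --- the classification of even walks by excess identifications, the handling of high-multiplicity edges via subgaussian moments after truncation, and the range of $k$ for which the bound holds are the entire difficulty, as you acknowledge. Two quantitative points need attention even at the outline level. First, the crude $\eps$-net bound does not take over at $\tau=\Omega(1)$: nets lose a multiplicative constant in front of $\sqrt{n}+\sqrt{N}$, so they only cover $\tau$ above a sufficiently large constant; hence the moment argument must cover all bounded $\tau$, which forces $k\asymp n$, and one must verify that the walk-counting estimate remains valid for $k$ that large in the rectangular subgaussian setting (Feldheim--Sodin's bounds do extend that far, but this is not automatic from the Wigner-case heuristic $k=o(n^{2/3})$). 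Second, the trace bound carries polynomial prefactors in $n$ and $N$, which are not absorbed for free in the regime $1\le n\tau^{3/2}\lesssim\log(nN)$ where the claimed inequality is still nontrivial; the constants and the choice of $k$ must be tuned to swallow them. None of this is fatal --- it is precisely what the cited paper does --- but your write-up should be read as a correct roadmap to the citation rather than a self-contained proof.
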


The following is essentially a restatement of the JL lemma for $\pm 1$
matrices:
\begin{claim}\label{claim:subgamma-sum}
  Let $u\in \R^d$ be a unit vector, and $X \in \{-1, 1\}^{n \times d}$
  independently and uniformly.  Then
  \[
    \E[\norm{Xu}^2] = n
  \]
  and with $1-\delta$ probability
  \[
    \abs{\norm{Xu}^2 - n} \lesssim \sqrt{n \log
      \frac{1}{\delta}}  + \log \frac{1}{\delta}.
  \]
\end{claim}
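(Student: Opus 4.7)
The plan is to prove both statements by decomposing $\|Xu\|^2 = \sum_{i=1}^n \langle X_i, u\rangle^2$, where $X_i$ denotes the $i$-th row of $X$. For the expectation, since the entries of $X_i$ are independent mean-zero with variance $1$, we have $\E[\langle X_i, u\rangle^2] = \sum_{j,k} u_j u_k \E[X_{ij} X_{ik}] = \sum_j u_j^2 = 1$. Summing over $n$ rows gives $\E[\|Xu\|^2] = n$, establishing the first assertion.

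For the tail bound, I would write $\|Xu\|^2 - n = \sum_{i=1}^n Z_i$ where $Z_i := \langle X_i, u\rangle^2 - 1$ are iid and mean-zero. The key observation is that $\langle X_i, u\rangle$ is a sum of independent random variables bounded by $|u_j|$, so by Hoeffding's lemma it is subgaussian with variance proxy $\sum_j u_j^2 = 1$. Squaring a subgaussian random variable yields a subexponential random variable, so each $Z_i$ is subexponential with parameters that are absolute constants. Bernstein's inequality for sums of iid subexponential random variables then gives
\[
\Pr\!\left[\,\Bigl|\sum_{i=1}^n Z_i\Bigr| \geq t\,\right] \leq 2\exp\!\left(-c\,\min\!\left(\frac{t^2}{n},\, t\right)\right)
\]
for some absolute constant $c > 0$.

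Setting this probability equal to $\delta$ splits into two regimes: when $t \lesssim n$, the Gaussian regime dominates and gives $t \asymp \sqrt{n \log(1/\delta)}$; when $t \gtrsim n$, the exponential regime dominates and gives $t \asymp \log(1/\delta)$. Combining, the deviation is at most $O(\sqrt{n\log(1/\delta)} + \log(1/\delta))$ with probability $1 - \delta$, which is exactly the stated bound.

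There is no real obstacle here; this is a standard Hanson--Wright-type result for the special case of a rank-one quadratic form in $\pm 1$ variables. The only minor care needed is to justify the subexponential behavior of $Z_i$ from the subgaussianity of $\langle X_i, u\rangle$, and to track that the sub-exponential parameters are $O(1)$ uniformly in $u$ (they only depend on $\|u\|_2 = 1$, not on $d$). Alternatively, one could give a direct proof via the moment generating function: for $|\lambda| \leq 1/2$, $\E[\exp(\lambda Z_i)]$ can be bounded by expanding $\E[\exp(\lambda \langle X_i, u\rangle^2)]$ using the subgaussian MGF bound $\E[\exp(s\langle X_i, u\rangle)] \leq e^{s^2/2}$ and a Gaussian integral trick, yielding the same Bernstein-style tail directly.
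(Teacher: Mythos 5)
Your proposal is correct and follows essentially the same route as the paper: both decompose $\norm{Xu}^2$ into the coordinates of $z = Xu$, note each coordinate is subgaussian with variance parameter $1$ so its square is subexponential/subgamma with constant parameters, and conclude via a Bernstein-type tail for the sum, giving the $\sqrt{n\log\frac{1}{\delta}} + \log\frac{1}{\delta}$ deviation. The only difference is terminology (you say subexponential plus Bernstein; the paper says subgamma), which is immaterial.
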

\begin{proof}
  Let $z = Xu$.  The coordinates $z_i$ are independent,
  mean zero, variance $1$, and subgaussian with variance parameter
  $1$.  The expectation bound is trivial: sum the variance over $n$
  independent coordinates.  For concentration, each coordinate $z_i^2$
  is a squared subgaussian, and hence subgamma with $(\sigma, c)$
  parameters $(O(1), O(1))$.  Then $\sum_i z_i^2$ is subgamma with
  parameters $(O(\sqrt{n}), O(1))$.  Hence with probability
  $1-\delta$ we have
  \[
    \abs{\norm{Xu}^2 - n} \lesssim \sqrt{n \log \frac{1}{\delta}} +
    \log \frac{1}{\delta}.
  \]
\end{proof}

\begin{lemma}\label{lem:info-inequality}
  Let $X \in \{-1, 1\}^{n \times d}$ be uniformly distributed, and let
  $Y \in \R^{n \times d}$ have rows of norm at most $1$ such that each row
  $i \in [n]$ has $\abs{\inner{x_i, y_i}} > a \sqrt{d}$ with at least
  $50\%$ probability, for $a > 0$.  Then
  \[
    I(X; Y) \geq \Omega( a^2 nd) - n.
  \]
\end{lemma}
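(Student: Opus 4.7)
The plan is to reduce the information bound row-by-row and then, for each row, extract $\Omega(a^2 d)$ bits from the large-deviation event $\{\abs{\inner{x_i, y_i}} > a\sqrt{d}\}$: this event is very rare under the product marginal but happens with constant probability under the joint law, so the posterior and prior laws of $x_i$ given $y_i$ must be far apart in KL.

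First I would reduce to one row. Since the $x_i$ are independent and each uniform on $\{-1,1\}^d$, $H(X) = nd$; combined with the subadditivity bound $H(X \mid Y) \leq \sum_i H(x_i \mid Y) \leq \sum_i H(x_i \mid y_i)$, this gives $I(X; Y) \geq \sum_{i=1}^n I(x_i; y_i)$. It therefore suffices to show $I(x_i; y_i) \geq \Omega(a^2 d) - 1$ for each $i$.

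Next, fix $i$ and let $E_i = \{\abs{\inner{x_i, y_i}} > a\sqrt{d}\}$. For each realization $y$ of $y_i$, write $p_y := \Pr[E_i \mid y_i = y]$ and $q_y := \Pr_{x \sim \mathrm{Unif}\{-1,1\}^d}[\abs{\inner{x, y}} > a\sqrt{d}]$. Because $\1[E_i]$ is a deterministic function of $(x_i, y_i)$, applying data processing for $\KL$ pointwise in $y$ gives
\[
  I(x_i; y_i) = \Ex[y_i]{\KL(P_{x_i \mid y_i} \,\|\, P_{x_i})} \;\geq\; \Ex[y_i]{\KL(\mathrm{Ber}(p_{y_i}) \,\|\, \mathrm{Ber}(q_{y_i}))}.
\]
Since $\norm{y} \leq 1$, Hoeffding's inequality bounds $q_y \leq 2 e^{-a^2 d/2}$ uniformly in $y$, while by hypothesis $\Ex[y_i]{p_{y_i}} = \Pr[E_i] \geq 1/2$. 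Expanding the Bernoulli $\KL$ as $p\log(p/q) + (1-p)\log((1-p)/(1-q))$ and using $p\log p \geq -1/e$, $(1-p)\log(1-p) \geq -1/e$, and $\log(1-q_y) \leq 0$, one obtains the pointwise bound $\KL(\mathrm{Ber}(p_y) \,\|\, \mathrm{Ber}(q_y)) \geq p_y \log(1/q_y) - 2/e$. Averaging over $y_i$ and plugging in $\log(1/q_y) \geq a^2 d/2 - \log 2$ together with $\Ex{p_{y_i}} \geq 1/2$ yields $I(x_i; y_i) \geq a^2 d/4 - O(1)$; summing over $i$ gives $I(X; Y) \geq \Omega(a^2 nd) - O(n)$, as desired.

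The only delicate point is this last step: the hypothesis only controls the average $\Ex{p_{y_i}}$, not the pointwise values, so most realizations $y_i$ may have $p_{y_i}$ close to $0$. Nonnegativity of $\KL$ ensures these contribute nothing negative, and the bound $\KL \geq p_y \log(1/q_y) - O(1)$ is \emph{linear} in $p_y$, so the assumption $\Ex{p_{y_i}} \geq 1/2$ translates directly into the desired $\Omega(a^2 d)$ lower bound after averaging. Everything else is routine KL/Hoeffding bookkeeping.
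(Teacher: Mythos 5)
Your proof is correct, and it reaches the same bound (up to an immaterial $-O(n)$ versus $-n$ additive loss, which is absorbed by a case split on whether $a^2 d$ exceeds a constant) via a somewhat different route than the paper. Both arguments hinge on the same core estimate — for a fixed row $y$ with $\norm{y}\le 1$, the event $\abs{\inner{x,y}} > a\sqrt{d}$ has probability at most $2e^{-a^2 d/2}$ under the uniform prior on $x$ — but the bookkeeping differs. The paper stays with plain counting: it conditions on the global indicator vector $b\in\{0,1\}^n$ of which rows satisfy the inner-product condition, bounds the number of matrices $X$ consistent with $(Y,b)$ by $2^{nd-\Omega(a^2\norm{b}_1 d)}$, and pays $H(b)\le n$, using $\E[\norm{b}_1]\ge n/2$ from the per-row hypothesis. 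You instead tensorize first, $I(X;Y)\ge \sum_i I(x_i;y_i)$ (valid here because the rows of $X$ are independent and uniform, so $H(X)=\sum_i H(x_i)$, and conditioning on all of $Y$ only decreases $H(x_i\mid\cdot)$), and then lower bound each $I(x_i;y_i)$ by a KL data-processing step to the Bernoulli indicator, using that the event is exponentially rare under the prior but has constant probability on average under the posterior; the linearity of the bound $\KL(\mathrm{Ber}(p_y)\Vert\mathrm{Ber}(q_y))\ge p_y\log(1/q_y)-O(1)$ in $p_y$ is exactly what lets the averaged hypothesis $\E_{y_i}[p_{y_i}]\ge 1/2$ suffice, which is the one delicate point and you handle it correctly. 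The trade-off: the paper's counting argument is more elementary (no divergence machinery, the $-n$ loss appears transparently as $H(b)$), while your per-row KL version is modular, gives an explicit per-row constant ($\approx a^2 d/4 - O(1)$), and isolates the probabilistic content in a single standard Hoeffding-plus-data-processing step.
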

\begin{proof}
  For any row $y$, when $x \in \{-1, 1\}^d$ uniformly at
  random, $\inner{x, y}$ is subgaussian with variance parameter $\norm{y}^2 \leq 1$, so
  \[
    \Pr[\abs{\inner{x, y}} > a \sqrt{d}] \leq 2 e^{-a^2 d/2},
  \]
  so the number of $x$ with $\abs{\inner{x, y}} > a \sqrt{d}$ is at
  most $2^{(1 - \Omega(a^2))d}$.  Let $b \in \{0, 1\}^n$ denote the
  indicator vector with $b_i = 1$ if
  $\abs{\inner{x_i, y_i}} > a \sqrt{d}$ and $b_i = 0$ otherwise.

  For any $Y, b$, let $S_{Y,b} \subseteq \{-1, 1\}^{n \times d}$ be the
  set of possible $X$ that satisfy the inner product condition
  $\abs{\inner{x_i, y_i}} > a \sqrt{d}$ for all rows $i \in [n]$
  with $b_i = 1$.  Each row with $b_i = 1$ has at most $2^{(1-\Omega(a^2))d}$ values of
  $x_i$ in the support, so
  \[
    \abs{S_{Y,b}} \leq 2^{nd - \Omega(a^2 \norm{b}_1 d)}.
  \]
  We have $\E[\norm{b}_1] \geq \frac{n}{2}$, so
  \[
    H(X \mid Y) \leq H(X \mid Y, b) + H(b) \leq (\E_{Y, b} \log \abs{S_{Y,b}}) + n \leq (1 - \Omega(\frac{1}{2} a^2))nd  + n
  \]
  so
  \[
    I(X; Y) = H(X) - H(X \mid Y) \geq \Omega(a^2 nd) - n.
  \]
\end{proof}


\begin{claim}\label{claim:maxa}
  Let $A, B > 0$.  Then
  \[
    A a^2 + B a b \leq \frac{a^2 + b^2}{2}(A + \sqrt{A^2 + B^2}),
  \]
  with equality if $\frac{a^2}{a^2 + b^2} = \frac{1 + \sqrt{\frac{A^2}{A^2 + B^2}}}{2} $.
\end{claim}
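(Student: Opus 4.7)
The plan is a direct optimization argument. Since the inequality is homogeneous of degree~$2$ in $(a,b)$, I would first reduce to the unit-circle case $a^2+b^2=1$ by rescaling, so that the claim becomes the statement that $f(a,b) := Aa^2 + Bab$ is bounded above by $\tfrac{1}{2}(A + \sqrt{A^2+B^2})$ on the unit circle.

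On the unit circle I would parametrize $a = \cos\theta$, $b = \sin\theta$, and apply the double-angle identities $a^2 = \tfrac{1+\cos 2\theta}{2}$ and $ab = \tfrac{\sin 2\theta}{2}$. This rewrites the left-hand side as
\[
A a^2 + B a b \;=\; \frac{A}{2} + \frac{1}{2}\bigl(A \cos 2\theta + B \sin 2\theta\bigr).
\]
The quantity in parentheses is a single sinusoid in $2\theta$ with amplitude $\sqrt{A^2+B^2}$, so by the usual Cauchy--Schwarz (or amplitude) bound it is at most $\sqrt{A^2+B^2}$. This yields the stated upper bound.

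For the equality case, the amplitude bound is saturated when $(\cos 2\theta, \sin 2\theta)$ is parallel to $(A, B)$, i.e.\ $\cos 2\theta = A/\sqrt{A^2+B^2}$ (with $\sin 2\theta = B/\sqrt{A^2+B^2}$ to pick out the positive sign). Translating back via $a^2 = \tfrac{1+\cos 2\theta}{2}$ and recalling the normalization $a^2+b^2 = 1$, this is exactly $a^2/(a^2+b^2) = (1 + \sqrt{A^2/(A^2+B^2)})/2$, matching the stated equality condition. No real obstacles arise; the only thing to be careful about is the sign choice for $b$, which is resolved by taking the branch that makes the sinusoid positive.
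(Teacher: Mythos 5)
Your proof is correct. It shares with the paper the initial homogeneity reduction to the unit circle $a^2+b^2=1$, but after that the two arguments diverge: the paper maximizes $Aa^2 + Ba\sqrt{1-a^2}$ by calculus, setting the derivative to zero, squaring to get a quadratic in $a^2$, selecting the right root, and substituting back to evaluate the maximum; you instead parametrize $a=\cos\theta$, $b=\sin\theta$, use the double-angle identities to rewrite the objective as $\tfrac{A}{2} + \tfrac{1}{2}\left(A\cos 2\theta + B\sin 2\theta\right)$, and invoke the amplitude bound $A\cos 2\theta + B\sin 2\theta \le \sqrt{A^2+B^2}$. Your route is shorter and avoids the squaring step (which in the paper forces a check of which root of the quadratic is the true maximizer), and it makes the equality condition transparent: $(\cos 2\theta,\sin 2\theta)$ parallel to $(A,B)$ immediately gives $a^2 = \tfrac{1}{2}\bigl(1+\sqrt{A^2/(A^2+B^2)}\bigr)$, which is the stated condition. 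You also correctly flag the sign issue (equality additionally needs $ab \ge 0$ so that the $B\sin 2\theta$ term is positive), a subtlety the paper's statement leaves implicit; your choice of the positive branch handles it. Both arguments are elementary and fully rigorous, so nothing is lost by your approach, and arguably the closed-form equality case comes out more cleanly.
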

\begin{proof}
  Just ask a computer.  By hand, though: the equations are homogeneous,
  so WLOG we can assume $a^2 + b^2 = 1$.  We then maximize over
  $a \in [0, 1]$.  Taking the derivative, the maximum is achieved when
  \[
    2 A a + B (\sqrt{1-a^2}  - \frac{a^2}{\sqrt{1-a^2}}) = 0
  \]
  or
  \begin{align*}
    2Aa \sqrt{1-a^2} &= B (2a^2 - 1)\\
    4A^2a^2 (1-a^2) &= B^2 (4a^4 - 4a^2 + 1)\\
    a^4 (4B^2 + 4A^2) - a^2(4A^2 + 4B^2)  + B^2 &= 0\\
    a^2 = \frac{1 \pm \sqrt{\frac{A^2}{A^2 + B^2}}}{2}
  \end{align*}
  the first squaring preserved equality only when $a^2 \geq \frac{1}{2}$, so the optimum is at
  \[
    a^2 = \frac{1 + \sqrt{\frac{A^2}{A^2 + B^2}}}{2}.
  \]
  Then
  \begin{align*}
    A a^2 + B a \sqrt{1-a^2} &= A \frac{1 + \sqrt{\frac{A^2}{A^2 + B^2}}}{2} + B  \sqrt{\frac{1 + \sqrt{\frac{A^2}{A^2 + B^2}}}{2} \frac{1 - \sqrt{\frac{A^2}{A^2 + B^2}}}{2}}\\
                             &= A \frac{1 + \sqrt{\frac{A^2}{A^2 + B^2}}}{2} + B  \sqrt{\frac{\frac{B^2}{A^2 + B^2}}{4}}\\
                             &= \frac{1}{2}(A + \sqrt{A^2 + B^2}).
  \end{align*}
\end{proof}

\begin{lemma}[Laurent-Massart Bounds\cite{LaurentMassart}]
    \label{lem:chi_squared_concentration}
    Let $v \sim \mathcal{N}(0, I_n)$. For any $t > 0$, 
    \[
    \Pr[\norm{v}^2 - n \ge 2 \sqrt{nt}+ 2t] \le e^{-t},
    \]
        \[\Pr[\norm{v}^2 - n \leq -2\sqrt{nt}] \leq e^{-t}.\]
\end{lemma}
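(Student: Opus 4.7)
The plan is to apply the Cram\'er--Chernoff method to the chi-squared random variable $\norm{v}^2 = \sum_{i=1}^n v_i^2$. Since the coordinates are independent standard Gaussians with $\E[e^{\lambda v_i^2}] = (1-2\lambda)^{-1/2}$ for $\lambda < 1/2$, independence yields $\E[e^{\lambda(\norm{v}^2-n)}] = e^{-\lambda n}(1-2\lambda)^{-n/2}$, after which each tail bound follows by Markov's inequality optimized over $\lambda$.

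For the upper tail, I would write the centered log-MGF as $g(\lambda) := -\tfrac{1}{2}\log(1-2\lambda) - \lambda$ and expand its Taylor series $g(\lambda) = \sum_{k \geq 2} \frac{(2\lambda)^k}{2k}$. Bounding $\frac{1}{2k} \leq \frac{1}{4}$ for $k \geq 2$ gives the closed form $g(\lambda) \leq \frac{\lambda^2}{1-2\lambda}$, hence $\Pr[\norm{v}^2 - n \geq s] \leq \exp\bigl(-\lambda s + \tfrac{n\lambda^2}{1-2\lambda}\bigr)$. Setting $s = 2\sqrt{nt} + 2t$ and $\lambda = \tfrac{\sqrt{t}}{\sqrt{n}+2\sqrt{t}}$, equivalently $\tfrac{\lambda}{1-2\lambda} = \sqrt{t/n}$, one computes $\tfrac{n\lambda^2}{1-2\lambda} = \lambda\sqrt{nt}$, so the exponent equals $-\lambda(\sqrt{nt}+2t) = -\sqrt{t}\cdot\sqrt{t} = -t$, using the factorization $\sqrt{nt}+2t = \sqrt{t}(\sqrt{n}+2\sqrt{t})$. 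This yields the first bound.

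For the lower tail, substituting $-\lambda$ in place of $\lambda$ gives $\log \E[e^{-\lambda(\norm{v}^2-n)}] = \lambda n - \tfrac{n}{2}\log(1+2\lambda)$. The elementary inequality $\log(1+u) \geq u - u^2/2$ for $u \geq 0$, verified by comparing derivatives, upper bounds the log-MGF by $n\lambda^2$, so Markov gives $\Pr[\norm{v}^2 - n \leq -s] \leq \exp(-\lambda s + n\lambda^2)$. Optimizing over $\lambda > 0$ yields $\lambda = s/(2n)$ and the clean sub-Gaussian tail $e^{-s^2/(4n)}$; taking $s = 2\sqrt{nt}$ produces exactly $e^{-t}$.

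The main obstacle is not conceptual but bookkeeping: matching the asymmetric form of the two bounds. The upper tail carries the extra linear term $2t$ beyond the sub-Gaussian $2\sqrt{nt}$ because $(1-2\lambda)^{-1}$ explodes as $\lambda \to 1/2$, making the chi-squared distribution genuinely sub-exponential in its upper tail. The key trick is the reparametrization $\tfrac{\lambda}{1-2\lambda} = \sqrt{t/n}$, which balances the two competing terms in the Chernoff exponent and reveals $2\sqrt{nt}+2t$ as the natural threshold. The lower tail has no such obstruction, which is why a pure sub-Gaussian bound $2\sqrt{nt}$ suffices there.
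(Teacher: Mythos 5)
Your proof is correct: the moment generating function computation, the bound $g(\lambda) \leq \lambda^2/(1-2\lambda)$ via the Taylor series, the choice $\lambda = \sqrt{t}/(\sqrt{n}+2\sqrt{t})$ for the upper tail, and the sub-Gaussian bound $\log(1+u) \geq u - u^2/2$ for the lower tail all check out, and the exponents evaluate to exactly $-t$ as claimed. The paper itself gives no proof of this lemma --- it is stated with a citation to Laurent and Massart --- so there is nothing to compare against except the original reference, whose argument is essentially the same Cram\'er--Chernoff derivation you carried out; your write-up is a valid self-contained substitute for the citation.
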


\begin{lemma}[\cite{MV10}, see also \cite{cryptoeprint:2015/1128}]
    \label{lem:spherical_cap_area}
    Consider a $d$-dimensional unit sphere $S_d$.
    Let $C_h$ be a spherical cap on $S_d$ with height $h < 1$, i.e., \[
        C_h := \{v \in S_d \mid \inner{u, v} \ge 1 - h\}
    \] for some $u \in S_d$. Then the ratio of the area of $C_h$ to the area of $S_d$ is given by $ d^{\Theta(1)} \cdot (2h - h^2)^{d/2}$.
\end{lemma}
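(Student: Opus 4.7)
The plan is to evaluate the cap area explicitly in spherical coordinates centered at $u$ and compare to the total sphere area. Write any $v \in S_d$ as $v = \cos(\theta) u + \sin(\theta) w$ with $w \perp u$ a unit vector in $\mathbb{R}^d$; then $\inner{u,v} = \cos\theta$, the cap $C_h$ corresponds to $\theta \in [0, \theta^*]$ with $\theta^* = \arccos(1-h)$, and the uniform surface measure factors as $\sin^{d-2}(\theta)\, d\theta$ times the uniform measure on the equatorial $S_{d-1}$. Note $\sin\theta^* = \sqrt{2h - h^2}$.

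After this reduction, the ratio becomes
\[
  \frac{\mathrm{area}(C_h)}{\mathrm{area}(S_d)} = \frac{\int_0^{\theta^*} \sin^{d-2}(\theta)\, d\theta}{\int_0^{\pi} \sin^{d-2}(\theta)\, d\theta}.
\]
Substituting $t = \cos\theta$ turns this into
\[
  \frac{\int_{1-h}^1 (1 - t^2)^{(d-3)/2}\, dt}{\int_{-1}^{1} (1-t^2)^{(d-3)/2}\, dt}.
\]
The denominator is the beta integral $B(\tfrac12, \tfrac{d-1}{2}) = \sqrt{\pi}\,\Gamma(\tfrac{d-1}{2})/\Gamma(\tfrac{d}{2})$, which by Stirling is $\Theta(1/\sqrt{d})$, i.e. $d^{\Theta(1)}$.

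For the numerator $I := \int_{1-h}^1 (1-t^2)^{(d-3)/2} dt$, on the interval $[1-h, 1]$ the integrand is maximized at $t = 1-h$ with value $(2h - h^2)^{(d-3)/2}$ and vanishes at $t=1$, so trivially $I \le h \cdot (2h-h^2)^{(d-3)/2}$. For a matching lower bound up to $d^{\Theta(1)}$ factors, restrict to the subinterval $[1-h, 1 - h + c h / d]$ for a small constant $c$: on this subinterval $(1 - t^2) = (1-t)(1+t) \ge (2h - h^2)(1 - O(1/d))$, so $(1-t^2)^{(d-3)/2} \ge e^{-O(1)}(2h-h^2)^{(d-3)/2}$, giving $I \ge \Omega(h/d)(2h-h^2)^{(d-3)/2}$. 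Dividing by the denominator yields the ratio equal to $d^{\Theta(1)} \cdot (2h - h^2)^{(d-3)/2}$, and since $(2h-h^2)^{(d-3)/2} = (2h-h^2)^{d/2} \cdot (2h-h^2)^{-3/2}$, we absorb the $(2h-h^2)^{-3/2}$ factor into the $d^{\Theta(1)}$ constant in the regime $h = \Theta(1)$ that is used in Lemma~\ref{lem:output_information}.

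The main obstacle is the lower bound on $I$; the upper bound is immediate by monotonicity of the integrand. The key trick is the $\Theta(h/d)$-wide sub-interval near the maximum, which is the ``natural'' scale on which $(1 - t^2)^{(d-3)/2}$ decays by a constant factor: shrink from width $h$ by a factor of $d$, and the integrand changes by a factor $\bigl(1 - \Theta(1/d)\bigr)^{(d-3)/2} = \Theta(1)$. This matches the upper bound up to a $\Theta(d)$ factor, which is the desired $d^{\Theta(1)}$ slack.
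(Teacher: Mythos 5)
Your derivation is correct, and it is worth noting that the paper does not prove this statement at all: it is imported as a black box from the cited references, so your spherical-coordinates computation is a genuinely self-contained replacement. The reduction to the one-dimensional ratio $\int_{1-h}^1 (1-t^2)^{(d-3)/2}\,dt \big/ \int_{-1}^1 (1-t^2)^{(d-3)/2}\,dt$ is the standard and correct factorization of the surface measure, the beta-integral evaluation of the denominator is right, and your two-sided bound on the numerator (trivial upper bound at the endpoint maximum, plus the $\Theta(h/d)$-wide window where the integrand is within a constant of its maximum) is exactly the right elementary device; together they give
\[
  \frac{\mathrm{area}(C_h)}{\mathrm{area}(S_d)} \;=\; h\, d^{\Theta(1)}\,(2h-h^2)^{(d-3)/2},
\]
which matches the stated form $d^{\Theta(1)}(2h-h^2)^{d/2}$ after absorbing $h\,(2h-h^2)^{-3/2}$. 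The one caveat, which you correctly flag yourself, is that this absorption (and hence the lemma as literally stated for all $h<1$) requires $h$ to be bounded below by an inverse polynomial in $d$; for, say, exponentially small $h$ the factor $(2h-h^2)^{-3/2}$ is not $d^{\Theta(1)}$. Since the paper only invokes the lemma in Lemma~\ref{lem:output_information} with constant height $h = 1-3c$, your argument fully covers the regime actually needed, and in that regime it is a complete and more transparent proof than an appeal to the external reference.
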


\section{Lower Bound for Linear Sketching}
\label{app:lw}
When establishing lower bounds for approximating operator norms using linear sketching, Li and Woodruff~\cite{LiWoodruff} constructed a lower bound instance with the following properties:

\begin{lemma}
    \label{lem:LiWoodruff}
    For any $\alpha > 1.01$, there exist two distributions $\D_1$ and $\D_2$ over $\R^{d \times d}$ and $s > 0$ such that 
    \begin{enumerate}
        \item For $X \sim \D_1$, $\|X\|_2 > \sqrt{\alpha} s$ with 0.99 probability.
        \item For $X \sim \D_2$. $\|X\|_2 < s$ with 0.99 probability.
        \item For $X \sim \D_1$, the spectral gap $\lambda_1(X^TX) / \lambda_2(X^TX)$ is at least $\alpha$ with $0.99$ probability.
        \item  Let $\mathcal{L}_1$ and $\mathcal{L}_2$ be the corresponding distribution of the linear sketch of dimension $k$ on $\mathcal{D}_1$ and $\mathcal{D}_2$. Then $d_{TV}(\mathcal{L}_1, \mathcal{L}_2) < 0.1$ whenever $k \le o(d^2 / \alpha^2)$.
    \end{enumerate}
\end{lemma}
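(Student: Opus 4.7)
The plan is to use a spiked matrix construction. Let $W \in \R^{d \times d}$ have i.i.d.\ $\mathcal{N}(0,1)$ entries and let $u, v \in \R^d$ be independent uniformly random unit vectors; take $\D_2$ to be the law of $W$ and $\D_1$ the law of $X = W + \tau u v^T$ for $\tau \asymp \sqrt{\alpha d}$ to be chosen. I would place the threshold $s$ slightly above the bulk edge, $s \approx 2.01\sqrt{d}$, and pick $\tau$ large enough that $\tau - O(1) > \sqrt{\alpha}\, s$. Using an \emph{independent}-vector rank-one perturbation (rather than a symmetric $uu^T$) matters for part~(4), since it makes $\E[\tau u v^T] = 0$ and kills the leading contribution to the chi-squared divergence.

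To verify (1)--(3), I would invoke Lemma~\ref{lem:rudelsonvershynin} applied to $W$ to obtain $\|W\|_2 \le 2\sqrt{d} + o(\sqrt{d})$ with probability $0.99$, which gives (2). For (1), the one-sided estimate $\|X\|_2 \ge |u^T X v| = |\tau + u^T W v|$ together with $|u^T W v| = O(1)$ (as $u^T W v \sim \mathcal{N}(0,1)$ conditional on $u,v$) gives $\|X\|_2 \ge \tau - O(1)$, exceeding $\sqrt{\alpha}\, s$ whenever $\tau \gtrsim 3\sqrt{\alpha d}$. For (3), Weyl's inequality gives $\sigma_2(X) \le \sigma_1(W) + \sigma_2(\tau u v^T) = \|W\|_2$, so
\[
  \frac{\lambda_1(X^T X)}{\lambda_2(X^T X)} \ge \frac{(\tau - O(1))^2}{(2\sqrt{d} + o(\sqrt{d}))^2} \ge \alpha
\]
once $\tau$ is slightly larger than $2\sqrt{\alpha d}$.

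For (4), I would run the chi-squared / second-moment method. Any linear sketch of dimension $k$ is a linear map $L : \R^{d \times d} \to \R^k$; after whitening the output, assume $L$ corresponds to an orthogonal projection $\Pi$ of rank $k$ acting on $\R^{d^2}$. Then under $\mathcal{L}_2$ the sketch is centered Gaussian on $\mathrm{range}(\Pi)$, and under $\mathcal{L}_1$ it is the same Gaussian shifted by $\tau\, \Pi\,\mathrm{vec}(uv^T)$. The Gaussian-mixture chi-squared formula gives
\[
  \chi^2(\mathcal{L}_1 \,\|\, \mathcal{L}_2) = \E_{(u,v),(u',v')}\!\left[\exp\!\left(\tau^2\, \big\langle \Pi\,\mathrm{vec}(uv^T),\ \Pi\,\mathrm{vec}(u'v'^T)\big\rangle\right)\right] - 1.
\]
Since $\E[\mathrm{vec}(uv^T)] = 0$, the linear term vanishes. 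Using $\E[\mathrm{vec}(uv^T)\,\mathrm{vec}(uv^T)^T] = (I/d)\otimes(I/d) = I/d^2$ and independence, one checks
\[
  \E\!\left[\big\langle \Pi\,\mathrm{vec}(uv^T),\ \Pi\,\mathrm{vec}(u'v'^T)\big\rangle^2\right] = \frac{\Tr(\Pi^2)}{d^4} = \frac{k}{d^4},
\]
so the leading chi-squared term is $\frac{\tau^4}{2} \cdot \frac{k}{d^4} \asymp \frac{\alpha^2 k}{d^2}$, which is $o(1)$ exactly when $k = o(d^2/\alpha^2)$. Higher-order terms are handled by Hanson--Wright-style concentration of the bilinear form $u^T B v$ against a fixed matrix. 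Pinsker (or $\mathrm{TV} \le \sqrt{\chi^2/2}$) then yields (4).

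The main obstacle is the chi-squared computation in step~(4): one must control not only the second moment but the full moment generating function of $\big\langle \Pi\,\mathrm{vec}(uv^T), \Pi\,\mathrm{vec}(u'v'^T)\big\rangle$ uniformly over rank-$k$ projections, with an exponential factor $\tau^2 \asymp \alpha d$ that scales with the signal strength. Bilinear-form concentration is strong enough, but one has to truncate the atypical event where the inner product is unusually large in order to prevent the exponential from blowing up and to keep $\chi^2$ bounded by a small constant throughout the regime $k = o(d^2/\alpha^2)$.
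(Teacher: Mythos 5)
First, note that the paper does not prove this statement at all: it is quoted as a black-box result of Li and Woodruff~\cite{LiWoodruff}, so there is no in-paper proof to match. Your proposal is essentially a reconstruction of the cited argument, and its skeleton is the right one: the spiked ensemble $W$ versus $W+\tau uv^T$ with independent $u,v$ and $\tau\asymp\sqrt{\alpha d}$, the reduction of an arbitrary linear sketch to a rank-$k$ orthogonal projection $\Pi$ of $\R^{d^2}$ (valid after discarding dependent rows and whitening, which preserves total variation), and the Ingster-type identity $\chi^2(\mathcal{L}_1\|\mathcal{L}_2)=\E\bigl[\exp\bigl(\tau^2\,a^T\Pi a'\bigr)\bigr]-1$ with $a=\mathrm{vec}(uv^T)$. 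Parts (1)--(3) are fine modulo routine concentration (Weyl for $\sigma_2$, $\|W\|\le (2+o(1))\sqrt d$ via Lemma~\ref{lem:rudelsonvershynin}, and $u^TWv\sim N(0,1)$), and your second-moment computation $\E[(a^T\Pi a')^2]=k/d^4$ correctly reproduces the claimed threshold $k=o(d^2/\alpha^2)$.

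The genuine gap is exactly the step you defer to ``Hanson--Wright-style concentration plus truncation'': as written it does not go through, and it is where the entire content of the cited proof lies. The exponent carries the factor $\tau^2=\Theta(\alpha d)$, while $a^T\Pi a'$ can be as large as $1$; for an adversarial $\Pi$ (e.g.\ whose range is all matrices supported on a $\sqrt k\times\sqrt k$ block) the event $\{a^T\Pi a'=\Theta(1)\}$ has probability only about $e^{-\Theta(d)}$, so its contribution to $\E[e^{\tau^2 a^T\Pi a'}]$ is of order $e^{\Theta(\alpha d)-\Theta(d)}$, which a constant-level truncation cannot dismiss once $\alpha$ exceeds a fixed constant. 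Moreover, truncating to the typical event $\|\Pi a\|^2\lesssim (k/d^2)\log(\cdot)$ and then using Cauchy--Schwarz on $a^T\Pi a'$ is far too lossy: it only bounds the exponent by $\alpha k\,\mathrm{polylog}/d$, which is unbounded for, say, constant $\alpha$ and $k=d^{1.9}$, even though this $k$ is well inside the regime you must handle. What is actually needed is a two-stage argument: integrate over $(u',v')$ first, using the MGF of the bilinear form $u'^TMv'$ against the fixed matrix $M$ with $\mathrm{vec}(M)=\Pi a$ (which behaves like $\exp\bigl(O(\tau^4\|M\|_F^2/d^2)\bigr)=\exp\bigl(O(\alpha^2\|\Pi a\|^2)\bigr)$, but only on the event $\tau^2\|M\|_{op}/d<1$, outside of which the MGF diverges and the prior must be conditioned), and then control $\E_a\bigl[\exp\bigl(O(\alpha^2\|\Pi a\|^2)\bigr)\bigr]$ uniformly over rank-$k$ projections, again with a truncation whose exceptional probability beats $e^{O(\alpha^2)}$. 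Carrying this out (and noting that it is where a restriction of the form $\alpha=o(\sqrt d)$, matching how the paper applies the lemma, naturally enters) is the substantive work your sketch leaves open; until it is done, part (4) is not proved.
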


This implies a space lower bound for PCA using linear sketching. We present the Johnson-Lindenstrauss lemma first.

\begin{lemma}[Johnson-Lindenstrauss Lemma~\cite{JL}]
\label{lem:JL}
For any positive integer $d$ and $\varepsilon, \delta \in (0,1)$, there exists a distribution $\mathcal{S}$ over $\R^{m \times d}$ where $m = \Theta\left(\varepsilon^{-2} \log \frac{1}{\delta}\right)$ such that for every $x \in \mathbb{R}^d$,
\[
\Pr_{A \sim \mathcal{S}} \left[ \left| \|Ax\|_2^2 - \|x\|_2^2 \right| \leq \varepsilon \|x\|_2^2 \right] \geq 1 - \delta.
\]
\end{lemma}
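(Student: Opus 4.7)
The plan is to instantiate $\mathcal{S}$ as the distribution of $m \times d$ matrices whose entries are iid $\mathcal{N}(0, 1/m)$, and then reduce the conclusion to a chi-squared tail bound handled by Lemma~\ref{lem:chi_squared_concentration}. First I would observe that both $\|Ax\|_2^2$ and $\|x\|_2^2$ are homogeneous of degree two in $x$, so it suffices to prove the statement for a fixed unit vector $x$.

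Next I would identify the distribution of $m\|Ax\|_2^2$. Each coordinate $(Ax)_i = \sum_{j=1}^d A_{ij} x_j$ is a linear combination of independent Gaussians, and since $\|x\|_2 = 1$ it has distribution $\mathcal{N}(0, 1/m)$; the $m$ coordinates are mutually independent because the rows of $A$ are. Hence $m\|Ax\|_2^2 = \sum_{i=1}^m (\sqrt{m}\,(Ax)_i)^2$ is a sum of $m$ independent squared standard Gaussians, so $m \|Ax\|_2^2 \sim \chi^2_m$.

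Now I would apply Lemma~\ref{lem:chi_squared_concentration} to both tails with $t = \log(2/\delta)$, which yields
\[
\Pr\bigl[\,\bigl|m\|Ax\|_2^2 - m\bigr| \geq 2\sqrt{mt} + 2t\,\bigr] \leq 2e^{-t} = \delta.
\]
Dividing by $m$ and choosing $m = C\varepsilon^{-2}\log(1/\delta)$ for a sufficiently large universal constant $C$, both error terms $2\sqrt{t/m}$ and $2t/m$ become bounded by $\varepsilon/2$, so $\bigl|\|Ax\|_2^2 - 1\bigr| \leq \varepsilon$ with probability at least $1-\delta$, which rescales back to the stated conclusion.

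There is essentially no obstacle here: this is the classical Gaussian-JL argument, and the only bookkeeping is verifying that the chosen $m$ simultaneously absorbs the $\sqrt{t/m}$ and $t/m$ error terms into $\varepsilon$ (the former being the binding constraint as $\varepsilon \to 0$). A more delicate alternative would be to take $\mathcal{S}$ to be a Rademacher or sparse Bernoulli construction, which keeps the sketch discrete but requires a sub-gamma or Hanson--Wright-style inequality in place of the clean chi-squared bound; this is not needed for the statement as written.
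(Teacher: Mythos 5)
Your argument is correct: the Gaussian construction with iid $\mathcal{N}(0,1/m)$ entries, the reduction to a unit vector by homogeneity, the observation that $m\|Ax\|_2^2 \sim \chi^2_m$, and the application of the Laurent--Massart bound (Lemma~\ref{lem:chi_squared_concentration}) with $t = \log(2/\delta)$ and $m = C\varepsilon^{-2}\log(2/\delta)$ all go through, and the constant bookkeeping ($2\sqrt{t/m} \le \varepsilon/2$ and $2t/m \le \varepsilon/2$ for $C$ large) is fine. Note, however, that the paper does not prove this lemma at all: it is quoted as the classical Johnson--Lindenstrauss lemma~\cite{JL} and used as a black box in the linear-sketching lower bound of Appendix~\ref{app:lw}, so there is no in-paper proof to compare against. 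Your proof is the standard one, and it has the pleasant feature of reusing a tail bound the paper already states; the only cosmetic slack is that writing $m = C\varepsilon^{-2}\log(2/\delta)$ matches the stated $m = \Theta(\varepsilon^{-2}\log\frac{1}{\delta})$ up to constants in the usual regime of small $\delta$, which is all the paper ever needs ($\delta = 0.01$).
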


Using these lemmas, we can prove the following lower bound, which implies any sketching algorithm for adversarial streaming PCA needs at least $\Omega(d^2 / R^2)$ bits of space.

\begin{theorem}
For all linear sketching algorithms, $0.1$-approximate PCA on streams with spectral gap $R = o(\sqrt{d})$ requires sketches of dimension $\Omega(d^2 / R^2)$,
\end{theorem}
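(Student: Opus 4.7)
The plan is to reduce the distinguishing problem of Lemma~\ref{lem:LiWoodruff} to linear-sketching PCA, and then invoke the dimension lower bound it provides. I will instantiate the lemma with $\alpha := \max(R, C_0)$ for a sufficiently large constant $C_0$: this ensures both that the hypothesis $\alpha > 1.01$ holds and that there is margin for a threshold test, while keeping $\Omega(d^2/\alpha^2) = \Omega(d^2/R^2)$ (since $R \ge 1$, the bound $\Omega(d^2/\alpha^2)$ for the clamped $\alpha$ still implies $\Omega(d^2/R^2)$).

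Given any hypothetical linear-sketching algorithm $\mathcal{A}$ with sketch dimension $k$ that $0.1$-approximates PCA on streams of spectral gap $R$ with constant probability, I would build a distinguisher between $\D_1$ and $\D_2$ as follows. Augment $\mathcal{A}$'s sketch $L(X)$ with an independent Johnson--Lindenstrauss sketch $BX$, where $B \in \R^{m \times n}$ is drawn from the distribution in Lemma~\ref{lem:JL} with small constants $\eps, \delta$, so $m = O(1)$. The concatenated sketch $(L(X), BX)$ is itself linear in $X$ and has total dimension $k + md = k + O(d)$, so Lemma~\ref{lem:LiWoodruff} still applies to it. Extract $\wh{u} := \mathcal{A}(L(X))$ and form the statistic $\tau := \|BX\wh{u}\|^2$. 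Because $B$ is drawn independently of $X$ and of $\mathcal{A}$'s internal randomness, conditioning on $X$ and $L$ freezes the vector $X\wh{u} \in \R^n$; the JL guarantee then gives $\tau = (1 \pm \eps)\|X\wh{u}\|^2$ with probability at least $1 - \delta$ over $B$.

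For $X \sim \D_1$, the spectral gap is at least $\alpha \ge R$ with high probability by Lemma~\ref{lem:LiWoodruff}, so $\mathcal{A}$ outputs $\wh{u}$ with $\|P\wh{u}\|^2 \le 0.1$; decomposing $\wh{u} = a v^* + b w$ with $a^2 \ge 0.9$ gives $\|X\wh{u}\|^2 \ge a^2 \lambda_1 \ge 0.9\alpha s^2$, so $\tau \ge (1-\eps)\cdot 0.9\alpha s^2$ with constant probability. For $X \sim \D_2$, $\|X\| < s$ with high probability, hence $\|X\wh{u}\|^2 < s^2$ for any $\wh{u}$, yielding $\tau \le (1+\eps)s^2$. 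With $C_0$ large enough the two intervals separate, so thresholding $\tau$ produces a distinguisher whose output-TV distance between $\D_1$ and $\D_2$ is a positive constant, exceeding the $0.1$ cap of Lemma~\ref{lem:LiWoodruff}. Hence $k + O(d) = \Omega(d^2/\alpha^2) = \Omega(d^2/R^2)$, and since $R = o(\sqrt{d})$ gives $d^2/R^2 = \omega(d)$, the $O(d)$ term is absorbed, leaving $k = \Omega(d^2/R^2)$.

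The main obstacle is transferring the Johnson--Lindenstrauss guarantee to the data-dependent output $\wh{u}$: a naive union bound over possible $\wh{u}$ would inflate $m$ to $\Theta(d)$ and kill the argument. The fix is precisely to sample $B$ independently of $\mathcal{A}$'s sketch so that, after conditioning on the PCA side, JL applies pointwise to a now-fixed vector $X\wh{u}$. A minor technical point is tuning $\eps$, $\delta$, and $C_0$ so that the two ranges for $\tau$ separate with margin exceeding both the JL slack and the failure probabilities of $\mathcal{A}$, but this is a routine constant-level calculation.
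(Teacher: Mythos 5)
Your proposal is correct and follows essentially the same route as the paper's proof: concatenate the PCA sketch with an independent $O(1)$-row JL sketch of the data matrix, threshold $\|BX\wh{u}\|$ to distinguish $\D_1$ from $\D_2$, and contradict the TV bound of Lemma~\ref{lem:LiWoodruff}. The only differences are cosmetic --- you clamp $\alpha = \max(R, C_0)$ where the paper restricts to $R \ge 4$, and you lower-bound $\|X\wh{u}\|^2$ by $a^2\lambda_1$ via the eigendecomposition rather than the paper's triangle-inequality estimate.
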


\begin{proof}
    Let $\D_1$ and $\D_2$ be the distributions described in \cref{lem:LiWoodruff} with $\alpha = R$. Suppose there exists a linear sketching algorithm that 0.1-approximates PCA using $o(d^2 / R^2)$ space with success probability $0.99$. We will show that this leads to a contradiction by constructing a linear sketch of dimension $o(d^2 / R^2)$ that distinguishes $\D_1$ and $\D_2$ with 0.9 probability whenever $4 \le R \le o(\sqrt{d})$.

    Let $\mathcal{S}$ be the distribution in \cref{lem:JL} with parameters $\delta = \eps = 0.01$, and $\mathcal{S}$ is a distribution over $\mathbb{R}^{O(1) \times d}$. Let $s$ be the corresponding parameter for $\D_1$ and $\D_2$ in \cref{lem:LiWoodruff}. Our algorithm proceeds as follows: Given a matrix $X$, run the PCA approximation algorithm, which is a linear sketching of dimension $o(d^2 / R^2)$, to obtain an approximation $\wt{v}$. In parallel, sample $A \sim \mathcal{S}$ and compute $AX$, which is a matrix of dimension $O(1) \times d$; that is, it is a linear sketch with $O(d) = o(d^2/R^2)$ dimensions. Suppose $\|AX\wt{v}\|_2 > 1.1s$, output that $X$ is from $\mathcal{D}_1$; otherwise, output that $X$ is from $\mathcal{D}_2$.

      We first show that for $X \sim \mathcal{D}_1$, $ \|AX\wt{v}\|_2 > 1.1 s$ with 0.9 probability. Let $v^*$ be the true principal component of $X$ in the direction that $\inner{\wt{v}, v^*} \ge 0$. By a union bound, we have that with probability at least $0.96$, the following events happen simultaneously:
      \begin{enumerate}
          \item $X$ has a spectral gap of $R$.
          \item The PCA approximation $\wt{v}$ satisfies $\sin^2(\widetilde{v}, v^*) \leq 0.1$.
          \item $\|AX\widetilde{v}\|_2 \geq 0.99 \|X\wt{v}\|_2$.
          \item $\|X\|_2 > \sqrt{R}s$.
      \end{enumerate}
      When all of these hold, we can prove that $\|AX\wt{v}\|_2 > 0.6\sqrt{R}s$. We have \[
      \|AX\widetilde{v}\|_2 \ge 0.99 \|X\wt{v}\|_2 \ge 0.99 (\|Xv^*\|_2 - \|X(v^* - \wt{v})\|_2) \geq 0.99 (\|X\|_2 - \|v^* - \wt{v}\|_2 \|X\|_2 ).
      \]
      Since $\wt{v}$ and $v^*$ are unit vectors, $\sin^2(\widetilde{v}, v^*) \leq 0.1$ implies that $\|v^* - \wt{v}\|_2 = \sqrt{2 - 2 \cos(\widetilde{v}, v^*)} \le 0.35$. Thus, \[
        \|AX\widetilde{v}\|_2 \ge 0.6 \|X\|_2 > 0.6\sqrt{R}s.
      \]
      Therefore, $\|AX\widetilde{v}\|_2 > 1.1 s$ with probability at least $0.9$ whenever $R \geq 4$.

    Next, we show that for $X \sim \mathcal{D}_2$, $\|AX\wt{v}\|_2 \le 1.1 s$ with 0.9 probability. Again, by a union bound, we have with at least 0.9 probability, \[
        \|AX\wt{v}\|_2 \le 1.01 \|X\wt{v}\|_2 \le 1.01 \|X\|_2 \le 1.1s.
    \]
    
    This proves that our sketching algorithm distinguishes $\D_1$ and $\D_2$ with probability at least 0.9, contradicting \cref{lem:LiWoodruff}, and therefore proves the theorem. 
\end{proof}
\end{document}